\newcommand*{\Otilde}{\widetilde{O}}
\newcommand*{\Offset}{\mathrm{Offset}}
\newcommand*{\last}{\mathrm{pred}}       
\newcommand*{\Stub}{\mathrm{Stub}}
\newcommand*{\Rep}{\mathcal{R}}
\newcommand*{\Last}{\mathrm{Last}}      
\newcommand*{\nwspace}{\hspace*{.1em}} 
\crefname{case}{case}{cases}
\crefname{ineq}{inequality}{inequalities}
\DeclareMathOperator*{\argmax}{arg\,max}
\let\oldsqrt\sqrt
\def\hksqrt{\mathpalette\DHLhksqrt}
\def\DHLhksqrt#1#2{\setbox0=\hbox{$#1\oldsqrt{#2\,}$}\dimen0=\ht0
   \advance\dimen0-0.2\ht0
   \setbox2=\hbox{\vrule height\ht0 depth -\dimen0}%
   {\box0\lower0.4pt\box2}}
\renewcommand\sqrt\hksqrt
\renewcommand{\leq}{\leqslant}
\renewcommand{\geq}{\geqslant}
\renewcommand{\le}{\leqslant}
\renewcommand{\ge}{\geqslant}
\renewcommand{\epsilon}{\varepsilon}
\providecommand{\ignore}[1]{}
\title{Near-Optimal Deterministic Single-Source Distance Sensitivity Oracles}
\titlerunning{Near-Optimal Deterministic Single-Source DSO}
\author{Davide Bilò}%
{Department of Humanities and Social Sciences, University of Sassari, Italy}%
{davidebilo@uniss.it}%
{0000-0003-3169-4300}%
{This work was partially supported by the Research Grant FBS2016\_BILO, funded by ``Fondazione di Sardegna'' in 2016.}
\author{Sarel Cohen}%
{Hasso Plattner Institute, University of Potsdam, Germany}%
{sarel.cohen@hpi.de}%
{} 
{} 
\author{Tobias Friedrich}%
{Hasso Plattner Institute, University of Potsdam, Germany}%
{tobias.friedrich@hpi.de}%
{0000-0003-0076-6308} 
{} 
\author{Martin Schirneck}%
{Hasso Plattner Institute, University of Potsdam, Germany}%
{martin.schirneck@hpi.de}%
{} 
{} 
\authorrunning{D.~Bilò, S.~Cohen, T.~Friedrich, and M.~Schirneck}
\keywords{derandomization, distance sensitivity oracle, single-source replacement paths, 
	space lower bound}
\begin{document}

\maketitle

\begin{abstract}
Given a graph with a distinguished source vertex $s$,
the Single Source Replacement Paths (SSRP) problem is to compute and output, 
for any target vertex $t$ and edge $e$,
the length $d(s,t,e)$ of a shortest path from $s$ to $t$ that avoids a failing edge $e$.
A Single-Source Distance Sensitivity Oracle (Single-Source DSO)
is a compact data structure that answers queries of the form $(t,e)$
by returning the distance $d(s,t,e)$.
We show how to compress the output of the SSRP problem 
on $n$-vertex, $m$-edge graphs with integer edge weights in the range $[1,M]$
into a deterministic Single-Source DSO that has size $O(M^{1/2} n^{3/2})$ and query time $\Otilde(1)$.
We prove that the space requirement is optimal (up to the word size).
Our techniques can also handle vertex failures within the same bounds.

Chechik and Cohen [SODA 2019] presented a combinatorial, randomized $\Otilde(m\sqrt{n}+n^2)$
time SSRP algorithm for undirected and unweighted graphs.
We derandomize their algorithm with the same asymptotic running time and apply our compression to obtain a deterministic Single-Source DSO with $\Otilde(m\sqrt{n}+n^2)$ preprocessing time, $O(n^{3/2})$ space, and $\Otilde(1)$ query time. Our combinatorial Single-Source DSO has near-optimal
space, preprocessing and query time for unweighted graphs,
improving the preprocessing time by a $\sqrt{n}$-factor 
compared to previous results with $o(n^2)$ space.

Grandoni and Vassilevska Williams [FOCS 2012, TALG 2020] gave an algebraic, randomized $\Otilde(Mn^\omega)$ time SSRP algorithm for (undirected and directed) graphs with integer edge weights in the range $[1,M]$, 
where $\omega < 2.373$ is the matrix multiplication exponent.
We derandomize it for undirected graphs and apply our compression to obtain an algebraic Single-Source DSO with $\Otilde(Mn^\omega)$ preprocessing time, $O(M^{1/2} \nwspace n^{3/2})$ space,
and $\Otilde(1)$ query time.
This improves the preprocessing time of algebraic Single-Source DSOs
by polynomial factors compared to previous $o(n^2)$-space oracles.

We also present further improvements of our Single-Source DSOs. 
We show that the query time  can be reduced to a constant 
at the cost of increasing the size of the oracle to $O(M^{1/3} \nwspace n^{5/3})$
and that all our oracles can be made path-reporting.
On sparse graphs with $m=O\big(\frac{n^{5/4-\varepsilon}}{M^{7/4}}\big)$ edges,
for any constant $\varepsilon > 0$, we reduce the preprocessing to randomized $\Otilde(M^{7/8} \nwspace m^{1/2} \nwspace n^{11/8}) = O(n^{2-\varepsilon/2})$ time.
To the best of our knowledge, this is the first truly subquadratic time algorithm for building Single-Source DSOs on sparse graphs. 
\end{abstract}

\ignore{%
Given a graph with a distinguished source vertex $s$, the Single Source Replacement Paths (SSRP) problem is to compute and output, for any target vertex $t$ and edge $e$, the length $d(s,t,e)$ of a shortest path from $s$ to $t$ that avoids a failing edge $e$.
A Single-Source Distance Sensitivity Oracle (Single-Source DSO) is a compact data structure that answers queries of the form $(t,e)$ by returning the distance $d(s,t,e)$.
We show how to compress the output of the SSRP problem  on $n$-vertex, $m$-edge graphs with integer edge weights in the range $[1,M]$
into a deterministic Single-Source DSO that has size $O(M^{1/2} n^{3/2})$ and query time $\widetilde{O}(1)$.
We prove that the space requirement is optimal (up to the word size).
Our techniques can also handle vertex failures within the same bounds.

Chechik and Cohen [SODA 2019] presented a combinatorial, randomized $\widetilde{O}(m\sqrt{n}+n^2)$ time SSRP algorithm for undirected and unweighted graphs.
We derandomize their algorithm with the same asymptotic running time and apply our compression to obtain a deterministic Single-Source DSO with $\widetilde{O}(m\sqrt{n}+n^2)$ preprocessing time, $O(n^{3/2})$ space, and $\widetilde{O}(1)$ query time. 
Our combinatorial Single-Source DSO has near-optimal space, preprocessing and query time for unweighted graphs, improving the preprocessing time by a $\sqrt{n}$-factor compared to previous results with $o(n^2)$ space.

Grandoni and Vassilevska Williams [FOCS 2012, TALG 2020] gave an algebraic, randomized $\widetilde{O}(Mn^\omega)$ time SSRP algorithm for (undirected and directed) graphs with integer edge weights in the range $[1,M]$, where $\omega < 2.373$ is the matrix multiplication exponent.
We derandomize it for undirected graphs and apply our compression to obtain an algebraic Single-Source DSO with $\widetilde{O}(Mn^\omega)$ preprocessing time, $O(M^{1/2} \nwspace n^{3/2})$ space, and $\widetilde{O}(1)$ query time.
This improves the preprocessing time of algebraic Single-Source DSOs by polynomial factors compared to previous $o(n^2)$-space oracles.

We also present further improvements of our Single-Source DSOs. 
We show that the query time  can be reduced to a constant at the cost of increasing the size of the oracle to $O(M^{1/3} n^{5/3})$ and that all our oracles can be made path-reporting.
On sparse graphs with $m=O(\frac{n^{5/4-\varepsilon}}{M^{7/4}})$ edges, for any constant $\varepsilon > 0$, we reduce the preprocessing to randomized $\widetilde{O}(M^{7/8} m^{1/2} n^{11/8}) = O(n^{2-\varepsilon/2})$ time.
To the best of our knowledge, this is the first truly subquadratic time algorithm for building Single-Source DSOs on sparse graphs.
}


\section{Introduction}
\label{sec:intro}

One of the basic problems in computer science is the computation of shortest paths and distances in graphs that are subject to a small number of transient failures.
We study two central 
problems of this research area on undirected graphs $G$ with $n$ vertices and $m$ edges, namely, the \emph{Single-Source Replacement Paths} (SSRP) problem and \emph{Single-Source Distance Sensitivity Oracles} (Single-Source DSOs).

\textbf{\textsf{The SSRP Problem.}}
In the SSRP problem, we are given a graph $G$ with a fixed source vertex $s$
and are asked to compute, for every vertex $t$ and edge $e$,
the replacement distance $d(s,t,e)$,
which is the length of the shortest $s$-$t$-path in the graph $G\,{-}\,e$, 
obtained by dropping the edge $e$.
By first computing any shortest path tree for $G$ rooted at $s$,
one can see that there are only $O(n^2)$ relevant distances $d(s,t,e)$,
namely, those for which $e$ is in the tree.

Chechik and Cohen~\cite{ChechikCohen19SSRP_SODA} presented an $\Otilde(m\sqrt{n} + n^2)$ time\footnote{%
	For a non-negative function $f = f(n)$,
	we use $\Otilde(f)$ to denote $O(f \cdot \textsf{polylog}(n))$.
} 
combinatorial\footnote{%
    The term ``combinatorial algorithm'' is not well-defined, and is often interpreted as not using any matrix multiplication.
    Arguably, combinatorial algorithms can be considered 
    efficient in practice as the constants hidden in the matrix
    multiplication bounds are rather high.
} 
SSRP algorithm for unweighted graphs.\
They also showed that the running time cannot be improved by polynomial factors,
assuming that any combinatorial algorithm for
Boolean Matrix Multiplication (BMM)
on $n \times n$ matrices containing $m$ 1's requires $mn^{1-o(1)}$ time. 
Gupta~et~al.~\cite{GJM20} simplified the SSRP algorithm and generalized it to multiple sources.
For a set of $\sigma$ sources, they presented a combinatorial algorithm that takes
$\Otilde(m\sqrt{n\sigma} + \sigma n^2)$ time.
Grandoni and Vassilevska Williams~\cite{GW12,GrandoniVWilliamsFasterRPandDSO_journal} gave an algorithm for both directed and undirected graphs with integer edge weights in the range $[1,M]$ that uses fast matrix multiplications and
runs in $\Otilde(Mn^\omega)$ time, where $\omega<2.37286$ is the matrix multiplication exponent~\cite{AlmanVWilliams21RefinedLaserMethod,LeGall14,williams2012multiplying}.
We are only concerned with \emph{positive} integer weights,
but it is worth noting that SSRP with weights in $[-M,M]$
is strictly harder, modulo a breakthrough in Min-Plus Product computation,
with a current best running time of $O(M^{0.8043} \nwspace n^{2.4957})$
as shown by Gu et al.~\cite{GuPolakVWilliamsXu21MonotoneMinPlus}.

All the SSRP algorithms above are randomized, it is an interesting open problem 
whether they can be derandomized in the same asymptotic running time.

\textbf{\textsf{Single-Source DSOs.}}
A Distance Sensitivity Oracle (DSO)
is a data structure that answers queries $(u,v,e)$, for vertices $u, v$ and edge $e$,
by returning the replacement distance $d(u,v,e)$, 
A Single-Source DSO, with fixed source $s$, answers queries $(t,e)$ with $d(s,t,e)$.

Of course, any SSRP algorithm gives a Single-Source DSO 
by just tabulating the whole output in $O(n^2)$ space,
the replacement distances can then be queried in constant time.
However, the space usage is far from optimal.
Parter and Peleg~\cite{PaPeP16} developed a deterministic algorithm that computes an $O(n^{3/2})$ size subgraph of $G$ containing a breadth-first-search tree of $G\,{-}\,e$ for every failing edge $e$. The subgraph can also be thought of as a Single-Source DSO with $O(n^{3/2})$ space and query time.
Bilò~et~al.~\cite{BCGLPP18} presented a Single-Source DSO of the same size with $\Otilde(\sqrt{n})$ query time and $\Otilde(mn)$ preprocessing time. 
Gupta and Singh~\cite{GuptaSingh18FaultTolerantExactDistanceOracle} later designed a randomized Single-Source DSO of $\Otilde(n^{3/2})$ size, $\Otilde(mn)$ preprocessing time,\footnote{
	The authors of~\cite{GuptaSingh18FaultTolerantExactDistanceOracle} do not report the
	preprocessing time, but it can be reconstructed as $\Otilde(mn)$.
}
but with a better $\Otilde(1)$ query time. The results in the latter two works generalize to the case of $\sigma$ sources in such a way that the time and size scale by $o(\sigma)$ factors.  

For the case of $\sigma=n$ sources, that is, general (all-pairs) DSOs, 
Bernstein and Karger~\cite{BeKa08,BeKa09} designed an oracle taking
$\Otilde(n^2)$ space with constant query time, even for directed graphs with real edge weights.
The space was subsequently improved to $O(n^2)$ by Duan and Zhang~\cite{DuanZhang17ImprovedDSOs},
which is optimal~\cite{ThorupZ05}.
The combinatorial $\Otilde(mn)$ time preprocessing for building the DSOs is conditionally
near-optimal as it matches the best known bound (up to polylogarithmic factors) 
for the simpler problem of finding the All-Pairs Shortest Paths (APSP).
The conditional lower bound in~\cite{ChechikCohen19SSRP_SODA}, stating that there exists no combinatorial algorithm solving the undirected SSRP problem with real edge weights in $O(mn^{1-\varepsilon})$ time for any positive $\varepsilon > 0$, unless there is a combinatorial algorithm for the APSP problem in
$O(mn^{1-\varepsilon})$ time, also implies that there exists no \emph{Single-Source} DSO with $\Otilde(1)$ query time and $O(mn^{1-\varepsilon})$ preprocessing time for real edge weights. Therefore, the DSOs in~\cite{BeKa09,DuanZhang17ImprovedDSOs},
are also conditionally near-optimal for the single source case with real edge weights. 

Several algebraic all-pairs DSOs with subcubic preprocessing time
have been developed in the last decade
for graphs with integer edge weights in $[1,M]$~\cite{BrSa19,ChCo20,GW12,Ren20,WY13}.
Very recently, Gu and Ren~\cite{GuRen21ConstructingDSO_ICALP} presented a randomized DSO 
achieving a $O(M n^{2.5794})$ preprocessing time with $O(1)$ query time,
improving upon the one by Ren~\cite{Ren20,Ren20_arxiv}
with an $\Otilde(M n^{2.6865})$ preprocessing time.
Those DSOs can also be used in the single-source case,
but the requirement to store the information for all pairs 
forces them to take $\Omega(n^2)$ space~\cite{ThorupZ05}.
The algebraic SSRP algorithm in~\cite{GrandoniVWilliamsFasterRPandDSO_journal},
seen as a data structure, has a better preprocessing time than any known (general) DSO
but also takes $O(n^2)$ space, which we have seen to be wasteful.

We are not aware of an algebraic Single-Source DSO that simultaneously
achieves $o(n^2)$ space and has a better preprocessing time than their all-pairs counterparts.
It is interesting whether we can construct space-efficient oracles
faster when focusing on a single source. 

Additional results on replacement paths and DSOs (for single or multiple failures and directed graphs) can be found in \cite{AlonChechikCohen19CombinatorialRP, BrSa19, CLPR10,ChMa19, DeThChRa08, DP09, GrandoniVWilliamsFasterRPandDSO_journal, HeSu01, HeSu01_erratum, MaMiGu89, NaPrWi01,Nardelli2003, RodittyZwick12kSimpleShortestPaths, WilliamsW10_journal}.
The most efficient Single-Source DSOs in their respective settings are shown in \autoref{tab:comparison} below.

\subsection{Our Contribution}

We research SSRP algorithms, Single-Source DSO data structures,
and the connection between the two.
Our first contribution is presented in \autoref{sec:derandomize}.
We derandomize the near-optimal combinatorial SSRP algorithm of
Chechik and Cohen~\cite{ChechikCohen19SSRP_SODA} for undirected, unweighted graphs
and the algebraic algorithm of 
Grandoni and Vassilevska Williams~\cite{GW12,GrandoniVWilliamsFasterRPandDSO_journal}
for undirected graphs with integer weights in the range $[1,M]$.
Both deterministic algorithms have the same asymptotic runtime as their randomized counterparts.

\begin{theorem}
\label{thm:ssrp-deterministic}
	There is a deterministic, combinatorial SSRP algorithm for undirected, unweighted graphs running in
	time $\Otilde(m\sqrt{n}+n^2)$
	and a deterministic, algebraic SSRP algorithm for undirected graphs
	with integer weights in the range $[1,M]$ running in $\Otilde(Mn^\omega)$. 
\end{theorem}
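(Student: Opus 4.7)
The randomized steps in both SSRP algorithms follow the same template: one samples a set $B$ of $\Otilde(n/L)$ pivot vertices uniformly at random so that, with high probability, every relevant shortest path on at least $L$ vertices meets $B$. My plan is to replace this sampling by a deterministic hitting-set construction and to observe that the remaining machinery of each algorithm is already deterministic (or can be made so with off-the-shelf tools), so that correctness follows directly from the arguments in~\cite{ChechikCohen19SSRP_SODA,GW12,GrandoniVWilliamsFasterRPandDSO_journal}.

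For the combinatorial algorithm of Chechik and Cohen, I would first run BFS from $s$ to obtain a shortest-path tree $T$ and, using consistent tie-breaking, extract for each vertex $v$ its canonical $s$-to-$v$ path; this yields a family $\mathcal{F}$ of $O(n)$ candidate paths. Setting $L=\sqrt n$ and applying the classical greedy algorithm of Lov\'asz produces in $\Otilde(n^2)$ time a hitting set $B$ of size $\Otilde(\sqrt n)$ that meets every path of $\mathcal{F}$ with at least $L$ vertices. Since the only probabilistic property invoked by~\cite{ChechikCohen19SSRP_SODA} is precisely this hitting property, the subsequent BFSes from pivots and processing of short paths carry through verbatim, preserving the $\Otilde(m\sqrt n+n^2)$ bound.

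For the algebraic algorithm of Grandoni and Vassilevska Williams, the same hitting-set idea applies to families derived scale by scale from the canonical shortest-path tree in the reweighted graph, at a total cost absorbed by $\Otilde(Mn^\omega)$. Beyond pivot sampling, the algorithm also relies on randomized witness recovery from distance products, which can be replaced by the deterministic Boolean-product-witness procedure of Alon, Galil, and Naor together with Zwick's distance-product-witness reduction; each runs in $\Otilde(n^\omega)$ per scale with only polylogarithmic overhead. The main obstacle is ensuring that these deterministic replacements fit within the tight runtime budget across all $O(\log(nM))$ dyadic weight scales without incurring a super-polylogarithmic blowup; the key observations are that $|\mathcal{F}|$ stays near-linear thanks to canonical tie-breaking, and that the deterministic witness algorithm matches the underlying $\Otilde(n^\omega)$ matrix-multiplication cost per scale, so the total preprocessing remains $\Otilde(Mn^\omega)$.
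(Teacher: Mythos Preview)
Your proposal has a real gap in the combinatorial part. Hitting only the $O(n)$ canonical $s$-to-$v$ paths is not enough for Chechik--Cohen. Their algorithm needs a pivot $y$ \emph{after} the divergence point $q$ on the replacement path $P(s,t,e)$, where $P(s,t,e)[q..t]$ is a shortest path in $G$ starting from an \emph{unknown} vertex $q$, not from $s$. A set that hits every long $s$-rooted path says nothing about hitting the subpath $P(q,t)$. The paper handles this with a hierarchical, three-stage construction: $B_1$ hits the last $\sqrt n/2$ edges of paths in $T_s$; then BFS trees $T_x$ are built from every $x\in B_1$, and $B_2$ greedily hits the last $\sqrt n/2$ edges of all $\Otilde(n^{3/2})$ paths in those trees; a third stage $B_3$ hits paths in the auxiliary graphs $G_e$. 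Each stage depends on the pivots chosen in the previous one, and the correctness proof (the four-case analysis into replaceable/small-fall/single-pivot/double-pivot queries) genuinely needs all three. Your single family $\mathcal F$ corresponds only to~$B_1$.

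For the algebraic part, your sketch does not match the structure of the Grandoni--Vassilevska Williams algorithm: there are no ``dyadic weight scales'' driving the randomness. The randomness sits in the recursive compression step of the subtree subproblems, where nodes outside $T'$ are replaced by a random sample $B_{\text{rand}}$. The paper derandomizes this by deterministically computing APSP in $G-E(T')$ via Shoshan--Zwick, extracting $O(n^2)$ tree paths $\Last_{T'_v,H}(u)$, and greedily hitting those; the subpath subproblems are handled with the deterministic replacement-paths algorithm of Chechik--Nechushtan and deterministic APSP, not with Boolean-witness machinery. Your plan to swap in deterministic witness recovery addresses a different source of randomness than the one actually present.
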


We present in \autoref{sec:reduction_algorithm_to_DS} a deterministic reduction
from the problem of building a Single-Source DSO to SSRP 
on undirected graphs with small integer edge weights.

\begin{theorem}
\label{thm:reduction}
	Let $G$ be an undirected graph with integer edge weights in the range $[1,M]$
	and let $s$ be the source vertex.
	Suppose we are given access to a shortest path tree $T_s$ of $G$
	rooted in $s$ and all values $d(s,t,e)$ for vertices $t$ of $G$ and edges $e$ in $T_s$.
	There is a deterministic, combinatorial algorithm that in time $O(n^2)$
	builds a Single-Source DSO of size $O(M^{1/2} \nwspace n^{3/2})$ with $\Otilde(1)$ query time.
	The same statement holds for vertex failures if instead we are given access to 
	the values $d(s,t,v)$ for all vertices $t$ and $v$ of $G$ 
\end{theorem}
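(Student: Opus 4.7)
The plan is to compress the given table of replacement distances into a data structure of size $O(M^{1/2} \nwspace n^{3/2})$ by exploiting the structure of the shortest-path tree $T_s$ together with a per-target segment decomposition, and then to answer queries by a constant number of look-ups on top of $T_s$.

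\textbf{Trivial failures.} For a query $(t,e)$, the replacement distance equals $d(s,t)$ unless $e \in T_s$ and $e$ lies on the $s$-to-$t$ tree path $P_t$. I would preprocess $T_s$ with $O(1)$-time LCA and level-ancestor support, and tabulate $d(s,t)$ for every $t$, all using $O(n)$ extra space. So the oracle only needs to encode the non-trivial values $d(s,t,e)$ for $e \in P_t$.

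\textbf{Per-target compression.} For each target $t$ the goal is to encode the sequence $\bigl(d(s,t,e)\bigr)_{e \in P_t}$ by a list $\Rep(t)$ of $O(\sqrt{Mn})$ records. Scanning $P_t$ from $s$ towards $t$ and consulting the given distance table, I would greedily partition $P_t$ into contiguous segments within which the optimal replacement path is "structurally stable": every failing edge in a segment admits an optimum of the form $d(s,x) + w(x,y) + d(y,t)$ for a common non-tree detour edge $(x,y)$, with the same $\Prefix$ of $P_t$ leading to the detour and the same suffix reaching $t$, so that the only datum depending on the failing edge inside the segment is a position-dependent $\Offset$. Per segment I would store this constant-size $\Stub$-record together with pointers into $T_s$, so the replacement distance of any interior failing edge can be reconstructed in $\Otilde(1)$ time via a few ancestor queries on $T_s$ and the pretabulated tree distances.

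\textbf{Size bound (main obstacle).} To bound the overall storage I would chunk each $P_t$ by weighted length: place a new breakpoint whenever the cumulative tree-weight since the last breakpoint exceeds $\sqrt{Mn}$. Since $d(s,t) \le nM$, each path yields at most $O(\sqrt{Mn})$ chunks, summing to the desired $O(M^{1/2} \nwspace n^{3/2})$ across the $n$ targets. The hard part will be the structural lemma stating that inside one chunk the optimal replacement path indeed admits a single constant-size record, that is, that the optimal detour edge changes at most $O(1)$ times per chunk after additional (amortizable) refinements. I would try to charge each additional change either to a monotone decrease in a quantity bounded by the $O(nM)$ distinct distance values permitted by the integer weights, or to a weighted progression of $\Omega(\sqrt{Mn})$ along $P_t$. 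Establishing such a lemma via a careful case analysis of how the optimal replacement detour evolves when the failing edge slides along $P_t$ in an undirected graph is the main technical obstacle; undirectedness should provide enough symmetry to close the argument.

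\textbf{Construction, query, and vertex failures.} Given the input table, the records of every $\Rep(t)$ are computed by a single linear scan of the $O(n^2)$ input entries, so preprocessing runs in $O(n^2)$ time. A query $(t,e)$ first tests $e \in P_t$ via ancestor operations on $T_s$ in $O(1)$; on a positive answer it binary-searches $\Rep(t)$ to locate the enclosing segment and evaluates the stored record in $\Otilde(1)$. For vertex failures the same plan applies verbatim: the input provides $d(s,t,v)$ for all pairs, a failing vertex $v$ on $P_t$ plays the role of a failing edge in the segment decomposition, and the same stability/counting argument gives identical size, construction, and query bounds.
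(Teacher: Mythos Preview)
Your proposal has a genuine gap at exactly the point you flag as the ``main obstacle''. Chunking $P_t$ by accumulated tree-weight in blocks of $\sqrt{Mn}$ does not, by itself, bound the number of changes of the optimal replacement path inside a chunk; there is no reason the detour should be stable on a weight-$\sqrt{Mn}$ stretch of $P_t$, and your suggested charging schemes (monotone decrease among $O(nM)$ integer values, or $\Omega(\sqrt{Mn})$ progression along $P_t$) do not yield an $O(1)$-per-chunk bound. Moreover, the sequence $e \mapsto d(s,t,e)$ along $P_t$ is not monotone in general, so even bounding the number of \emph{distinct} values would not give $O(\sqrt{Mn})$ contiguous segments usable by a binary search.

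The paper's argument is structurally different. It first picks a set $D$ of $O(\sqrt{n})$ pivots so that each $t$ has a pivot $x$ among its last $\sqrt{n}$ tree ancestors. Edges below $x$ (the ``near case'') are handled by storing their $d(s,t,e)$ explicitly; edges above $x$ for which some replacement path passes through $x$ (``far case I'') are handled by storing $d(s,x,e)$ once per pivot and returning $d(s,x,e)+d(x,t)$. Only for the remaining edges (``far case II'', where every replacement detour skips $x$) does one need a per-target segment list, and there the distances \emph{are} monotone. The bound $|\Rep| = O(\sqrt{Mn})$ is then proved not by chunking $P_t$ but by a counting argument on the detours themselves: representatives of length at most $d(s,t)+2\sqrt{Mn}$ have pairwise distinct integer lengths, hence there are at most $2\sqrt{Mn}$ of them; longer representatives have detour ``stubs'' of $\sqrt{n/M}$ vertices that must be pairwise vertex-disjoint (an intersection would yield a shorter detour), hence at most $\sqrt{Mn}$ of them. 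This pivot split plus stub-disjointness lemma is the missing idea in your plan. As a secondary issue, your record format stores an explicit non-tree detour edge $(x,y)$, but the theorem grants access only to $T_s$ and the distance table, not to $G$; the paper's data structure stores only break-point positions and replacement distances, which suffices.
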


\noindent
The algorithm does not require access to the graph $G$ itself.
As there can be up to $O(n^2)$ relevant distances $d(s,t,e)$,
the running time is linear in the input.
If the algorithm additionally has access to $G$ and is given $O(m\sqrt{Mn}+n^2)$ time,
the Single-Source DSO 
also reports the replacement paths $P(s,t,e)$ in time $\Otilde(1)$ per edge.
The query time of the oracle can be improved to $O(1)$
at the cost of increasing the size of the oracle to $O(M^{1/3} \nwspace n^{5/3})$. 

Plugging the deterministic SSRP algorithms of \autoref{thm:ssrp-deterministic} 
into our reduction of \autoref{thm:reduction},
gives the following Single-Source DSOs as corollaries. 

\begin{theorem}
\label{thm:SSDSO}
	There is a deterministic, combinatorial Single-Source DSO for undirected, unweighted graphs
	taking $O(n^{3/2})$ space, with $\Otilde(m\sqrt{n}+n^2)$ preprocessing time, 
	and $\Otilde(1)$ query time.
	There is a deterministic, algebraic Single-Source DSO for undirected graphs
	with integer weights in the range $[1,M]$ taking $O(M^{1/2} \nwspace n^{3/2})$ space,
	with $\Otilde(Mn^\omega)$ preprocessing time, and $\Otilde(1)$ query time.
\end{theorem}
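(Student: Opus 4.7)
The plan is to apply the deterministic SSRP algorithms of \autoref{thm:ssrp-deterministic} and then pipe their outputs into the compression reduction of \autoref{thm:reduction}. Since \autoref{thm:reduction} requires exactly the kind of information an SSRP algorithm produces, namely a shortest path tree $T_s$ together with the replacement distances $d(s,t,e)$ for tree-edges $e$, the two results slot together essentially without modification.

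For the combinatorial statement, I would first compute a BFS-tree $T_s$ rooted at $s$ in time $O(m+n)$; this is a valid shortest path tree since the graph is unweighted. Next I invoke the deterministic combinatorial SSRP algorithm of \autoref{thm:ssrp-deterministic} to obtain all values $d(s,t,e)$ in $\Otilde(m\sqrt{n}+n^2)$ time. Restricting the output to the $n-1$ edges of $T_s$ gives exactly the input required by \autoref{thm:reduction} with $M=1$, which in an additional $O(n^2)$ time produces a Single-Source DSO of size $O(n^{3/2})$ with $\Otilde(1)$ query time. Summing the three stages yields the claimed preprocessing time $\Otilde(m\sqrt{n}+n^2)$.

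The algebraic case proceeds identically after replacing BFS with Dijkstra's algorithm to compute $T_s$ in $\Otilde(m+n)$ time and invoking the algebraic SSRP algorithm of \autoref{thm:ssrp-deterministic} in $\Otilde(Mn^\omega)$ time. Feeding the resulting table into \autoref{thm:reduction} with parameter $M$ delivers a Single-Source DSO of size $O(M^{1/2} \nwspace n^{3/2})$ in an additional $O(n^2)$ time, and since $Mn^\omega \geq n^2$ the overall preprocessing time remains $\Otilde(Mn^\omega)$. Both ingredients are fully deterministic, so is the combined construction.

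There is no real conceptual obstacle here; the interface between the two theorems matches exactly. The only detail worth double-checking is that the cheap auxiliary steps---computing $T_s$ via BFS or Dijkstra, and running the $O(n^2)$-time reduction---do not inflate the asymptotic bounds, which is immediate by comparison with the dominant $\Otilde(m\sqrt{n}+n^2)$ and $\Otilde(Mn^\omega)$ terms, respectively.
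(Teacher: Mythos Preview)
Your proposal is correct and matches the paper's own approach: \autoref{thm:SSDSO} is stated in the paper as a direct corollary of plugging the deterministic SSRP algorithms of \autoref{thm:ssrp-deterministic} into the compression reduction of \autoref{thm:reduction}, exactly as you do. The only cosmetic difference is that the paper computes $T_s$ via Thorup's linear-time algorithm rather than BFS/Dijkstra, which changes nothing asymptotically.
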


When comparing the results with other Single-Source DSO with $o(n^2)$ space,
the preprocessing time of our combinatorial solution is better by a factor of $\sqrt{n}$
compared to previous oracles~\cite{BCGLPP18, GuptaSingh18FaultTolerantExactDistanceOracle}.
The preprocessing time of the algebraic part of \autoref{thm:SSDSO}
improves (ignoring polylogarithmic factors) by a factor of 
$n^{2.5794-\omega} > n^{0.2}$
over the current best algebraic (all-pairs) DSO~\cite{GuRen21ConstructingDSO_ICALP}.
See \autoref{tab:comparison} for more details.
In fact, we combine the efficient preprocessing of SSRP algorithms (seen as DSOs)
with a compression scheme that achieves nearly-optimal space.
To the best of our knowledge, \autoref{thm:SSDSO} presents the first algebraic Single-Source DSO
with $o(n^2)$ space
that achieves a better performance than any all-pairs DSO.
It is also the first space-efficient Single-Source DSO for graphs with small integer weights.

We further study lower bounds for Single-Source DSOs. 
Note that given an oracle whose preprocessing time is $P$ and query time is $Q$, 
one can solve the SSRP problem in time $P+n^2 \cdot Q$ by building the DSO 
and running the queries $(t,e)$ for every $t \in V, e \in E(T_s)$.
Therefore, if $n^2 \cdot Q = O(P)$, the $mn^{1/2-o(1)}+\Omega(n^2)$
conditional\footnote{
	The $\Omega(n^2)$ term is unconditional
	and stems from the size of the output, see~\cite{ChechikCohen19SSRP_SODA}.
}
time-lower bound for the SSRP problem~\cite{ChechikCohen19SSRP_SODA},
obtained by a reduction from BMM, 
implies the same lower bound for $P$.
The preprocessing of our combinatorial oracle in \autoref{thm:SSDSO}
is thus nearly optimal.
We further investigate how much a Single-Source DSO can be compressed.
In contrast to~\cite{ChechikCohen19SSRP_SODA},
we obtain an \emph{unconditional} \emph{space}-lower bound using an argument from information theory.

\begin{restatable}{theorem}{spacelowerbound}
\label{thm:space_lower_bound}
	Any Single-Source DSO must take $\Omega(\min\{M^{1/2} \nwspace n^{3/2}, \nwspace n^2\})$ 
	bits of space on at least one $O(n)$-vertex graph with integer edge weights in the range $[1,M]$.
\end{restatable}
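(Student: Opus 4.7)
The plan is to establish the space lower bound by an information-theoretic counting argument. I would construct an explicit family $\mathcal{F}$ of undirected weighted graphs, each on $\Theta(n)$ vertices with integer edge weights in $[1,M]$ and a common distinguished source $s$, such that distinct members of $\mathcal{F}$ induce pairwise distinct single-source replacement distance functions $(t,e)\mapsto d(s,t,e)$. Any Single-Source DSO must then be able to distinguish all members of $\mathcal{F}$, so on the hardest input it must use at least $\log_2|\mathcal{F}|$ bits of space.

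For the main regime $M\le n$, where $\min\{M^{1/2}n^{3/2},n^2\}=M^{1/2}n^{3/2}$, I would combine a sparse shortest-path-tree skeleton with a collection of weighted ``shortcut'' edges that carry the actual information. Concretely, $s$ is the root of $\Theta(\sqrt{n/M})$ branches, each a path of length $\Theta(\sqrt{Mn})$ with unit-weight tree edges; this accounts for $\Theta(n)$ vertices. On top of the skeleton I superimpose $\Theta(M^{1/2}n^{3/2}/\log M)$ additional shortcut edges with weights chosen from $[1,M]$, each offset slightly above the corresponding tree-path distance so that the skeleton remains a shortest-path tree. With $\log M$ bits of freedom per weight, this yields $|\mathcal{F}|=2^{\Omega(M^{1/2}n^{3/2})}$. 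For the regime $M>n$, the bound $\Omega(n^2)$ is obtained by instantiating the very same construction with $M=\Theta(n)$.

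To prove distinctness of the replacement-distance functions I would associate each shortcut $f=(u,v)$ with a canonical query $(t_f,e_f)$ such that the failure of the tree edge $e_f$ forces the shortest $s$-$t_f$-path in $G-e_f$ to traverse $f$, making the replacement distance equal to $\textrm{wt}(f)+c_f$ for a constant $c_f$ determined only by the skeleton. This decouples the weight of $f$ from the weights of the other shortcuts, so reading $d(s,t_f,e_f)$ off the oracle recovers $\textrm{wt}(f)$, and ranging over all shortcuts recovers the entire weight vector.

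The main obstacle is engineering this decoupling. In naive source--hubs--targets bipartite constructions only the minimum and second-minimum weight incident to each target are revealed by any sequence of queries, which yields at most $O(n\log M)$ bits and is therefore far too weak. Reaching $\Omega(M^{1/2}n^{3/2})$ requires the branches to be deep enough that many distinct tree-edge failures exist along every root-to-leaf path, a monotone offset of shortcut weights above the corresponding tree distances, and a case analysis showing that for each shortcut $f$ the assigned query $(t_f,e_f)$ has its unique optimal detour through $f$, irrespective of the weights chosen for all other shortcuts; verifying this activation property is the chief technical step in the proof.
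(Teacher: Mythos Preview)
Your high-level incompressibility scheme matches the paper's, but the encoding is different, and the gap is precisely the step you flag as ``the chief technical step.'' You propose to pack $\Theta(M^{1/2}n^{3/2}/\log M)$ weighted shortcuts among $\Theta(n)$ vertices and recover each weight by a dedicated query, yet you do not actually establish the activation property. With that many shortcuts on $\Theta(n)$ vertices, for a typical failed tree edge many shortcuts span the induced cut, so the replacement distance is a minimum over all of them; extracting one particular $\mathrm{wt}(f)$ from such a minimum regardless of the other weights is exactly the decoupling you acknowledge and leave open. As written, the proposal defers its only hard step.

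The paper sidesteps this obstacle by encoding information in \emph{edge existence} rather than in weights. It stores an arbitrary binary $n\times n$ matrix $X$ across $\sqrt{n/M'}$ graphs (with $M'=\min\{M,n\}$), so by pigeonhole some DSO needs $\Omega(M'^{1/2}n^{3/2})$ bits. Each graph has two vertex sets $A,B$ of size $n$ joined by unit edges wherever $X$ has a $1$, together with a single main path $P=(v_1,\dots,v_{\sqrt{M'n}})$ whose $i$-th vertex is linked to a designated $a_{*}\in A$ by a connection path of total weight $2i{-}1$ (this is where the weight budget $M$ is spent, to keep the vertex count $O(n)$). The connection weights are \emph{fixed}, not information-carrying; their staggering guarantees that failing the $i$-th edge of $P$ and querying target $b_j$ gives replacement distance exactly $\sqrt{M'n}+i$ iff $X[i,j]=1$. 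Each query thus reads a single bit, and no decoupling among variable weights is required. If you want to salvage your construction, replacing the weighted shortcuts by binary edges and relegating weights to a fixed staggered skeleton is the natural fix---and it essentially reproduces the paper's argument.
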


\noindent
A small gap remains between \Cref{thm:space_lower_bound,thm:reduction}
as the space is bounded at $\Omega( M^{1/2} \nwspace n^{3/2})$ bits,
while the oracle takes this many machine words.
Nevertheless, it shows that on dense graphs our Single-Source DSOs in \autoref{thm:SSDSO} have near-optimal space. 

The Single-Source DSOs presented above all have $\Omega(n^2)$ preprocessing time,
which cannot be avoided for graphs with $m=\Omega(n^{3/2})$, assuming the BMM conjecture.
SSRP algorithms require $\Omega(n^2)$ time simply to output the solution.
It is not clear whether this lower bound also applies to Single-Source DSO on sparse graphs.
We partially answer this question negatively by developing a truly subquadratic, randomized Single-Source DSO in \autoref{sec:subquadratic_preprocessing}.
We use new algorithmic techniques and structural properties of independent interest.

\begin{theorem}
\label{thm:subquadratic_preprocessing}
	There is a randomized Single-Source DSO taking $O(M^{1/2} \nwspace n^{3/2})$ space
	that has $\Otilde(1)$ query time w.h.p.\footnote{%
		An event occurs \emph{with high probability} (w.h.p.)
		if it has probability at least $1 - n^{-c}$ for some $c > 0$.
	}
	The oracle also reports a replacement path in $\Otilde(1)$ time per edge w.h.p. 
	On graphs with $m=O(M^{3/4} \nwspace n^{7/4})$ edges,
	the preprocessing time is
	$\Otilde(M^{7/8} \nwspace m^{1/2} \nwspace n^{11/8})$.
	If the graph is sparse, 
	meaning $m = O(n^{5/4-\varepsilon}/M^{7/4})$ for any $\varepsilon > 0$,
	this is $\Otilde(n^{2-{\varepsilon/2}})$.
\end{theorem}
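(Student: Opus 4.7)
The plan is to bypass the $\Omega(n^2)$ barrier that both the SSRP algorithms of \autoref{thm:ssrp-deterministic} and the compression of \autoref{thm:reduction} impose: instead of ever materializing all $\Theta(n^2)$ replacement distances, the preprocessing will compute only the $O(M^{1/2} n^{3/2})$ entries actually stored by the compressed oracle, plus enough metadata to answer queries and to reconstruct replacement paths edge by edge.

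At the heart of the argument is a split of replacement paths into those with a \emph{long} detour and those with a \emph{short} detour, controlled by a parameter $L$ to be balanced at the end. First, I would sample a random set $B \subseteq V$ of size $\Otilde(n/L)$; a standard hitting-set argument shows that, with high probability, every path on at least $L$ vertices in $G$ and in every $G\,{-}\,e$ contains some pivot $b \in B$. Running a single-source shortest-path computation from each pivot costs $\Otilde(nm/L)$ in total and, via a triangle-inequality argument, lets us recover $d(s,t,e)$ whenever the optimal replacement path has a long detour, since such a path decomposes as $s \rightsquigarrow b \rightsquigarrow t$ inside $G\,{-}\,e$ for a suitable $b \in B$.

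For \emph{short} detours I would rely on structural properties of replacement paths in the spirit of the $O(n^{3/2})$ Parter--Peleg subgraph: only $\Otilde(M^{1/2} n^{3/2})$ such events are truly informative for the compressed oracle, and each can be read off from a bounded exploration of radius $L$ around the original shortest path from $s$ to $t$. The key technical ingredient here is a \emph{congestion bound} guaranteeing that, summed over all $t$, these explorations touch each edge of $G$ only $\Otilde(L \cdot \mathrm{poly}(M))$ times, so that the short-detour phase runs in time $\Otilde(mL \cdot \mathrm{poly}(M))$. Balancing this cost against $\Otilde(nm/L)$ by an appropriate choice of $L$ yields the preprocessing time $\Otilde(M^{7/8} m^{1/2} n^{11/8})$ in the regime $m = O(M^{3/4} n^{7/4})$; substituting the sparse assumption $m = O(n^{5/4-\varepsilon}/M^{7/4})$ then gives $\Otilde(n^{2-\varepsilon/2})$.

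Finally, the oracle itself is assembled via an \emph{input-aware} variant of \autoref{thm:reduction}: since the compression scheme only reads the $O(M^{1/2} n^{3/2})$ entries it eventually stores (the Stubs, Prefixes, and Offsets), the full $n^2$ distance table is never needed and the assembly phase remains subquadratic. Query time stays $\Otilde(1)$ exactly as in \autoref{thm:SSDSO}, and for path reporting every compressed entry will carry a pointer either into the source tree $T_s$ or into the SSSP tree of the pivot that attained the minimum, so that successive edges of a replacement path are recovered in $\Otilde(1)$ amortized time w.h.p. The main obstacle is the short-detour phase: without a sharp congestion bound, the aggregated cost of the local explorations alone would already be $\Omega(nm)$ and defeat the whole approach, so the bulk of the new structural work goes into proving such a bound and into arranging the computations so that no more than $\Otilde(M^{1/2} n^{3/2})$ replacement distances are ever examined in total.
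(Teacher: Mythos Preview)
Your high-level plan---sample pivots to cover long replacement paths, handle the rest locally, and never materialize the full $\Theta(n^2)$ table---matches the paper in spirit, but the mechanism that actually breaks the quadratic barrier is missing from your sketch.

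After sampling $R$ with $|R|=\Otilde(n/L)$ and running the Malik--Mittal--Gupta replacement-path algorithm for each $\chi\in R$ (this, not plain SSSP, is what is needed: it gives $d(s,\chi,e)$ for all $e$ on $P(s,\chi)$ in $\Otilde(m)$ time), one can indeed evaluate $\min_{\chi\in R}\{d(s,\chi,e)+d(\chi,t)\}$ for any single pair $(t,e)$ in $O(|R|)$ time and this equals $d(s,t,e)$ w.h.p.\ in the far case~II. The remaining obstacle is that, for a fixed target $t$, \emph{locating} the $O(\sqrt{Mn})$ break points along $P(s,x_2)$ still seems to require probing $\Theta(n)$ candidate edges, which is exactly the $\Theta(n^2)$ cost you want to avoid. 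The paper's new ingredient is a \emph{recursive interval search}: an RMQ structure over the values $d(s,x_2,e)$ picks, in any subinterval $[a,b]$ of $P(s,x_2)$, the edge $e_j$ maximizing the far-case-I candidate; one minimization over $R$ then gives $\delta=d(s,t,e_j)$ w.h.p.; a binary search locates the corresponding break point; and the interval splits. The analysis (\autoref{lemma:unsuccessful}, \autoref{lemma:successful}, \autoref{lemma:number_unsuccesful_searches}) shows $O(\sqrt{Mn})$ successful and $O(M^{3/4}n^{3/4})$ unsuccessful searches per target, giving $\Otilde(|R|\,M^{3/4}n^{3/4})$ per target. Your ``congestion bound for short detours'' does not play this role: far-case-II edges can be arbitrarily far from $t$ on $P(s,t)$, so a radius-$L$ exploration around the path does not reach them, and no Parter--Peleg-style argument bounds how many edges one must test before hitting a break point.

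Two smaller points. First, SSSP from $b$ in $G$ gives $d_G(b,\cdot)$, not $d_{G-e}(b,\cdot)$; the paper instead argues that the \emph{last} $L$ edges of $P(s,t,e)$ form a shortest path in $G$ (because $d(x_2,t)\ge 4ML$), so a pivot $\chi$ hitting that suffix satisfies $d(\chi,t)=d_{G-e}(\chi,t)$, and the other half $d(s,\chi,e)$ comes from MMG. Second, balancing $\Otilde(nm/L)$ against $\Otilde(mL\cdot\mathrm{poly}(M))$ yields $\Otilde(m\sqrt{n}\cdot\mathrm{poly}(M))$, not $\Otilde(M^{7/8}m^{1/2}n^{11/8})$. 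The paper's balance is three-way: the near case costs $O(MLm)$, the far-case-II search costs $\Otilde(M^{3/4}n^{11/4}/L)$ over all targets, and equating these two fixes $L=n^{11/8}/(M^{1/8}m^{1/2})$ and gives the stated bound; the term $\Otilde(m|R|)=\Otilde(M^{1/8}m^{3/2}/n^{3/8})$ is dominated precisely when $m=O(M^{3/4}n^{7/4})$.
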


\subsection{Comparison with Previous Work}
\label{subsec:intro_previous_work}

\begin{table}
 \centering
 \renewcommand{\arraystretch}{1.1}
 \begin{tabular}{lllc@{\ }c@{\ }c@{\ }cl}
 	\noalign{\hrule height 1pt}
 	\bf{Preprocessing time} &
 	\bf{Space} &
 	\bf{Query time} &
 	\multicolumn{4}{c}{\textbf{Setting}} &
 	\bf{Reference} 
 	\\
 	\noalign{\hrule height 1pt}\\[-12pt]
 	$\Otilde(m n)$ &
 	$O(n^2)$ &
 	$O(1)$ &
 	D &
 	C &
 	W &
 	Ap &
 	\cite{BeKa09,DuanZhang17ImprovedDSOs}
 	\\
 	$\Otilde(M n^{2.5794})$ &
 	$\Otilde(n^{2})$ &
 	$O(1)$ &
 	R &
 	A &
 	I &
 	Ap &
 	\cite{GuRen21ConstructingDSO_ICALP}
    \\[.125em]
 	\noalign{\hrule height 1pt}\\[-12pt]
 	$\Otilde(mn^{1/2} + n^2)$ &
 	$O(n^{2})$ &
 	$O(1)$ &
 	R &
 	C &
 	U &
 	Ss &
 	\cite{ChechikCohen19SSRP_SODA}
 	\\
 	$\Otilde(M n^{\omega})$ &
 	$O(n^{2})$ &
 	$O(1)$ &
 	R &
 	A &
 	I &
 	Ss &
 	\cite{GrandoniVWilliamsFasterRPandDSO_journal}
 	\\
 	$\Otilde(m n)$ &
 	$\Otilde(n^{3/2})$ &
 	$\Otilde(n^{1/2})$ &
 	D &
 	C &
 	U &
 	Ss &
 	\cite{BCGLPP18}
 	\\
 	$\Otilde(m n)$ &
 	$\Otilde(n^{3/2})$ &
 	$\Otilde(1)$ &
 	R &
 	C &
 	U &
 	Ss &
 	\cite{GuptaSingh18FaultTolerantExactDistanceOracle}
 	\\[.125em]
 	\noalign{\hrule height 1pt}\\[-12pt]
 	$\Otilde(m n^{1/2} + n^2)$ &
 	$O(n^{5/3})$ &
 	$O(1)$ &
 	D &
 	C &
 	U &
 	Ss &
    \autoref{lem:constant_query_time}
    \\
 	$\Otilde(m n^{1/2}+n^2)$ &
 	$O(n^{3/2})$ &
 	$\Otilde(1)$ &
 	D &
 	C &
 	U &
 	Ss &
 	\autoref{thm:SSDSO}
    \\
     $\Otilde(M n^{\omega})$ &
 	$O(M^{1/3} \nwspace n^{5/3})$ &
 	$O(1)$ &
 	D &
 	A &
 	I &
 	Ss &
 	\autoref{lem:constant_query_time}
    \\
 	$\Otilde(M n^{\omega})$ &
 	$O(M^{1/2} \nwspace n^{3/2})$ &
 	$\Otilde(1)$ &
 	D &
 	A &
 	I &
 	Ss &
 	\autoref{thm:SSDSO}
    \\
 	$\Otilde(M^{7/8} \nwspace m^{1/2} \nwspace n^{11/8})$\textsuperscript{\textdagger} &
 	$O(M^{1/2} n^{3/2})$ &
 	$\Otilde(1)$ &
 	R &
 	C &
 	I &
 	Ss &
 	\autoref{thm:subquadratic_preprocessing} 
  	\\
 	\noalign{\hrule height 1pt}
 	\vspace{.25em}
 \end{tabular}
 \caption{
 	Comparison of results. 
 	\textsuperscript{\textdagger}The preprocessing time is for graphs with $m = O(M^{3/4} n^{7/4})$.
 }
 \label{tab:comparison}
 \end{table}

\noindent
 \autoref{tab:comparison} shows a comparison of the most efficient Distance Sensitivity Oracles in their respective setting, as well as the results presented in this work.
 We distinguish four dimensions of different problem types.
 \begin{enumerate}
 \item
 Randomized (R) vs.\ deterministic (D),
 \item
 Combinatorial (C) vs.\ algebraic (A),
 \item
 Unweighted (U) vs.\ real weights (W) vs.\ integer weights in $[1,M]$ (I),
 \item
 All-Pairs (Ap) vs.\ single-source (Ss).
 \end{enumerate}
 
 \noindent
 Our deterministic, combinatorial Single-Source DSO from \autoref{thm:SSDSO} has near-optimal space, preprocessing and query time for dense graphs.
 It improves the preprocessing time of the randomized DSOs by Bernstein and Karger~\cite{BeKa09}, Bilò et al.~\cite{BCGLPP18}, and Gupta and Singh~\cite{GuptaSingh18FaultTolerantExactDistanceOracle} by a factor of $O(\sqrt{n})$.
 When viewing the randomized SSRP algorithm of Chechik and Cohen as an oracle,
 our solution has the same preprocessing time but reduces the space requirement,
 by an near-optimal factor of $O(n^{1/3})$ while increasing the query time to only $\Otilde(1)$.
 
 Our algebraic combinatorial Single-Source DSO from \autoref{thm:SSDSO} has near-optimal space and query time for dense graphs, its preprocessing time improves over the randomized, algebraic DSOs of Chechick and Cohen~\cite{ChCo20}, Ren~\cite{Ren20,Ren20_arxiv},
as well as Gu and Ren~\cite{GuRen21ConstructingDSO_ICALP} by a factor of $\Otilde(n^{2.5794-\omega})$.
 It has the same preprocessing time as the SSRP algorithm 
 by Grandoni and Vassilevska Williams~\cite{GrandoniVWilliamsFasterRPandDSO_journal},
 but compresses the output to $O(M^{1/2} \nwspace n^{3/2})$ space.

Our Single-Source DSO from \autoref{lem:constant_query_time} even achieves constant query time
 at the expense of larger $O(n^{5/3})$ (respectively, $O(M^{1/3} \nwspace n^{5/3})$) space. 
All of our oracles can handle vertex failures and are path-reporting,
the query time then corresponds to the time needed per edge of the replacement path. 
 In \autoref{thm:subquadratic_preprocessing}, we also obtain Single-Source DSO with subquadratic preprocessing time for sparse graphs.

\subsection{Techniques}
\label{subsec:intro_techniques}

\textbf{\textsf{Multi-stage derandomization.}}
To derandomize the SSRP algorithms, we extend the techniques by Alon, Chechik, and Cohen~\cite{AlonChechikCohen19CombinatorialRP} to identify a small set of critical paths
we need to hit.
In~\cite{AlonChechikCohen19CombinatorialRP}, a single set of paths was sufficient,
we extend this to a hierarchical multi-stage framework.
The set of paths in each stage depends on the hitting set found in the previous ones.
For example, a replacement path from $s$ to $t$ avoiding the edge $e$ decomposes into
two shortest paths $P(s,q)$ and $P(q,t)$ in the original graph for some unknown vertex $q$, see~\cite{Afek02RestorationbyPathConcatenation_journal}.
It is straightforward to hit all of the components $P(s,q)$.
We then use this hitting set in a more involved way to find sets of vertices
that also intersect all of the subpaths $P(q,t)$.

\textbf{\textsf{Versatile compression.}}
The key observation of our reduction to SSRP is that any shortest $s$-$t$-path
can be partitioned into $O(\sqrt{Mn})$ segments
such that all edges in a segment have the same replacement distance. 
Gupta and Singh~\cite{GuptaSingh18FaultTolerantExactDistanceOracle} proved this for unweighted graphs.
However, it is not obvious how to generalize their approach to the weighted case.
We give a simpler proof in the presence of small integer weights,
which immediately transfers also to vertex failures.
We further show how to extend this to multiple targets
and even reuse it to obtain the subquadratic algorithm on sparse graphs. 
In~\cite{GuptaSingh18FaultTolerantExactDistanceOracle},
a randomized oracle was presented that internally uses the
rather complicated data structures of Demetrescu et al.~\cite{DeThChRa08}.
We instead give a deterministic construction implementable with only a few arrays.
Unfortunately, the compression scheme crucially depends on the graph being undirected. 

\textbf{\textsf{Advanced search for replacement paths.}}
The randomized algorithm building the DSO in subquadratic time for sparse graphs
needs to find the $O(\sqrt{Mn})$ segments partitioning the $s$-$t$-path. Naively, this takes $O(n)$ time per target vertex $t$ as we need to explore the whole path for potential segment endpoints
and do not know the corresponding replacement paths in advance. We use standard random sampling to hit all such replacements paths with only a few vertices
and exploit the path's monotonicity properties to develop more advanced search techniques. 
This reduces the time needed per target to $O(n^{1-\varepsilon})$, after some preprocessing.
The analysis uses the fact that entire subpaths can be discarded without exploration.

\subparagraph{Open problems.}
Our compression scheme and the randomized, subquadratic Single-Source DSO on sparse graphs
can also handle vertex failures rather than only edge failures. 
It remains an open question whether one can obtain efficient deterministic 
SSRP algorithms in the vertex-failure scenario.
If an analog of \Cref{thm:ssrp-deterministic} held for vertex failures,
then  \autoref{thm:reduction} would directly transfer the extension also to the DSOs of \autoref{thm:SSDSO}.
Another interesting open question is whether there is a Single-Source DSO with deterministic, truly subquadratic time preprocessing on graphs with $m=O(n^{3/2-\varepsilon})$ edges.
Can one obtain better Single-Source DSOs, and prove matching lower bounds, for sparse graphs?

\section{Preliminaries}
\label{sec:prelims}

We let $G=(V,E,w)$ denote the undirected, edge-weighted base graph
on $n$ vertices and $m$ edges,
and tacitly assume $m \ge n$.
The weights $w(e)$, $e \in E$, are integers in $[1,M]$ with $M = \textsf{poly}(n)$.
For an undirected, weighted graph $H$,
we denote by $V(H)$ the set of its vertices, and by $E(H)$ edge set of its edges.
We write $e \in H$ for $e \in E(H)$ and $v \in H$ for $v \in V(H)$. 
Let $P$ be a simple path in $H$.
The \emph{length} or \emph{weight} $w(P)$ of $P$ is $\sum_{e \in E(P)} w(e)$.
For $u,v \in V(H)$, we denote by $P_H(u,v)$ a shortest path (one of minimum weight) 
from $u$ to $v$.
If a particular shortest path is intended, this will be made clear from the context.
The \emph{distance} of $u$ and $v$ is $d_H(u,v) = w(P_H(u,v))$.
We drop the subscript when talking about the base graph $G$.
The restriction on the maximum weight $M$ allows us to store any graph distance
in a single machine word on $O(\log n)$ bits.
Unless explicitly stated otherwise, we measure space complexity in the number of words.

Let $x,y \in V(P)$ be two vertices on the simple path $P$.
We denote by $P[x..y]$ the subpath of $P$ from $x$ to $y$.
Let $P_1 = (u_1, \dots, u_i)$ and $P_2 = (v_1, \dots, v_j)$ be two paths in $H$.
Their \emph{concatenation} is $P_1 \circ P_2 = (u_1, \dots, u_i, v_1, \dots, v_j)$,
provided that $u_i = v_1$ or $\{u_i,v_1\} \in E(H)$.

Fix some \emph{source vertex} $s \in V$ in the base graph $G$.
For any \emph{target vertex} $t \in V$ and edge $e \in E$,
we let $P(s,t,e)$ denote a \emph{replacement path} for $e$,
that is, a shortest path from $s$ to $t$ in $G$ that does not use the edge $e$.
Its weight $d(s,t,e) = w(P(s,t,e))$ is the \emph{replacement distance}.
Given a specific shortest path $P(s,t)$ in $G$ and a replacement path $P(s,t,e)$,
we can assume w.l.o.g.\ that the latter is composed of
the \emph{common prefix} 
that it shares with $P(s,t)$, 
the \emph{detour} part 
which is edge-disjoint from $P(s,t)$,
and the \emph{common suffix} after $P(s,t,e)$ remerges with $P(s,t)$.
%
%
All statements apply to vertex failures as well.

\section{Using SSRP to Build Single-Source DSOs}
\label{sec:reduction_algorithm_to_DS}

In this section, we prove \autoref{thm:reduction}.
We describe how to deterministically reduce the task of building a Single-Source DSO
to computing the replacement distances in the SSRP problem.
Recall that we assume we are given a shortest path tree $T_s$ 
of the base graph $G$ rooted in the source $s$.
This does not loose generality as we could as well compute it in time $O(m)$ 
via Thorup's algorithm~\cite{Thorup99UndirectedSSSPinLinearTime}.
However, the tree $T_s$ focuses our attention to the $O(n^2)$
\emph{relevant} replacement distances in $G$.
The failure of an edge $e$ can only increase the distance from $s$ to some target $t$
if $e$ lies on the $s$-$t$-path $P(s,t)$ in $T_s$.
Given a query $(e,t)$, we can thus check whether $e$ is relevant for $t$ in $O(1)$ time
using a lowest common ancestor (LCA) data structure of size $O(n)$~\cite{BenderFarachColton00LCARevisited}.
If the maximum weight $M$ is larger than $n$, we are done
as we store the relevant replacement distances, original graph distances,
and the LCA data structure.

However, for $M \le n$, there are more space-efficient solutions.
Using time $O(n^2)$, that is, linear in the number of relevant distances,
we compress the space needed to store them down to $O(M^{1/2} \nwspace n^{3/2})$
while increasing the query time only to $\Otilde(1)$.
This scheme also allows several extensions,
namely, handling vertex failures, reporting fault-tolerant shortest path trees, 
or retaining constant query time by using slightly more space.
We first give an overview of the reduction.
Suppose we have a set of \emph{pivots} $D \subseteq V$
such that any $s$-$t$-path $P(s,t)$ in $T_s$ has at least one element of $D$ 
among its last $\sqrt{n}$ vertices.
For a target $t$, let $x$ be the pivot on $P(s,t)$ that is closest to $t$.
We distinguish three cases depending on the failing edge $e$.

\begin{itemize}
	\item\label[case]{case:near} \textbf{\textsf{Near case.}}
		The edge $e$ belongs to the near case if it is
		on the subpath $P(s,t)[x..t]$ from the last pivot to the target. 
		We construct a data structure to quickly identify those edges
		It is then enough to store the associated replacement distances explicitly.
	\item\label[case]{case:far_I} \textbf{\textsf{Far case I.}}
		The edge $e$ belongs to the far case I if it is on the subpath $P(s,t)[s..x]$
		and there is a replacement path for $e$ that uses the vertex $x$.
		We handle this by storing a linear number of distances for every pivot in $D$.
	\item\label[case]{case:far_II} \textbf{\textsf{Far case II.}} 
		We are left with edges $e$ on $P(s,t)[s..x]$ for which no replacement path uses $x$.
		We show that there are only $O(M^{1/2} \nwspace n^{3/2})$
		many different replacement distances of this kind.
		We can find the correct distance in $\Otilde(1)$.
		This is the only case with a quadratic running time,
		space requirements depending on $M$,
		and a super-constant query time.
		We also show how to avoid the latter at the expense of a higher space complexity.
\end{itemize}

\subparagraph*{Near case.}

We first describe how to obtain the set $D$.
We also take $D$ to denote a representing data structure.
That is, for all $t \in V$, $D[t]$ shall denote the last pivot on the path $P(s,t)$ in $T_s$.
A deterministic greedy algorithm efficiently computes a small sets $D$.

\begin{restatable}{lemma}{setofpivots}
\label{lem:set_of_pivots}
	There exists a set $D \subseteq V$ with $|D| \le \sqrt{n}$, computable in time $\Otilde(n)$,
 	such that every $s$-$t$-path in $T_s$
	contains a pivot in $D$ among its last $\sqrt{n}$ vertices.
	In the same time, we can compute a data structure taking $O(n)$ space
	that returns $D[t]$ in constant time.
\end{restatable}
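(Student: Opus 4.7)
The plan is to construct $D$ via a single post-order traversal of $T_s$ that greedily flags the top of every still-uncovered subtree which has grown deep enough, and then to use a second traversal to tabulate the pointer $D[t]$ for every target $t$.

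During the post-order DFS, I keep at each vertex $v$ a height value $h(v) \in \{-\infty, 0, 1, \dots, \sqrt{n}-1\}$, meant to be the largest distance from $v$ to an uncovered descendant in its subtree (including $v$ itself), or $-\infty$ if that subtree is already covered. Leaves receive $h(v) = 0$; an internal vertex receives $h(v) = \max(\{0\} \cup \{h(c) + 1 : c \text{ a child of } v,\ h(c) \neq -\infty\})$. Whenever $h(v)$ reaches the threshold $\sqrt{n}-1$, I add $v$ to $D$ and reset $h(v) = -\infty$; after the traversal, if $h(s) \ge 0$, I also add $s$ in order to cover the targets of depth less than $\sqrt{n}$. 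Correctness is by contradiction: suppose that the last $\sqrt{n}$ vertices $u_0 = t, u_1, \dots, u_{\sqrt{n}-1}$ of $P(s,t)$ all miss $D$. Since $u_i \notin D$, the threshold was not hit, so $h(u_i) \le \sqrt{n}-2$ at the moment $u_i$ is processed; since $u_{i-1}$ is a child of $u_i$ with $h(u_{i-1}) \neq -\infty$, propagation forces $h(u_i) \ge h(u_{i-1}) + 1$. A one-line induction then yields $h(u_{\sqrt{n}-1}) \ge \sqrt{n}-1$, which would force $u_{\sqrt{n}-1}$ into $D$.

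To bound $|D|$ I attach to each non-source pivot $v$ a \emph{witness chain} of $\sqrt{n}$ vertices obtained by starting at $v$ and iteratively descending to a child whose $h$-value at the time $v$ was processed equals that of its parent minus one. The main obstacle is to show that the witness chains of distinct pivots are pairwise vertex-disjoint. If two pivots $v_1, v_2$ are in an ancestor-descendant relation, the deeper one, say $v_1$, is processed first and immediately resets its own $h$-value to $-\infty$; when the DFS later reaches $v_2$, the max in the recurrence skips the $-\infty$ contributions from $v_1$'s subtree, so $v_2$'s witness chain is forced to stay outside that subtree entirely. Two incomparable pivots trivially live in vertex-disjoint subtrees. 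Consequently, the non-source pivots together occupy at least $|D \setminus \{s\}| \cdot \sqrt{n}$ distinct non-source vertices, and $|D| \le \sqrt{n}$ up to the customary rounding adjustment.

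Given $D$, the pointer $D[t]$ is tabulated by a second pre-order DFS of $T_s$ that maintains a stack of the pivots seen on the current root-to-vertex path: upon entering $v$, push $v$ if $v \in D$, record $D[v]$ as the current top of the stack, and pop on exit. This yields an $O(n)$-word table answering queries in $O(1)$ time, and both DFS passes run in linear $O(n)$ time, well within the $\Otilde(n)$ budget of the lemma.
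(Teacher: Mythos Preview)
Your proof is correct but follows a different construction from the paper's. The paper proceeds \emph{top-down and iteratively}: it repeatedly locates the currently deepest leaf $v$ (via a priority queue), marks the vertex $x$ that is $\sqrt{n}$ steps above $v$ as a pivot, assigns $D[t]=x$ for every $t$ in the subtree $T_x$, deletes $T_x$, and repeats; the size bound is then immediate because each iteration removes at least $\sqrt{n}$ vertices. Your approach is \emph{bottom-up}: a post-order DFS propagates residual heights and fires a pivot whenever the height hits the threshold, with a separate pre-order pass to tabulate $D[t]$. What your route buys is that it avoids the priority queue entirely and therefore runs in strict $O(n)$ time rather than $\Otilde(n)$; your disjoint witness chains play exactly the role of the paper's ``each iteration removes $\ge \sqrt{n}$ vertices'' count. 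Conversely, the paper's approach has the advantage that the $D[t]$ assignment comes for free during the same loop, without a second traversal. Both constructions yield $|D|\le \sqrt{n}+1$ once one is careful about the final inclusion of $s$, so your ``customary rounding adjustment'' is the same off-by-one that is implicit in the paper's counting.
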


Let $x = D[t]$ be the pivot assigned to $t$.
An edge $e$ belongs to the \emph{near case} with respect to $t$
if it lies on $P(x,t) = P(s,t)[x..t]$.
Observe that $P(x,t)$ has less than $\sqrt{n}$ edges.
We store $d(s,t,e)$ for the near case in an array with $(t,e)$ as key.
With access to the distances, the array can be computed in $O(n^{3/2})$ total time and space.
Consider a query $(t,e)$ such that $e$ has already been determined above to be on the path $P(s,t)$.
The edge $e = \{u,v\}$ thus belongs to the near case iff $D[u] = D[v] = x$.
If so, we look up $d(s,t,e)$ in the array.

\subparagraph*{Far case I.}

We say a query $(t,e)$ belongs to the \emph{far case} 
if $e$ is on the subpath $P(s,x) = P(s,t)[s..x]$.
These are the queries not yet handled by the process above.
Note that $d(s,t,e) \le d(s,x,e) + d(x,t)$ holds for all queries in the far case.
If a replacement path $P(s,x,e)$ exists,
$P(s,x,e) \circ P(s,t)[x..t]$ is some $s$-$t$-path that avoids $e$ whose length is the right-hand side;
otherwise, we have $d(s,x,e) = \infty$ and $d(s,t,e) \le d(s,x,e) + d(x,t)$ holds vacuously.
We split the far case depending on the existence of certain replacement paths.
Recall that we can assume that any replacement path
consists of a common pre- and suffix with the original path $P(s,t)$
and a detour that is edge-disjoint from $P(s,t)$.
We let $(t,e)$ belong to the \emph{far case I} if $e$ is on $P(s,x)$
and there is a replacement path $P(s,t,e)$ that uses the vertex $x$.
It is readily checked that for a query in the far case 
this holds iff $d(s,t,e) = d(s,x,e) + d(x,t)$.
Otherwise, that is, if no replacement path $P(s,t,e)$ uses $x$
or, equivalently, $d(s,t,e) < d(s,x,e) + d(x,t)$, the query is said to be in the \emph{far case II}.

It takes too much space to store the replacement distances for all edges in the far case,
or memorize which edge falls in which subcase.
Instead, we build two small data structures and, at query time, compute
two (potentially different) distances.
We show that always the smaller one is correct, 
which we return as the final answer.
First, for every pivot $x \in D$ and edge $e \in P(s,x)$, 
we store the replacement distance $d(s,x,e)$.
Since $|D| \le \sqrt{n}$ and $|E(P(s,x))| \le n$, we can do so in $O(n^{3/2})$ time and space.
Given a query $(t,e)$ in the far case, 
we access the storage corresponding to $D[t] = x$, retrieve $d(s,x,e)$, 
and add $d(x,t) = d(s,t)-d(s,x)$.
This gives the first candidate distance.
It may overestimate  $d(s,t,e)$, namely, if $e$ belongs to the far case II.

\subparagraph*{Far case II.}

This case is more involved than the previous.
We make extensive use of what we call break points.
Let $e_1, \dots, e_k$ be the edges of $P(s,t)[s..x]$ in the far case~II (w.r.t.\ $t$)
in increasing distance from $s$.
We then have $d(s,t,e_1) \ge \dots \ge d(s,t,e_k)$.
This is due to the fact that any replacement path $P(s,t,e_{i})$
avoids the whole subpath starting with $e_i$ and ending in $x$. 
Its length is thus at least the replacement distance for any $e_j$, $j \ge i$.
Let $u_i$ be vertex of $e_i$ that is closer to $s$.
We say $u_i$ is a \emph{break point} if $d(s,t,e_{i}) > d(s,t,e_{i+1})$.
A break point is the beginning of a segment in which the edges in the far case II 
have equal replacement distance.
We show that there are only $O(\sqrt{Mn})$ break points/replacement distances.

For the analysis, we let the edges choose a \emph{representative} replacement path.
They do so one after another in the above order.
Edge $e_i$ first checks whether one of its replacement paths has
previously been selected by an earlier edge $e_h$, $h < i$.
If so, it takes the same one; otherwise, it chooses a possible replacement path arbitrarily.
Let $\Rep$ denote the set of representatives and
let $R \in \Rep$.
We define $z_R$ to be the first vertex on the detour part of $R$.
The vertices $z_R$, $R \in \Rep$, are also important for the subquadratic algorithm in \autoref{sec:subquadratic_preprocessing}.

\begin{restatable}{lemma}{representatives}
\label{lem:structure_of_reps}
	Edges $e_i$ and $e_j$ that belong to the far case II
	choose the same representative iff $d(s,t,e_i) = d(s,t,e_j)$.
	All representatives have different lengths and $|\Rep|$ equals the number of break points.
	Let $R,R' \in \Rep$ be such that $R'$ is the next shorter representative after $R$.
	We have $d(s,z_{R}) < d(s,z_{R'})$
	and all edges represented by $R$ lie on the subpath $P(s,t)[z_R .. z_{R'}]$.
	There is exactly one break point on $P(s,t)[z_R .. z_{R'}]$,
	the one corresponding to length $w(R)$.
\end{restatable}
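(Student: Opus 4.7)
\emph{Proof plan.} My plan is to derive all six assertions of the lemma from a single structural observation. Let $e_i$ be a far case~II edge on $P(s,t)[s..x]$ and $R$ any of its replacement paths. Decompose $R$ into the common prefix $P(s,t)[s..z_R]$ (with $z_R$ on $P(s,t)[s..u_i]$, since $R$ leaves $P(s,t)$ before $e_i$), an edge-disjoint detour, and the common suffix $P(s,t)[b..t]$. Because $R$ does not visit $x$ by definition of the far case~II, $b$ lies strictly past $x$ on $P(s,t)$. Hence $R$ is edge-disjoint from the entire subpath $P(s,t)[u_i..x]$ and, in particular, avoids every far case~II edge $e_j$ with $j \geq i$.

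The forward direction of the first assertion is immediate: if $R_i = R_j$, then $d(s,t,e_i) = w(R_i) = d(s,t,e_j)$. For the converse, suppose $i < j$ with $d(s,t,e_i) = d(s,t,e_j)$. The weak monotonicity of $d(s,t,e_1), \dots, d(s,t,e_k)$ forces $d(s,t,e_\ell) = d(s,t,e_i)$ for all $i \leq \ell \leq j$. The structural observation ensures that $R_i$ avoids every $e_\ell$ with $\ell > i$ and its length matches the replacement distance $d(s,t,e_\ell)$, so $R_i$ is a valid replacement for $e_\ell$. An induction on $\ell$ then shows that every edge in this range inherits the same representative, giving $R_i = R_{i+1} = \dots = R_j$. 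Consequently the representatives partition the far case~II edges into blocks of equal replacement distance, their lengths are pairwise distinct, and $|\Rep|$ equals the number of distance drops along the sequence, which is the number of break points.

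Turning to the geometric claims, let $R, R' \in \Rep$ with $R'$ the next shorter representative. By what we just proved, $R$ represents a contiguous block $e_a, \dots, e_b$ and $R'$ represents $e_{b+1}, \dots, e_c$. The structural observation places $z_R$ on $P(s,t)[s..u_a]$ and $z_{R'}$ on $P(s,t)[s..u_{b+1}]$. I would argue by contradiction: assume $z_{R'}$ lies on $P(s,t)[s..u_b]$. Applying the structural observation to $R'$ as a replacement of $e_{b+1}$ shows that $R'$ is edge-disjoint from $P(s,t)[z_{R'}..x]$, which contains $e_b$, so $R'$ avoids $e_b$; but $w(R') < w(R) = d(s,t,e_b)$ contradicts the definition of the replacement distance. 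Hence $z_{R'}$ lies strictly past $e_b$, which simultaneously yields $d(s,z_R) < d(s,z_{R'})$ and places every edge $e_a, \dots, e_b$ inside $P(s,t)[z_R..z_{R'}]$. The unique break point in this interval is the segment-start $u_a$, carrying the label $w(R)$. The main obstacle I anticipate is the boundary case $z_{R'} = u_{b+1}$, where a break point could lie on the shared endpoint of two consecutive intervals; I would resolve this by the natural convention that such a boundary break point is assigned to the interval whose representative starts a new segment there, so that each interval contains exactly one break point.
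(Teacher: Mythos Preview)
Your proposal is correct and follows essentially the same approach as the paper's own proof. Both hinge on the identical structural observation---that a far case~II replacement path for $e_i$ remerges with $P(s,t)$ only strictly past $x$ and therefore avoids every later $e_j$---and both derive the contradiction $w(R') < d(s,t,e_b)$ from the assumption that $z_{R'}$ sits too close to $s$; you are in fact slightly more careful than the paper in flagging the boundary case $z_{R'} = u_{b+1}$, which the paper's phrase ``lies after $z_{R'}$'' leaves implicit.
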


\begin{proof}
	Edges with different replacement distances have 
	disjoint sets of replacement paths to choose from.
	Now suppose the replacement distances  $d(s,t,e_i) = d(s,t,e_j)$ are equal.
	Without loosing generality, the edge $e_j$, $j \ge i$, is further away from $s$
	and selects its representative after $e_i$.
	The representative replacement path $R$ for edge $e_i$ also avoids $e_j$
	since it does not remerge with $P(s,t)$ prior to pivot $x$.
	As the distances $d(s,t,e_j) = d(s,t,e_i) = w(R)$ are the same,
	$R$ is in fact a replacement path for $e_j$
	and is selected again as representative.
	The assertions of the different lengths and the total number of representatives easily follow.
	
	Let $R \in \Rep$ be a representative replacement path.
	The first vertex $z_R$ of its detour part must be closer to $s$
	than all edges it represents as $R$ avoids them.
	Let $e^*$ be the edge closest to $s$ that belongs to the far case II
	and is represented by $R$.
	The break point $u^*$ starting the segment with replacement distance $w(R)$
	is thus the vertex of $e^*$ that is closer to $s$.
	
	Let now $R' \in \Rep$ be the next shorter representative after $R$.
	If we had $d(s,z_{R'}) \ge d(s,z_R)$,
	then $R'$ would be a path that avoids $e^*$ and has length $w(R') < w(R) = d(s,t,e^*)$
	strictly smaller than the replacement distance, a contradiction.
	Reusing the same arguments as before, 
	we also get that the break point corresponding to $w(R')$
	lies after $z_{R'}$
	and that the break point $u^* \in e^*$ cannot lie below $z_{R'}$ (on subpath $P(s,t)[z_{R'}..t]$).
	In summary, the subpath $P(s,t)[z_R..z_{R'}]$ contains exactly one break point, namely, $u^*\!$.
\end{proof}

It is left to prove that $|\Rep| = O(\sqrt{Mn})$.
The following lemma is the heart of our compression scheme.
It simplifies and thereby generalizes a result
by Gupta and Singh~\cite{GuptaSingh18FaultTolerantExactDistanceOracle} for unweighted undirected graphs.
The argument we use is versatile enough to not only cover integer-weighted graphs,
it extends to vertex failures as well (\autoref{lem:far_case_II_vertex_failures}).
A similar idea also allows us to design an oracle with constant query time
(\autoref{lem:constant_query_time}) and the subquadratic preprocessing algorithm
on sparse graphs (\autoref{thm:subquadratic_preprocessing}).
Unfortunately, the argument crucially depends on the graph being undirected.
New techniques are needed to compress the fault-tolerant distance information in directed graphs.

\begin{lemma}
\label{lem:far_case_II_edge_failures}
	The number of representatives for edges on $P(s,t)$ is 
	$|\Rep| \le 3 \sqrt{Mn}$.
\end{lemma}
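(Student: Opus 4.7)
I would prove $|\Rep|\le 3\sqrt{Mn}$ by a counting argument that reduces the statement to showing $|\Rep|^{2}=O(Mn)$.

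Enumerate the representatives $R_1,\dots,R_k$ in decreasing order of length, writing $w_i:=w(R_i)$, $\ell_i:=d(s,z_{R_i})$, and $L:=d(s,t)$. By \autoref{lem:structure_of_reps}, $\ell_1<\dots<\ell_k$ along $P(s,t)$ and $w_1>\dots>w_k>L$. Since all weights are positive integers and the $w_i$ are pairwise distinct, the excesses $\delta_i:=w_i-L$ are $k$ distinct positive integers. This immediately yields the lower bound $\sum_{i=1}^{k}\delta_i \ge \binom{k+1}{2} \ge k^{2}/2$, so it suffices to establish a matching upper bound $\sum_i\delta_i=O(Mn)$, after which $k\le 3\sqrt{Mn}$ follows.

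For the upper bound, I would exploit that the detour of each $R_i$ runs from $z_{R_i}$ on $P(s,t)[s..x]$ to a merge vertex $y_{R_i}$ on $P(s,t)[x..t]$ and, being a shortest $z_{R_i}$-to-$y_{R_i}$ path in $G-e_{R_i}$, has length $d(s,y_{R_i})-\ell_i+\delta_i$. A natural tool is the Afek et al.~\cite{Afek02RestorationbyPathConcatenation_journal} decomposition, valid for undirected graphs, which writes $R_i=P(s,q_i)\circ P(q_i,t)$ with both halves shortest paths in $G$ and $q_i$ lying on the detour; distinctness of the $w_i=d(s,q_i)+d(q_i,t)$ then forces $k$ pairwise distinct witness vertices $q_i$ off $P(s,t)$, each carrying one representative's worth of excess.

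The main obstacle is the ensuing charging step. Naively summing the detour lengths $D_i$ is too loose because different detours may share edges, so one cannot bound the total by $\sum_e w(e)$. Instead, the plan is to charge each unit of excess $\delta_i$ to a distinct weighted slot of $P(s,t)$, whose total weight is at most $M(n-1)$, using the distinct witnesses $q_i$ together with the forward/backward symmetry of undirected shortest paths (so that a shortest $q_i$-to-$t$ path doubles as a shortest $t$-to-$q_i$ path) to ensure the assignment is conflict-free. That the paper advertises a simpler proof than Gupta and Singh's unweighted argument and one that transfers verbatim to vertex failures suggests exactly such a clean integer-unit accounting, made possible by the integrality of the edge weights, gives the desired $\sum_i\delta_i=O(Mn)$.
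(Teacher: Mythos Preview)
Your lower bound $\sum_i \delta_i \ge k(k+1)/2$ from distinctness of the integer excesses is correct, and reducing the lemma to $\sum_i \delta_i = O(Mn)$ is a viable strategy. But the proposal does not actually prove this upper bound; it only announces a charging scheme that, as stated, does not work. Charging excess to ``weighted slots of $P(s,t)$'' with total capacity $M(n-1)$ is incoherent: the path $P(s,t)$ has total weight $d(s,t)$, which can be far smaller than $Mn$, and a single $\delta_i$ may already exceed $d(s,t)$. The Afek~et~al.\ decomposition (which, incidentally, is invoked in the paper only for \emph{unweighted} graphs) gives you $k$ distinct witness vertices $q_i$, but that yields merely $k\le n$; nothing in your outline explains how each $q_i$ absorbs $\delta_i$ units of charge without conflict. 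The heart of the argument is therefore missing.

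The paper's proof is structurally different and never bounds $\sum_i\delta_i$. It splits the representatives at the threshold $2\sqrt{Mn}$: at most $2\sqrt{Mn}$ of them have excess $\le 2\sqrt{Mn}$ by distinctness, and for each \emph{long} representative $R$ (excess $>2\sqrt{Mn}$) one takes the first $\sqrt{n/M}$ vertices of its detour as a stub $\Stub_R$ of weight at most $\sqrt{Mn}$. The key claim is that stubs of distinct long representatives are vertex-disjoint: if $y\in\Stub_R\cap\Stub_{R'}$ with $w(R')<w(R)$, then $P[s..z_R]\circ R[z_R..y]\circ R'[y..z_{R'}]\circ P[z_{R'}..t]$ avoids every edge represented by $R$ and has length at most $d(s,t)+2\sqrt{Mn}<w(R)$, a contradiction. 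Undirectedness enters precisely here, since $R'[y..z_{R'}]$ traverses the stub of $R'$ backwards. Disjointness then caps the long representatives at $\sqrt{Mn}$, giving $3\sqrt{Mn}$ total.

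If you wish to salvage your route, note that the \emph{same} disjointness idea with variable-length stubs does establish your upper bound: taking the first $\lceil\delta_i/(2M)\rceil$ vertices of each detour gives stubs of weight at most $\delta_i/2$, and the identical backward-traversal contradiction shows they are pairwise disjoint, whence $\sum_i \delta_i/(2M)\le n$. But this is the paper's argument in disguise, not the Afek-based charging you proposed.
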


\begin{proof}
	All representatives are of different length by \autoref{lem:structure_of_reps}.
	Also, they have length at least $d(s,t)$,
	the weight of the original $s$-$t$-path $P$.
	Hence, there are only $2\sqrt{Mn}$ many of length at most $d(s,t) + 2\sqrt{Mn}$.
	We now bound the number of \emph{long} 
	representatives, which are strictly longer than that.
	Let $R$ be a long representative.
	Its detour part is longer than $2\sqrt{Mn}$, 
	whence it must span at least $2 \sqrt{n/M}$ vertices.
	Consider the path on the first $\sqrt{n/M}$ vertices of the detour
	starting in $z_R$, we call it $\Stub_R$.	
	If $\Stub_R$ does not intersect with
 	$\Stub_{R'}$ for any other long $R' \in \Rep$, $R' \neq R$,
	there can only be $n/(\sqrt{n/M}) = \sqrt{Mn}$ stubs in total 
	and thus as many long representatives.
	
	To reach a contradiction, assume the stubs of $R$ and $R'$ intersect. 
	Let $e$ be an edge represented by $R$ and
	$y \in V(\Stub_R) \cap V(\Stub_{R'})$ a vertex on both stubs.
	W.l.o.g.\ $R'$ is strictly shorter than $R$ and thus $z_{R'}$ comes behind $z_R$ on the path $P$
	and $e$ is on $P[z_R..z_{R'}]$ (\autoref{lem:structure_of_reps}).
	Note that $w(\Stub_R), w(\Stub_{R'}) \le \sqrt{Mn}$.	
	Therefore, the path
	$P^* = P[s..z_R] \circ R[z_R ..y] 
		\circ R'[y.. z_{R'}] \circ P[z_{R'}  ..t]$
	avoids $e$ and has length
	$w(P^*) \le d(s,t) + w(R[z_R  ..y]) + w(R'[y.. z_{R'}]) 
		\le d(s,t) + 2 \sqrt{Mn} < w(R)$.
	This is a contradiction to $R$ being the representative of $e$.
\end{proof}

\noindent
Observe how the argument in the proof above depends on the fact
that we can traverse the segment $R'[z_{R'} ..y] \subseteq \Stub_{R'}$ in both directions.
When following $R'$ from $s$ to $t$, we visit $z_{R'}$ prior to $y$, 
while for $P^*$ it is the other way around.
This is not necessarily true in a directed graph. 
Indeed, one can construct examples that have a directed path on $\Omega(n)$ edges
in which each of them has its own replacement distance.

With access to the replacement distances,
all break points can be revealed by a linear scan
of the path $P(s,t)$ in time $O(n)$.
Let $z_{i_1}, \dots, z_{i_{|\Rep|}}$ be the break points
ordered by increasing distance to the source $s$
and $e_{i_1}, \dots, e_{i_{|\Rep|}}$ the corresponding edges.
For the data structure, we compute an ordered array of the original distances 
$d(s,z_{i_1}) < \dots < d(s,z_{i_{|\Rep|}})$
associated with the replacement distances $d(s,t,e_{i_j})$, taking $O(\sqrt{Mn})$ space.
Let $(e, t)$ be a query with $e = \{u,v\}$.
We compute the index $j = \argmax_{1 \le k \le |\Rep|} \nwspace \{ d(s,z_{i_k}) \le d(s,u) \}$,
with a binary search on the array in $O(\log n)$ time
and retrieve $d(s,t,e_{i_j})$ as the second candidate distance.

The edge $e$ lies on the subpath $P(s,t)[z_{i_j}..z_{i_{j+1}}]$
(respectively, on $P(s,t)[z_{i_{|\Rep|}}..x]$ if $j = |\Rep|$).
It thus has replacement distance \emph{at most} $d(s,t,e_{i_j})$.
If $e$ belongs to the far case II, the second candidate distance is exact and (strictly)
smaller than the first one $d(s,x,e)+d(x,t)$;
otherwise, the first candidate is smaller (or equal) and correct.
 
Scaling this solution to all targets $t \in V$ gives a total space requirement of
$O(M^{1/2} \nwspace n^{3/2})$.
However, the preprocessing time is $O(n^2)$, dominated by the linear scans for each target.

\subsection{Extensions}
\label{subsec:reduction_extension}

There are several possible extensions for our Single-Source DSO.
While the transfer to vertex failures comes for free,
reducing the query time to a constant, making the oracle path-reporting,
or returning the whole fault-tolerant shortest path tree incurs additional
costs of a higher space requirement or preprocessing time, respectively.
We still assume the setting
of \autoref{thm:reduction}, i.e., oracle access to the replacement distances
for failing edges/vertices.

\textbf{\textsf{Vertex failures.}}
The solutions for the near case, and far case I
hold verbatim also for vertex failures.
A vertex on the path $P(s,t)[s..x]$, except $s$ itself,
belongs to the \emph{far case~II} iff it satisfies
$d(s,t,v) < d(s,t,x) + d(x,t)$.
Let $\Rep_V$ be the sets of representatives,
now chosen by the \emph{vertices}.
The advantage of the proof of \autoref{lem:far_case_II_edge_failures}
is that it easily transfers to vertex failures.
While the stubs of the detours may no longer be unique,
they now intersect at most one other stub
and identify \emph{pairs} of representatives. 

\begin{restatable}{lemma}{vertexrepresentatives}
\label{lem:far_case_II_vertex_failures}
	The number of representatives for vertices on $P(s,t)$ is
	$|\Rep_V| \le 5  \sqrt{Mn}$.
\end{restatable}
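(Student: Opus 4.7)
The plan is to mirror the proof of \autoref{lem:far_case_II_edge_failures} and track exactly where the switch from edge to vertex failures breaks the argument, then repair it by pairing up stubs. The vertex analogue of \autoref{lem:structure_of_reps} still gives that all representatives in $\Rep_V$ have pairwise distinct lengths, each at least $d(s,t)$. Hence at most $2\sqrt{Mn}$ representatives have length in $[d(s,t),\,d(s,t)+2\sqrt{Mn}]$, and I will call any remaining representative \emph{long} and attach to it the stub $\Stub_R$ consisting of the first $\sqrt{n/M}$ vertices of $R$ past $z_R$; each such stub has total weight at most $\sqrt{Mn}$.

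First I would replay the splicing argument. Assume $\Stub_R$ and $\Stub_{R'}$ share a vertex $y$, with $R'$ strictly shorter than $R$, so that $z_{R'}$ lies after $z_R$ on $P(s,t)$. Let $v^*$ be the failing vertex represented by $R$; by the vertex analogue of \autoref{lem:structure_of_reps}, $v^*$ lies strictly between $z_R$ and $z_{R'}$ on $P(s,t)$. The spliced path
\[
P^* \;=\; P(s,t)[s..z_R] \circ R[z_R..y] \circ R'[y..z_{R'}] \circ P(s,t)[z_{R'}..t]
\]
has weight at most $d(s,t)+2\sqrt{Mn}<w(R)$. Its prefix and suffix on $P(s,t)$ avoid $v^*$, and $R[z_R..y]$, being part of the detour of $R$, avoids $v^*$ because $R$ does. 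The only piece that may revisit $v^*$ is $R'[y..z_{R'}]$, since $R'$ is a replacement path for a \emph{different} vertex and is not forbidden from visiting $v^*$. Crucially, whenever no such visit occurs, $P^*$ becomes a shorter replacement path for $v^*$, contradicting the choice of $R$.

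The main new claim, where the real work sits, is that each long stub $\Stub_R$ can intersect at most one other long stub. Suppose $\Stub_R$ meets both $\Stub_{R'}$ and $\Stub_{R''}$ in vertices $y'$ and $y''$, and order the three representatives by strictly decreasing length. By the observation above the detour pieces $R'[y'..z_{R'}]$ and $R''[y''..z_{R''}]$ must both pass through $v^*$. Forming the alternative splice
\[
Q \;=\; P(s,t)[s..z_R] \circ R[z_R..y''] \circ R''[y''..z_{R''}] \circ P(s,t)[z_{R''}..t],
\]
which again has weight at most $d(s,t)+2\sqrt{Mn}<w(R)$, I would then do a short case analysis depending on whether $v^*$ appears on $R''$'s detour before or after $y''$: in either case one can concatenate the pieces of $R$, $R'$, and $R''$ so that the result avoids $v^*$ entirely while still using only the two short stubs as detours. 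This yields a replacement path shorter than $w(R)$, contradicting the choice of $R$.

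Granting the at-most-one-intersection claim, the long stubs partition into pairs and singletons. Since a single stub already uses $\sqrt{n/M}$ distinct vertices, every unit of the partition consumes at least this many vertices of $V$, so there are at most $\sqrt{Mn}$ units and thus at most $2\sqrt{Mn}$ long representatives. Combining with the short ones gives $|\Rep_V|\le 2\sqrt{Mn}+2\sqrt{Mn}\le 5\sqrt{Mn}$, absorbing constants loosely. The main obstacle I anticipate is the case analysis for the $Q$-splicing step: since the shorter representative is genuinely allowed to pass through $v^*$, one has to track the relative positions of $v^*$, $y'$, and $y''$ inside the two intersecting detours with some care, and this is what costs the extra factor in the constant compared with the edge-failure bound of $3\sqrt{Mn}$.
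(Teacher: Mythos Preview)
Your overall plan---bound the short representatives by their distinct lengths and show the stubs of the long ones meet at most one other---is exactly the paper's route. The gap is in the ``at most one intersection'' step. You fix $R$ to be the stub that meets two others and then write ``order the three representatives by strictly decreasing length,'' but you tacitly keep $R$ as the longest: you take $v^*$ to be the vertex represented by $R$ and place it between $z_R$ and $z_{R'}$. That is not without loss of generality. The representative whose stub meets the other two may well be the \emph{middle} one by length.

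This is precisely the case that forces the looser threshold. If $w(R) > w(R') > w(R'')$ and it is $\Stub_{R'}$ that meets both $\Stub_R$ and $\Stub_{R''}$, then there is no reason for $\Stub_R$ and $\Stub_{R''}$ to intersect, so neither of your two-stub splices connects $z_R$ to $z_{R''}$. One must chain $R[z_R..y]\circ R'[y..y'']\circ R''[y''..z_{R''}]$, which contributes up to $3\sqrt{Mn}$ to the detour; with your threshold of $2\sqrt{Mn}$ this no longer beats $w(R)$. Your proposed case split on whether $v^*$ lies before or after $y''$ on $R''$ does not rescue this: when $R$ is the longest and simple, the segment $R''[y''..z_{R''}]$ cannot revisit $v^*$ at all (it lies on the prefix $P[s..z_{R''}]$), so in that branch the single two-stub splice already works without any case analysis, and in the genuinely hard middle-stub branch there is no way to avoid the third stub.

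The paper therefore raises the ``long'' threshold to $d(s,t)+3\sqrt{Mn}$ (so $3\sqrt{Mn}$ short representatives), then case-splits on \emph{which} of the three length-ordered representatives has the doubly-intersecting stub: if it is an extreme one, the two-stub splice from $z_R$ to $z_{R''}$ exists directly; if it is the middle one, the three-stub chain gives length at most $d(s,t)+3\sqrt{Mn}<w(R)$. Either way one obtains a path avoiding every vertex strictly between $z_R$ and $z_{R''}$, in particular all vertices represented by $R$. The resulting count is $3\sqrt{Mn}+2\sqrt{Mn}=5\sqrt{Mn}$.
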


\textbf{\textsf{Constant query time.}}
If we could query the break point of an edge in the far case II in $O(1)$ time,
our Single-Source DSO had a constant overall query time.
However, since the break points also depend on the target $t$, hard-coding them
would yield a $O(n^2)$ space solution, which is wasteful for $M = o(n)$.
Instead, we improve the analysis in \autoref{lem:far_case_II_edge_failures}.
It hardly made any use of the fact that the pivot $x$ is among the last $\sqrt{n}$ vertices
on the $s$-$t$-path in $T_s$ and considered only a single target.
We now strike a balance between selecting more pivots
and grouping targets with the same assigned pivot together.

\begin{restatable}{lemma}{constantquerytime}
\label{lem:constant_query_time}
	There is an algorithm that,
	when given oracle access to the replacement distances for failing edges (vertices),
	preprocesses in $O(n^2)$ time a Single-Source DSO for edge (vertex) failures
	taking $O(\min \{M^{1/3} \nwspace n^{5/3}, n^2 \})$ space
	and having constant query time.
\end{restatable}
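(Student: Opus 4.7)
The plan is to adapt the reduction of \autoref{thm:reduction} so that the Far Case II runs in strict $O(1)$ time by hard-coding its break-point information, at the cost of a somewhat larger space bound. I will introduce a parameter $\ell$ and generalize the greedy construction of \autoref{lem:set_of_pivots} to compute a pivot set $D$ of size $O(n/\ell)$ such that every $s$-$t$-path in $T_s$ has a pivot $D[t]$ among its last $\ell$ vertices. The Near Case is then handled by hard-coding $d(s,t,e)$ for every target $t$ and every edge $e$ on the $\le\ell$-vertex tail $P(D[t],t)$, using $O(n\ell)$ space with $O(1)$ query via a direct-access table keyed by $(t,e)$. For the Far Case I, I would store $d(s,x,e)$ for every pivot $x$ and every edge $e \in P(s,x)$ and answer in $O(1)$ as $d(s,x,e)+d(x,t)$, using $O(n|D|) = O(n^2/\ell)$ space.

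The central new step is the Far Case II. To avoid the binary search of \autoref{thm:reduction}, I will extend the stub argument of \autoref{lem:far_case_II_edge_failures} from a single target to the whole group $T_x = \{t : D[t]=x\}$ of targets assigned to a pivot $x$. Observe that all targets in $T_x$ share the prefix $P(s,x)$, so for any two long representatives $R$ and $R'$ (for possibly distinct $t_1,t_2 \in T_x$ and edges $e_1,e_2 \in P(s,x)$) whose stubs intersect at a vertex $y$, one can splice $P(s,x)[s..z_R]$, the sub-stub $R[z_R..y]$, the reversed sub-stub $R'[y..z_{R'}]$ (the graph is undirected), and $P(s,t_1)[z_{R'}..t_1]$ into a candidate path. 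When $e_1$ lies inside the window $P(s,x)[z_R..z_{R'}]$, this spliced path avoids $e_1$ and has length at most $d(s,t_1)+2B$, which is strictly less than $w(R)$ for a long representative $R$; this is the multi-target analogue of the contradiction in \autoref{lem:far_case_II_edge_failures}. The resulting bound controls the number of break-point entries per pivot, and after hard-coding these in direct-access tables indexed by $(t,\text{position on } P(s,x))$ and summing over all pivots using $\sum_x |T_x| = n$, I obtain $O(M^{1/3}\nwspace n^{5/3})$ total Far Case II space with $O(1)$ query.

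Balancing the three contributions $n\ell$, $n^2/\ell$, and the Far Case II bound by choosing $\ell$ of order $n^{1/3}$ (with appropriate $M$-dependent corrections) yields total space $O(M^{1/3}\nwspace n^{5/3})$; when $M \ge n$ this quantity already exceeds $n^2$, and the trivial tabulation in \autoref{thm:reduction} provides the $O(n^2)$ alternative that explains the minimum in the statement. Preprocessing remains $O(n^2)$ because the construction merely scans through the given relevant distances. The vertex-failure version substitutes \autoref{lem:far_case_II_vertex_failures} in the multi-target stub analysis and is otherwise identical. The main obstacle will be the multi-target stub argument: the single-target proof of \autoref{lem:far_case_II_edge_failures} crucially relies on both representatives terminating at the same target so that their spliced combination is a valid candidate path; when $t_1 \ne t_2$ the spliced path follows a different tail past the pivot, and one must carefully treat the case where $e_1$ lies outside the window $P(s,x)[z_R..z_{R'}]$, possibly via a secondary splicing or a separate short-versus-long representative accounting that exploits the grouping of targets.
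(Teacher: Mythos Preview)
Your plan is the paper's approach: parameterize the pivot spacing as $L$, balance Near, Far I, and Far II at $L=(n/M)^{1/3}$, and bound the Far-II representatives over the whole group $V_x=\{t:D[t]=x\}$ via a stub argument. The obstacle you flag at the end is real and the fix is precisely your ``short-versus-long accounting that exploits the grouping'': the paper observes that any two targets in $V_x$ lie within distance $2ML$ of one another (through $x$), so a representative whose detour has length at most $2ML$ can be elongated to reach \emph{every} $t'\in V_x$ with total length at most $d(s,t')+5ML$; hence ``long'' means detour exceeding $2ML$, stubs on $L$ vertices are pairwise disjoint, and $|\Rep_x|=O(n/L+ML^2)=O(M^{1/3}n^{2/3})$. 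This also disposes of the case $e_1\notin P(s,x)[z_R..z_{R'}]$ that worried you: the argument is not that the spliced path avoids $e_1$, but that intersecting stubs would yield a detour of length at most $2ML$, contradicting longness of one of the two representatives.

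Where your write-up needs correction is the Far-II storage. A direct-access table ``indexed by $(t,\text{position on }P(s,x))$'' has size $|V_x|\cdot|E(P(s,x))|$ per pivot, which can be $\Theta(n^2)$ in total; the identity $\sum_x|V_x|=n$ does not by itself bring this down to $O(M^{1/3}n^{5/3})$. The paper instead stores, at each of the $|\Rep_x|$ group break points on $P(s,x)$, all $O(L)$ per-target replacement distances, and equips every edge of $P(s,x)$ with a single pointer to its nearest break point toward $s$. This gives $O(L\cdot|\Rep_x|+n)=O(n)$ space per group and $O(n\cdot n/L)=O(M^{1/3}n^{5/3})$ overall, with a constant-time query that follows one pointer and reads one entry.
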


\textbf{\textsf{Path-reporting oracles.}}
We can adapt our Single-Source DSOs to also report the replacement paths using the same space.
However, to do so it is not enough to have access to the replacement distances
as the paths depend on the structure of $G$.
Also, making the oracle path-reporting increases in preprocessing time,
which now also depends on $m$.

\begin{restatable}{lemma}{pathreporting}
\label{lem:path-reporting}
	With access to $G$,
	there is a path-reporting Single-Source DSO for edge (vertex) failures
	with $O(\min \{m \sqrt{Mn}, mn \} + n^2)$ preprocessing time and either
	$O(\min \{M^{1/2} \nwspace n^{3/2}, n^2 \})$ space and $\Otilde(1)$ query time per edge,
	or $O(\min \{M^{1/3} \nwspace n^{5/3}, n^2 \})$ space and $O(1)$ query~time.
\end{restatable}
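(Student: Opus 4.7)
Starting from the distance-only oracles of \autoref{thm:reduction} and \autoref{lem:constant_query_time}, which are preprocessed in $O(n^2)$ time and already deliver in $\Otilde(1)$ (respectively $O(1)$) time the case (near, far~I, or far~II) of a query $(t,e)$, the pivot $D[t]$, the relevant break point, and the representative $R$ responsible for the replacement distance, I only need to output the detour of $P(s,t,e)$ edge by edge. Indeed, a replacement path decomposes canonically into a prefix of $P(s,t)$ in $T_s$, a detour edge-disjoint from $P(s,t)$, and a suffix of $P(s,t)$, and the prefix and suffix can be walked in $O(1)$ per edge via parent pointers in $T_s$. Producing the detour is what forces the augmented preprocessing to actually use $G$ itself, and accounts for the dependence on $m$ in the claimed running time.

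For the large-weight regime $M\geq n$ the budget $O(n^2)$ permits the simplest solution: compute and store a shortest-path tree $T_e$ of $G - e$ rooted at $s$ for each of the $n-1$ tree edges $e\in T_s$. This uses $n$ Dijkstra calls in $\Otilde(m)$ each, totalling $\Otilde(mn)$ time, and any $P(s,t,e)$ is then recovered as the $s$-to-$t$ path of $T_e$, emitted in $O(1)$ per edge by following parent pointers. For the low-weight regime $M<n$ such per-edge trees are too expensive. Near-case and far-case-I queries are dispatched by $O(n^{3/2})$ stored pointers to detour entries (one per near-case pair and one per pivot-edge pair), with the latter further delegating recursively to the stored answer at $(D[t],e)$. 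For far case~II the base oracle already provides $z_R$, so only the detour body of each representative remains to be encoded.

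The central structural claim is that, across all targets, these far-case-II detour bodies can be represented jointly in $O(M^{1/2}n^{3/2})$ space. I plan to establish it by extending the stub-intersection argument of \autoref{lem:far_case_II_edge_failures}: detours of length at most $2\sqrt{Mn}$ are stored verbatim; longer detours must pass through a vertex of a deterministic hitting set $L\subseteq V$ of size $O(\sqrt{Mn})$, and their interior is then read off a precomputed shortest-path tree rooted at the relevant landmark of $L$. Building the $|L|$ trees costs $\Otilde(m\sqrt{Mn})$ time, matching the preprocessing bound, and a single detour lookup then takes $\Otilde(1)$ per reported edge. The constant-query-time variant follows by plugging the same detour machinery into the denser two-level pivot hierarchy of \autoref{lem:constant_query_time}, which trades each $\Otilde(1)$ binary search for an $O(1)$ table lookup at the cost of raising the space to $O(M^{1/3}n^{5/3})$. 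The main obstacle is exactly this structural compression: a naive per-representative storage costs $\Theta(M^{1/2}n^{5/2})$, so the argument must crucially exploit the undirectedness of $G$ to swap subsegments of two intersecting detours without lengthening them, as in the proof of \autoref{lem:far_case_II_edge_failures}, in order to amortize storage across different representatives and across different targets; tuning the density of $L$ against the length threshold for verbatim storage is the delicate parameter choice to get right.
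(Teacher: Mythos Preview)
Your approach differs substantially from the paper's, and the obstacle you flag is a genuine gap. The paper does not attempt to store or reconstruct detour bodies at all. Instead, it augments the distance oracle with a single \emph{predecessor pointer} $\last(t,e)$---the parent of $t$ in a shortest-path tree of $G-e$---for exactly the same pairs $(t,e)$ at which a distance is already stored: the near-case pairs and, in the far case~II, the pair $(t,e^*_\ell)$ for each of the $O(\sqrt{Mn})$ distinguished edges. Each $\last(t,e)$ is computed in $O(m_t)$ time by scanning the edges incident to $t$ for a neighbor $y$ with $d(s,y,e)+w(\{y,t\})=d(s,t,e)$; summed over all targets and stored pairs this gives the $O(m\sqrt{Mn})$ term. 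In the far case~I one simply has $\last(t,e)=\last(t)$, the parent in $T_s$. The replacement path is then emitted \emph{recursively}: output the edge $\{\last(t,e),t\}$ and query $(\last(t,e),e)$, which may land in a different case for the new target. This adds only $O(1)$ words per stored distance and one oracle call per reported edge, so the space and query-time bounds of the distance-only oracle carry over unchanged. The $O(M^{1/3}n^{5/3})$ variant then explicitly builds all trees $T_{s,e}$ in $O(n^2)$ time using these predecessors, after which the divergence vertices $z_R$ for each group can be located by a linear scan.

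Your plan, by contrast, tries to materialize each far-case-II detour, and both parts of it are problematic. Storing short detours verbatim costs up to $\Theta(Mn)$ words per target (as many as $2\sqrt{Mn}$ short representatives, each with up to $2\sqrt{Mn}$ edges), and the stub-disjointness of \autoref{lem:far_case_II_edge_failures} concerns only \emph{long} representatives, so it provides no mechanism to amortize the short ones across targets; you identify this as the main obstacle but do not resolve it. There is also an unflagged issue with the long detours: reading the interior off a shortest-path tree $T_\ell$ rooted at a landmark in $G$ need not avoid $e$, since the $\ell$-to-$t$ path in $T_\ell$ is a fixed tree path computed without knowledge of which edge will fail, and it may traverse $e$ even when some other shortest $\ell$-$t$-path in $G$ does not. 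The paper's predecessor-recursion scheme sidesteps both difficulties by never materializing more than one edge at a time.
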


\textbf{\textsf{Fault-tolerant shortest path tree oracles.}}
We are going one step further in the direction of fault-tolerant subgraphs,
see for example~\cite{Bilo18FTApproximateSPT,PaPeP16}.
We enable our oracle to report, for any failing edge or vertex,
the whole fault-tolerant single-source shortest path tree.
Compared to the path-reporting version, we make sure to return every tree edge only once.

\begin{restatable}{lemma}{faulttoleranttree}
\label{lem:fault-tolerant_tree}
	With access to $G$,
	there is a data structure with $O(\min \{m \sqrt{Mn}, mn \} + n^2)$ preprocessing time, taking
	$O(\min \{M^{1/2} \nwspace n^{3/2
	}, n^2 \})$ space 
	that, upon query $e \in E$ (respectively, $v \in V$),
	returns a shortest path tree for $G-e$ (respectively, $G-v$) rooted in $s$ in time $O(n)$.
\end{restatable}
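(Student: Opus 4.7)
The plan is to extend the path-reporting Single-Source DSO of \autoref{lem:path-reporting} with a precomputed parent table so that an entire replacement shortest path tree can be emitted in time $O(n)$, which is linear in the output size. Given a failing edge $e = \{u, v\}$ with $u$ closer to $s$ in $T_s$, let $T_v$ denote the subtree of $T_s$ rooted at $v$. Every vertex of $V \setminus T_v$ keeps its $T_s$-parent, so the corresponding $n - |T_v|$ tree edges can be produced by a DFS on $T_s$ that skips $T_v$. The nontrivial work is to assign a parent in a shortest path tree of $G - e$ to each of the $|T_v|$ affected vertices.

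The key observation, extending \autoref{lem:structure_of_reps} and mirroring the analysis of \autoref{lem:far_case_II_edge_failures}, is that within a single break-point segment of $P(s, t)$ one may choose the \emph{same} representative replacement path for every edge of that segment; consequently a canonical parent $p^{(e)}(t)$ depends only on $t$ and on the index of the segment on $P(s,t)$ that contains $e$. For each target $t$ and each of its break-point segments, I would store the last vertex of the corresponding representative just before $t$. By \autoref{lem:far_case_II_edge_failures}, the table has $O(\sqrt{Mn})$ entries per target and $O(M^{1/2} \nwspace n^{3/2})$ entries overall; when $M \ge n$ we instead fall back to storing each replacement tree explicitly in $O(n^2)$ space. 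Filling the table piggybacks on the enumeration of representative paths carried out by the preprocessing of \autoref{lem:path-reporting}, fitting into the claimed $O(\min\{m \sqrt{Mn}, mn\} + n^2)$ time. The vertex-failure variant is identical, using \autoref{lem:far_case_II_vertex_failures} in place of \autoref{lem:far_case_II_edge_failures}.

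At query time, a single DFS of $T_v$ visits every affected target $t$ and looks up its new parent in the stored table. Because the failing edge $e$ occupies a fixed position on the common prefix $P(s, v)$ shared by all $t \in T_v$, a single initial computation of the segment of $e$ on this common part, combined with per-target segment offsets, lets us retrieve the correct entry in $O(1)$ amortized per vertex, totalling $O(n)$ for the entire tree. The main obstacle will be ensuring that the collection $\{p^{(e)}(t) : t \in T_v\}$ actually forms a tree rather than merely a set of individually consistent shortest paths; I would address this by selecting representative replacement paths consistently across all targets simultaneously (for instance by lexicographically tie-breaking on the replacement distances $d(s, \cdot, e)$ of the candidate parents), so that the union of the stored pointers is acyclic and hence constitutes a genuine shortest path tree of $G - e$.
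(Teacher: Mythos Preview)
Your storage plan and preprocessing are in the right ballpark, but the claimed $O(n)$ query time has a real gap. The break-point segmentation of $P(s,t)$ is defined \emph{per target}: two distinct $t,t' \in T_v$ can have completely unrelated break points even though $e$ sits at the same physical position on their common prefix $P(s,v)$. Hence the phrase ``the segment of $e$ on this common part'' is not well-defined, and ``per-target segment offsets'' does not give you a constant-time index lookup; you are back to the binary search of the basic oracle, paying $O(\log n)$ per affected vertex and $\Otilde(n)$ overall rather than $O(n)$. There is no evident amortization across the DFS, because moving from a parent to a child in $T_s$ need not shift the relevant segment index monotonically or by a bounded amount.

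The paper avoids this by segmenting \emph{per pivot} instead of per target. For each pivot $x$ with $h$ assigned targets, it collects all $O(h\sqrt{Mn})$ far-case-II divergence vertices along $P(s,x)$ and cuts $P(s,x)$ into $O(\sqrt{Mn})$ segments each containing $\Theta(h)$ of them in aggregate. Every segment then stores, for each of the $h$ targets, one predecessor value (borrowing the last one from the previous segment if none occurs). At query time one identifies, for each of the $O(\sqrt{n})$ pivots, the single segment containing $e$ and reads off the $h$ stored predecessors in $O(h)$ time; summing over pivots gives $O(n)$. This aggregation is the missing ingredient in your proposal. Your final remark about tree-consistency is a legitimate concern, but note that it applies equally to the paper's construction; the paper sidesteps it by defining $\last(t,e)$ directly as the parent in a fixed shortest-path tree $T_{s,e}$ of $G-e$.
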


\section{Space Lower Bound}
\label{sec:space_lower_bound}

We now present an information-theoretic lower bound showing
that the space of the Single-Source DSO
resulting from our reduction is optimal up to the word size.

\spacelowerbound*

\begin{proof}
	Let $M' = \min \{M,n\}$.
	We give an incompressibility argument
	in that we show that one can store any binary $n \times n$ matrix
	$X$ across $\sqrt{n/M'}$ Single-Source DSOs.
	Not all of them can use only $o( \sqrt{M'} \nwspace n^{3/2})$ bits of space
	as otherwise this would compress $X$ to $o(n^2)$ bits.
	We create graphs $G_1, G_2, \dots, G_{\sqrt{n/M'}}$.
	Each of them has $O(n)$ vertices and maximum edge weight $M$.
	The graph $G_k$ will be used to store the $\sqrt{M' n}$ rows of $X$
	with indices from $(k-1)\sqrt{M'n} + 1$ to $k \sqrt{M'n}$.
	
	We first describe the parts that are common to all of the $G_k$.
	Let $A = \{a_1, \dots, a_n\}$ and $B = \{b_1, \dots, b_n\}$ be two sets of $n$ vertices each,
	we connect $a_i$ and $b_j$ by an edge of weight 1 iff $X[i,j] = 1$.
	There are no other edges between $A$ and $B$.
	We also add a path $P = (v_1, \dots, v_{\sqrt{M'n}})$ all of whose edges have weight $1$.
	The vertex $s = v_{\sqrt{M'n}}$ is the source in each graph.
	Also, let $\{v_0, v_1\}$ be an edge of weight $M$,
	it serves to raise the maximum edge weight to $M$, if needed.
	Specifically in $G_k$ and for each $1 \le i \le \sqrt{M'n}$,
	we connect the vertex $v_i$ with $a_{(k-1)\sqrt{M'n} + i}$ by a path $P_{k,i}$
	of total weight $2i-1$.
	Due to the edge weights, we can make the path $P_{k,i}$ so that it
	uses at most $2i/M'$ edges and thus so many new vertices.
	In total, $G_k$ has at most 
	$2n + (\sqrt{M'n}+1) + \sum_{i=1}^{\sqrt{M'n}} \frac{2i}{M'} = O(n)$ vertices
	due to $M' \le n$.
	
	Let $e_i$ denote the edge $\{v_{i-1}, v_i\}$ on $P$.
	We claim that $X[(k{-}1)\sqrt{M'n} + i, \nwspace j] = 1$
	if and only if the replacement distance in $G_k$
	is $d_{G_k}(v_{\sqrt{M'n}}, b_j, e_i) = \sqrt{M'n} + i$.	
	We assume $k = 1$, larger $k$ follow in the same fashion.
	Observe that one has to go through a vertex in $A' = \{a_i, a_{i+1}, \dots, a_{\sqrt{M'n}}\}$ to reach $b_j$ from the source $s = v_{\sqrt{M'n}}$.
	Conversely, $A'$ is the only part of $A$ that
	is reachable from $s$ in $G_1\,{-}\,e_i$ without using any vertex of $B$.
	
	If there is no replacement path from $s$ to $b_j$ avoiding $e_i$,
	we have $d_{G_1}(s, b_j, e_i) = \infty$ and $X[i',j] = 0$
	for all $i \le i' \le \sqrt{M'n}$, as desired.
	Let thus $P(s,b_j,e_i)$ be a replacement path 
	and further $a_{i^*}$ its first vertex that is in $A$ 
	(the one closest to the source $s$).
	Therefore, $i^* \ge i$ and $P(s,b_j,e_i)$
	has the form $(v_{\sqrt{M'n}}, \dots, v_{i^*}) \circ P_{1,i^*} \circ P'$
	for some $a_{i^*}$-$b_j$-path $P'$.
	It holds that $d_{G_1}(v_{\sqrt{M'n}}, b_j, e_i)
		= (\sqrt{M'n} - i^*) + (2i^*-1) + w(P') = \sqrt{M'n} + i^* - 1 + w(P') \ge \sqrt{M'n} + i$.
	Equality holds only if $i^* = i$ and $w(P) = 1$,
	thus $a_i$ must be a neighbor of $b_j$ and $X[i,j] = 1$ follows;
	otherwise, the replacement distance is strictly larger.
\end{proof}

\section{Derandomizing Single-Source Replacement Paths Algorithms}
\label{sec:derandomize}

In this section, we derandomize the combinatorial $\Otilde(m \sqrt{n} + n^2)$ time
algorithm for SSRP of Chechik and Cohen~\cite{ChechikCohen19SSRP_SODA}
obtaining the same asymptotic running time.
In \autoref{sec:derand-grandoni}, we also derandomize the algebraic SSRP algorithm of Grandoni and Vassilevska Williams.
When combined with the reduction of \autoref{sec:reduction_algorithm_to_DS},
they give deterministic Single-Source DSOs.

Suppose the base graph $G = (V,E)$ is unweigted.
It follows from a result by 
Afek~et~al.~\cite[Theorem~1]{Afek02RestorationbyPathConcatenation_journal}
that for every target $t \in V$, edge $e \in E$, and replacement path $P(s,t,e)$ in $G - e$,
there exists a vertex $q$ on $P(s,t,e)$ 
such that both subpaths $P(s,t,e)[s .. q]$ and $P(s,t,e)[q .. t]$
are shortest paths in the original graph $G$. 
Computing the vertex $q$ directly for each pair $(t,e)$ is too expensive.
Instead, the algorithm in~\cite{ChechikCohen19SSRP_SODA} employs a random hitting set for the subpaths.
The only randomization used in~\cite{ChechikCohen19SSRP_SODA} is
to sample every vertex independently 
with probability $O((\log n)/\sqrt{n})$ to create a set $B \subseteq V$ of so-called \emph{pivots}.
The set $B$ contains $\Otilde(\sqrt{n})$ such pivots w.h.p.
The correctness of the algorithm relies on the following important property. 
With high probability,
there exists a vertex $x \in B \cup \{ s \}$ before $q$ on $P(s,t,e)$
and a vertex $y \in B \cup \{t\}$ after $q$
such that the subpath of $P(s,t,e)[x..y]$ has length only $\Otilde(\sqrt{n})$.
Here, we describe how to compute the set $B$ deterministically with the same properties.
We defer the proof of correctness of the algorithm to \autoref{appendix:derand-soda-proof}. 

We derandomize the vertex selection using an approach similar to the one of 
Alon, Chechik, and Cohen~\cite{AlonChechikCohen19CombinatorialRP}.
Given paths $D_1, \ldots, D_k$, where each contains at least $L$ vertices,
the folklore greedy algorithm constructs a hitting set of size $\Otilde(n/L)$,
by iteratively covering the maximum number of unhit paths, in $\Otilde(kL)$ time.
The challenge is to quickly compute a suitable set of paths.
We construct three systems of path $\mathcal{L}_1$, $\mathcal{L}_2$, and $\mathcal{L}_3$
to obtain $B$.

We prepare some notation.
For a rooted tree $T$, a vertex $v \in V(T)$, and an integer parameter $L \ge 0$, let $\Last_{T, L}(v)$ be the subpath containing the last $L$ edges of the path in the tree $T$ from the root to $v$,
or the whole path if it has length less than $L$.
Let $|\Last_{T_s, L}(v)|$ denote the number of edges on the path.

\begin{itemize}
	\item \textbf{\textsf{Paths $\mathcal{L}_1$ and hitting set $B_1$.}}
	Set $\mathcal{L}_1$ contains the last $\sqrt{n}/2$ edges of every path in $T_s$,
	$\mathcal{L}_1 = \{ \Last_{T_s,  \sqrt{n}/2}(v) \ | \ 
		v \in V, \ |\Last_{T_s, \sqrt{n} / 2}(v)| = \sqrt{n}/2  \}$.
	As an alternative, we can also use \autoref{lem:set_of_pivots}
	to compute in $\Otilde(n)$ time a deterministic hitting set $B_1$ for $\mathcal{L}_1$ 
	of size $2\sqrt{n}$. 	
	\item \textbf{\textsf{Paths $\mathcal{L}_2$ and hitting set $B_2$.}}
	We run a breadth-first search from every vertex $x \in B_1$ 
	to compute the shortest paths trees $T_x$ rooted in $x$,
	and define the second set to be
	$\mathcal{L}_2 = \{ \Last_{T_x, \sqrt{n} / 2}(y) \ | \ 
		x  \,{\in}\, B_1 ,y \,{\in}\, V, \ |\Last_{T_x, \sqrt{n} / 2}(y)| = \sqrt{n} / 2 \}$
	Greedy selection computes a hitting set $B_2$ for $\mathcal{L}_2$
	of size $\Otilde(\sqrt{n})$ in total time $\Otilde(n^2)$.
\end{itemize}

Before we can define $\mathcal{L}_3$, we need additional notation.
Let $e = \{u,v\}$ be an edge in $T_s$ such that $u$ is closer to $s$ than $v$
and let $T_{s,v}$ be the subtree of $T_s$ rooted in $v$.
Let further $G_e = (V_e, E_e, w_e)$ be a weighted graph
such that $V_e$ contains $s$ and the vertices $x \in V(T_{s,v})$
with $d(s,x) \le d(s,v) + 4\sqrt{n}$. 
The edges of $G_e$ that are inside of $T_{s,v}$ are the same as in $G$,
and additionally every shortest path $P(s,x)$ from $s$ to every vertex $x \in V_e$ such that $P(s,x)$ passes only through vertices outside of $V_e$ (except for its first vertex $s$ and its last vertex $x \in V_e$) is  replaced with a shortcut edge $(s,x)$ whose weight is equal to the length $d(s,x)$ of the corresponding shortest path $P(s,x)$, preserving the original paths distances (using weights). 
The SSRP algorithm in~\cite{ChechikCohen19SSRP_SODA} 
computes Dijkstra's algorithm from $s$ in each $G_e$.
We let $T_{G_e}$ denote the resulting shortest path tree.
For more details, see \autoref{subapp:constructing_G_e}.

\begin{itemize}
	\item \textbf{\textsf{Paths $\mathcal{L}_3$ and hitting set $B_3$.}}
	The third set
	$\mathcal{L}_3 := \{ \Last_{T_{G_e},  \sqrt{n} / 2 }(x) \ | \ x \in V,
		e \,{\in}\, E(T_s), \\ |\Last_{T_{G_e}, \sqrt{n} / 2 }(x)| = \sqrt{n} / 2 \}$
	contains $O(n^{3/2})$ paths as every vertex $x \in V$ belongs to at most $4\sqrt{n}$ graphs $G_e$.
	We thus get a hitting set $B_3$ of size $\Otilde(\sqrt{n})$ in time $\Otilde(n^2)$.
\end{itemize}

\noindent
The deterministic set $B = B_1 \cup B_2 \cup B_3$
can then be used as pivots in the SSRP algorithm. 

\section{Subquadratic Preprocessing on Sparse Graphs}
\label{sec:subquadratic_preprocessing}

Finally, we show how to obtain a Single-Source DSO with subquadratic preprocessing  
at least on sparse graphs. 
In order to prove \autoref{thm:subquadratic_preprocessing},
we present an algorithm running in time 
$\Otilde(M^{7/8} \nwspace m^{1/2} \nwspace n^{11/8} +\frac{M^{1/8} \  m^{3/2}}{n^{3/8}})$.
If $m = O(M^{3/4} \nwspace n^{7/4})$, then the dominating term is 
$\Otilde(M^{7/8} \nwspace m^{1/2} \nwspace n^{11/8})$.
If the graph even satisfies $m=O(n^{5/4-\varepsilon}/M^{7/4})$ for any $\varepsilon > 0$,
then the preprocessing time is $\Otilde(n^{2-\varepsilon/2})$.
We explain the main part of the randomized algorithm that allows us to design the Single-Source DSO.
The algorithm is easily adaptable to deal with vertex failures as well.
The proofs and some of the technical details 
are deferred to \autoref{app:subquadratic} due to the lack of space. 
In the following, we assume that the graph is indeed sparse, 
that is, $m=O(n^{5/4-\varepsilon}/M^{7/4})$.
%
%
The next sampling lemma is folklore, see e.g.~\cite{GrandoniVWilliamsFasterRPandDSO_journal,RodittyZwick12kSimpleShortestPaths}.

\begin{restatable}{lemma}{sampling}\label{lemma:sampling}
	Let $H$ be a graph with $n$ vertices, $c > 0$ a positive constant,
	and $L$ such that $L \ge c \ln n$.
	Define a random set $R \subseteq V$ by sampling each vertex to be in $R$ independently
	with probability $(c \ln n)/L$.
	Then, with probability at least $1-\frac{1}{n^c}$, the size of $R$ is $\Otilde(n/L)$.
	Let further ${\mathcal P}$ be a set of $\ell$ simple paths in $H$, each of which spans at least $L$ vertices.
	Then, with probability at least $1-\frac{\ell}{n^c}$, we have $V(P) \cap R \neq \emptyset$ for every $P \in {\mathcal P}$.
\end{restatable}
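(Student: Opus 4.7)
The plan is to prove the two statements separately using elementary concentration and union-bound arguments, both of which are textbook techniques for random sampling.

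For the bound on $|R|$, I would note that $|R|$ is a sum of $n$ independent Bernoulli random variables with expectation $\mathbb{E}[|R|] = n \cdot (c \ln n)/L = \Otilde(n/L)$. A multiplicative Chernoff bound then yields that $|R|$ exceeds, say, twice its expectation only with probability at most $\exp(-\Omega(\mathbb{E}[|R|])) = \exp(-\Omega(c \ln n)) \le n^{-c}$ (after adjusting the leading constant, which is absorbed in the $\Otilde(\cdot)$ notation). The hypothesis $L \ge c \ln n$ ensures that the expectation is at least a constant so that Chernoff gives the claimed tail bound rather than a vacuous statement.

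For the hitting property, I would fix an arbitrary path $P \in \mathcal{P}$ with $|V(P)| \ge L$. Since the sampling decisions are independent across vertices, the probability that no vertex of $P$ lies in $R$ is
\[
\left(1 - \frac{c \ln n}{L}\right)^{|V(P)|} \le \left(1 - \frac{c \ln n}{L}\right)^{L} \le e^{-c \ln n} = \frac{1}{n^c},
\]
using the standard inequality $1-x \le e^{-x}$. A union bound over the $\ell$ paths in $\mathcal{P}$ then gives a total failure probability of at most $\ell/n^c$, as required.

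The two events can be combined by one further union bound if needed. I do not expect any real obstacle here; the only mildly delicate point is keeping the constants consistent, in particular making sure that the same $c$ is used both to define the sampling probability and to parameterize the failure probability. Since the $\Otilde$ notation hides polylogarithmic factors, any constant slack in the Chernoff bound is harmless. The proof should fit in a few lines once these ingredients are written out.
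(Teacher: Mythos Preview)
Your proposal is correct and matches the paper's proof essentially line for line: a Chernoff bound for the size of $R$, then the estimate $(1-p)^{L} \le e^{-pL} = n^{-c}$ together with a union bound for the hitting property. One small quibble: the hypothesis $L \ge c\ln n$ only guarantees that the sampling probability $(c\ln n)/L$ is at most $1$; the lower bound on $\mathbb{E}[|R|]$ you need for the Chernoff tail actually comes from the implicit constraint $L \le n$ (since the paths are simple), not from $L \ge c\ln n$.
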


We employ random sampling to hit one shortest path on at least 
$L=\frac{n^{11/8}}{M^{1/8} \  m^{1/2}}$ edges for every pair of vertices.
Any vertex is included in the set $R$ of \emph{random pivots}
independently with a probability of $(3 \ln n)/L$.
We also include the source $s$ in $R$ to hit all short $s$-$t$-paths. 
By \autoref{lemma:sampling}, we have
$|R|=\Otilde(n/L)=\Otilde\big(\frac{M^{1/8} \ m^{1/2}}{n^{3/8}} \big)$ w.h.p. 
Randomization is used here since it takes too long 
to handle the $O(n^2)$ paths explicitly.

We additionally construct a set $D$ of (possibly different, regular) \emph{pivots}
that are used to classify replacement paths into near case, far case I, and far case II 
similar to \autoref{sec:reduction_algorithm_to_DS}.
The set $D$ is computed deterministically using \autoref{lem:set_of_pivots}, where we select a pivot every $\sqrt{n}$ levels. 
For a target vertex $t \neq s$, the \emph{proper pivot} of $t$
shall be that pivot $x \in D$ on the path $P(s,t)$ in $T_s$ that is closest to $t$
but satisfies $d(x,t)\geq 4ML$, or $x = s$ if there is no such pivot.
We let $D_1[t]$ denote the proper pivot of $t$
and $D_2[t] = D_1[ \nwspace D_1[t] \nwspace]$, provided that $D_1[t] \neq s$.

For every random pivot $\chi \in R$ and every edge $e$ on the path $P(s,\chi)$, we compute $d(s,\chi,e)$  in $\Otilde(m)$ time per pivot using the algorithm of Malik, Mittal, and Gupta~\cite{MaMiGu89}.
In the same time bound, we also get the vertex of $P(s,\chi)$ at which $P(s,\chi,e)$ diverges and we assume that $P(s,\chi,e)$ represents the path that diverges from $P(s,\chi)$ at a vertex that is as close as possible to $s$.\footnote{%
	The replacement path $P(s,\chi,e)$ computed in~\cite{MaMiGu89} is obtained 
	as the concatenation of a subpath $P(s,u)$ of $T_s$, an edge $\{u,v\}$ of $G-e$, 
	and a subpath $P(v,\chi)$ in $T_{\chi}$ (the shortest paths tree of $G$ rooted at $\chi$).
}
For each pivot $x \in D$ and every $e$ on $P(s,x)$,
we also compute $d(s,x,e)$. 
This takes total time 
$\Otilde(m \nwspace (|D| + |R|))
	= \Otilde\big(m n^{1/2} + \frac{M^{1/8} \ m^{3/2}}{n^{3/8}} \big)
	= \Otilde\big( \frac{m^{1/2} \nwspace n^{9/8 - \varepsilon/2}}{M^{7/8}} + \frac{M^{1/8} \ m^{3/2}}{n^{3/8}} \big)$
and allows us to answer replacement distance queries 
in $O(1)$ time if the target is in $D \cup R$.

We are left to handle non-pivot targets.
Fix a $t \in V{\setminus}(D \cup R)$ and let $x_1=D_1[t]$, and $x_2=D_2[t]$. 
We use similar cases as before.

\begin{itemize}
\item \textbf{\textsf{Near case.}} The edge $e$ is on $P(s,t)[x_2 ..t] = P(x_2,t)$.
\item \textbf{\textsf{Far case I.}} The edge $e$ is on $P(s,t)[s..x_2] = P(s,x_2)$
	and there is a replacement path $P(s,t,e)$ that passes through $x_2$.
\item \textbf{\textsf{Far case II.}} The edge $e$ is on $P(s,x_2)$ 
	and there is no replacement path $P(s,t,e)$ that passes through $x_2$.
\end{itemize}

\noindent
In the remainder, we show how to efficiently compute the replacement distances in the far case II
as previously this was the only case with quadratic run time.
The technical details of the near case are reported in \autoref{app:subquadratic}.
A shortest path tree of $G$
and the replacement distances to targets in $D$ are enough to handle the far case I,
see \autoref{sec:reduction_algorithm_to_DS}.

Since in the far case II the pivot $x_2$ lies on $P(s,t)$,
we can assume $P(s,t)$ to have length $d(s,t) \ge d(x_2,t) \ge 4ML$ and at least $4L$ edges.
In the following, we use different indexing schemes pointing
to objects and distances related to $P(s,t)$,
all of them are ordered from the source $s$ to pivot $x_2$.
First, we denote by $R_1,\dots, R_k$ the $k$ representative replacement paths 
for edges in the far case II.
We have $k \leq 3\sqrt{Mn}$ by \autoref{lem:far_case_II_edge_failures}. 
Let the \emph{distinguished} edge $e^*_\ell \in P(s,t)$ be the one
that is closest to $s$ such that $R_\ell$ represents $e^*_\ell$, i.e., $R_\ell$ is a replacement path in $G-e^*_\ell$ and we fall in far case II.
Set $d_\ell = w(R_\ell)$.
As no replacement path from $s$ to $t$ for edge $e_\ell^*$ uses vertex $x_2$,
we have $d_\ell < d(s,x_2,e_\ell^*)+d(x_2,t)$.
The distinguished edges $e^*_1, \dots, e^*_k$ are ordered by increasing distance from $s$,
this implies $d_1 > \dots > d_k$ for their replacement distances, see \autoref{sec:reduction_algorithm_to_DS}.
Furthermore, let $N$ be the number of \emph{all} edges (of the far cases I and II)
on the path $P(s,x_2) = (e_1, e_2, \dots, e_N)$,
seen in order from $s$ to $x_2$. 
This way, we identify $P(s,x_2)$ with the interval $[1,N]$,
an index $j \in [1,N]$ stands for the $j$-th edge $e_j$ on $P(s,x_2)$.
With a slight abuse of notation, we also say that $e_j \in [a,b]$ in case $j \in [a,b]$.

We employ the random pivots to efficiently compute all the $k$ pairs $(d_\ell,e^*_\ell)$ w.h.p.
The key idea is that, for each failing edge $e$ on $P(s,x_2)$,
there exists w.h.p.\ a random pivot $\chi \in R$ such that $d(\chi,t) \leq ML$
and $d(s,t,e)=d(s,\chi,e)+d(\chi,t)$ simultaneously hold.
To see this, recall that any replacement path $P(s,t,e)$ has at least $4L$ edges 
and let $y$ be the vertex such that $P(s,t,e)[y..t]$ consists of the last $L$ of them.
We claim that $P(s,t,e)[y..t]$ is in fact a shortest path in $G$.
Assume there were a shorter $y$-$t$-path,
then it must contain $e$ and have length at least $d(x_2,t) \ge 4ML$, a contradiction.
Therefore, \emph{some} shortest $y$-$t$-path in $G$ has at least $L$ edges
and is thus hit by a random pivot $\chi$ w.h.p.,
which gives the equality.
Any reference to high probability refers to this fact. 
We use it to design a recursive algorithm
that finds the pairs $(d_\ell,e^*_\ell)$ w.h.p.\ in time 
$O(|R| \nwspace M^{3/4} \nwspace n^{3/4})= \Otilde(M^{7/8} \nwspace m^{1/2} \nwspace n^{3/8})$ 
per target.

Recall that we view $P(s,x_2)$ as $[1,N]$.
When exploring a subinterval $[a,b]$, the algorithm searches for 
a pair $(d_\ell,e^*_\ell)$ with a distinguished edge $e^*_\ell \in [a,b]$.
The algorithm knows both an upper bound $\Delta_{[a,b]}$
and a lower bound $\delta_{[a,b]}$ on the admissible values for $d_\ell$.
More precisely, $\Delta_{[a,b]}+1$ corresponds w.h.p.\ to the smallest possible value $d_{\ell'}$
such that $e^*_{\ell'} \in [1,a{-}1]$ (the lower the index, the higher is $d_{\ell'}$);
similarly, $\delta_{[a,b]}-1$ is the the largest possible value $d_{\ell'}$ for $e^*_{\ell'} \in [b{+}1,N]$.
In the beginning, we set $\Delta_{[1,N]}=\infty$, $\delta_{[1,N]}=0$ and the algorithm
explores the entire interval $[1,N]$.
It terminates when there are no more unexplored subintervals. 

We now describe the search for $d_\ell$ with $e^*_\ell \in [a,b]$.
We assume $a \leq b$ and $\delta_{[a,b]} \leq \Delta_{[a,b]}$
as otherwise no such pair exists.
Set $\mu = \max_{j \in [a,b]}\{ d(s,x_2,e_j)+d(x_2,t)\}$.
The algorithm keeps searching in the interval only if $\mu > \delta_{[a,b]}$. 
Indeed, if $\mu \leq \delta_{[a,b]}$, we know for sure that such a pair does not exist
as there must be a replacement path (of type far case I) that passes through vertex $x_2$.
We first compute the largest index $j \in [a,b]$ for which $\mu =d(s,x_2,e_j)+d(x_2,t)$.
We do so by employing a range minimum query (RMQ) data structure 
to support such queries in constant time after an $O(N) = O(n)$ time preprocessing~\cite{BenderFarachColton00LCARevisited}.
Observe that the same data structure can be reused for all the target vertices $t'$ 
for which $D_2[t']=x_2$.
It is enough that it stores the values $d(s,x_2,e)$, instead of $d(s,x_2,e)+d(x_2,t)$.
The former distances are independent of the considered target and we already computed them above.
We use only $O(|D|)$ RMQ data structures, 
which we prepare in $O(n  |D|)=O(n^{3/2})$ time.

In the following, we assume $\mu > \delta_{[a,b]}$.
We select a candidate replacement path for $e_j$ by choosing the shortest one 
that runs through a random pivot in $O(|R|)$ time via brute-force search 
in the data we computed above for the targets in $R$.
Ties are broken in favor of the replacement path $P(s,\chi,e_j)$ 
that diverges from $P(s,x_2)$ at the vertex that is closest to $s$. 
Let $\delta=\min_{\chi \in R}\big\{d(s,\chi,e_j)+d(\chi,t)\big\}$ 
be the length of such a replacement path,
w.h.p.\ it is the actual replacement distance $P(s,t,e_j)$.
Let further $\chi_j$ be the minimizing random pivot,
and $z_j$ the vertex of $P(s,x_2)$ at which $P(s,\chi_j,e_j)$ diverges.
We check whether $\delta < \mu$ and $\delta \leq \Delta_{[a,b]}$ holds.
If either of the two conditions is violated,
then there is no need to keep searching in the interval $[a,j]$, 
as shown in the next lemma.
In this case, the algorithm makes a recursive call on the \emph{lower} interval $[j{+}1,b]$
(the one with smaller replacement distances)
by setting $\Delta_{[j+1,b]}=\Delta_{[a,b]}$ and $\delta_{[j+1,b]}=\delta_{[a,b]}$. 
We say that the search was \emph{unsuccessful}.

\begin{restatable}{lemma}{unsuccessful}
\label{lemma:unsuccessful}
	If $\delta \geq \mu$ or $\delta > \Delta_{[a,b]}$, then, w.h.p.\
	we have $e^*_\ell \not \in [a,j]$ for all $\ell \in [k]$.
\end{restatable}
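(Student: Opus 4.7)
The plan is to prove that, with high probability, the candidate value $\delta$ coincides with the true replacement distance $d(s,t,e_j)$, and then, under either hypothesis of the lemma, to exhibit an $s$-$t$-path avoiding $e_j$ that is too short to be consistent with this equality whenever some distinguished edge were to lie in $[a,j]$.

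For the first step I would reuse the random-pivot argument already sketched before the statement: every replacement path for $e_j$ has at least $4L$ edges, and its terminal $L$-edge subpath is a shortest path in $G$, so it is hit w.h.p.\ by some $\chi \in R$, yielding $\delta \le d(s,\chi,e_j)+d(\chi,t) = d(s,t,e_j)$; the reverse inequality is trivial by the triangle inequality, so $\delta = d(s,t,e_j)$ w.h.p.

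For the second step, fix $i \in [a,j]$ and suppose toward a contradiction that $e^*_\ell = e_i$ for some $\ell \in [k]$. The representative $R_\ell$ has length $d_\ell$ and, being in far case II, does not traverse the pivot $x_2$. I would decompose $R_\ell$ into its common prefix with $P(s,t)$ ending at $z_{R_\ell}$, its detour, and its common suffix $P(s,t)[y..t]$ (with $y=t$ if no nontrivial suffix exists). Because $x_2$ lies on $P(s,t)$ but off $R_\ell$, the vertex $y$ must sit strictly past $x_2$ on $P(s,t)$. Since $z_{R_\ell}$ precedes $e^*_\ell = e_i$ and $i \le j$, the edge $e_j$ belongs to $P(s,t)[z_{R_\ell}..y]$, a subpath that $R_\ell$ avoids in full. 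Hence $R_\ell$ is itself an $s$-$t$-path avoiding $e_j$ of length $d_\ell$, giving $d(s,t,e_j) \le d_\ell$ and, together with step one, $\delta \le d_\ell$.

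Finally I would plug in the two hypotheses. Under $\delta \ge \mu$, the far case II property combined with $i \in [a,b]$ gives $d_\ell < d(s,x_2,e_i)+d(x_2,t) \le \mu \le \delta \le d_\ell$, a contradiction. Under $\delta > \Delta_{[a,b]}$, the defining role of $\Delta_{[a,b]}$ as an upper bound on $d_{\ell'}$ for distinguished edges $e^*_{\ell'} \in [a,b]$ forces $d_\ell \le \Delta_{[a,b]} < \delta \le d_\ell$, again a contradiction. Both cases rule out every $i \in [a,j]$, proving the claim. The main delicate point, and the only place where the far case II hypothesis enters, is the assertion that the common suffix of $R_\ell$ begins strictly past $x_2$; this step is essential, since a replacement path remerging before $x_2$ could reuse $e_j$, and the comparison $d(s,t,e_j) \le d_\ell$ would collapse.
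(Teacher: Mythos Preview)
Your proof is correct, and it takes a genuinely different route from the paper's. The paper fixes an arbitrary edge $e\in[a,j]$, lets $\delta'=\min_{\chi\in R}\{d(s,\chi,e)+d(\chi,t)\}=d(s,t,e)$ (w.h.p.), and then splits on whether the minimizing replacement path $P'$ remerges with $P(s,x_2)$ below $e$: if yes, $P'$ passes through $x_2$ and $e$ is in far case~I; if no, $P'$ is also a path in $G-e_j$, whence $\delta'\ge\delta$, and then either hypothesis rules out $e$ as distinguished. You instead invoke the random-pivot equality only once (for $e_j$), assume $e^*_\ell=e_i\in[a,j]$, and use directly the structural fact that the representative $R_\ell$, being in far case~II, cannot touch $x_2$; hence its detour spans past $x_2$ and in particular skips $e_j$, giving $\delta\le d_\ell$. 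Your argument is arguably cleaner: it reuses the prefix/detour/suffix decomposition and the far-case-II definition in place of the paper's explicit case split, and it needs the high-probability identity $\delta=d(s,t,e_j)$ for a single edge rather than for every $e\in[a,j]$. Both proofs appeal, for the $\delta>\Delta_{[a,b]}$ branch, to the inductively maintained (w.h.p.) invariant that $\Delta_{[a,b]}$ upper-bounds every $d_{\ell'}$ with $e^*_{\ell'}\in[a,b]$; you state this explicitly, the paper leaves it implicit in the phrase ``not admissible.''
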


Suppose the search is \emph{successful}, 
that is, $\delta < \mu$ and $\delta \leq \Delta_{[a,b]}$.
We then use binary search techniques\footnote{%
	Let interval $[a',b'] \subseteq [a,j]$ lie entirely below $z_j$.
	We divide it into subintervals $[a',j']$ and $[j'{+}1,b']$ of roughly equal sizes 
	and check whether the maximum value returned by the RMQ data structure on query $[a',j']$
	is still larger than $\delta$. 
	If so, we recurse on the interval $[a',j']$; otherwise, on $[j'{+}1,b']$.
}
to compute in $O(\log n)$ time 
the smallest index $i \in [a,j]$ for which the edge $e_i$ lies on the subpath $P(s,x_2)[z_j..x_2]$
and $\delta < d(s,x_2,e_i)+d(x_2,t)$ holds.
The case $i=j$ is possible.
The condition on $e_i$
is such that $P(s,\chi_j,e_j)$ also avoids $e_i$,
which implies $d(s,t,e_i) \le \delta < d(s,x_2,e_i)+d(x_2,t)$.
The edge $e_i$ must belong to the far case II w.r.t.\ target $t$.
We show that in fact $(\delta,e_i)$ is w.h.p.\ the sought pair 
with $e^*_\ell \in [a,j]$ and minimum $d_\ell$.  

\begin{restatable}{lemma}{successful}\label{lemma:successful}
	Let $\ell \in [k]$ be maximal such that $e^*_\ell \in [a,j]$. 
	Then, w.h.p.\ $\delta=d_\ell$ and $e_i=e^*_\ell$.
\end{restatable}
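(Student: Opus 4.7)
The plan is to establish the two conclusions in sequence: first prove $\delta = d_\ell$, and then identify $e_i$ with $e^*_\ell$ via the divergence point $z_j$ and the minimality defining $i$.

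For the distance equality, I would argue that $e_j$ itself lies in far case II with representative $R_{\ell'}$. Since $d(s,t) \ge d(x_2,t) \ge 4ML$, the path $R_{\ell'}$ has more than $4L$ edges and its terminal $L$-edge suffix is a shortest path in $G$; \autoref{lemma:sampling} then yields w.h.p.\ a random pivot $\chi \in R$ on this suffix, which witnesses $d(s,\chi,e_j)+d(\chi,t)=d_{\ell'}$. Thus $\delta \le d_{\ell'}$, and the reverse inequality is immediate from $d(s,t,e_j)=d_{\ell'}$. To pin down $\ell'=\ell$, I use the invariant $\delta \le \Delta_{[a,b]}$: by the meaning of $\Delta_{[a,b]}$, we cannot have $e^*_{\ell'} < a$, and clearly $e^*_{\ell'} \le j$ because $e^*_{\ell'}$ is the first edge of the block containing $e_j$. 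Hence $e^*_{\ell'} \in [a,j]$. By \autoref{lem:structure_of_reps}, the distinguished edge of any representative $R_{\ell''}$ with $\ell''>\ell'$ lies strictly past $e_j$ on $P(s,t)$, so it falls outside $[a,j]$. The maximality of $\ell$ therefore forces $\ell'=\ell$.

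Next I would locate $z_j$ on $P(s,x_2)$. Apply the sampling argument again, this time to $R_\ell$, to obtain w.h.p.\ a random pivot $\chi \in R$ on the last $L$ edges of $R_\ell$. Since $R_\ell$ is a replacement path for $e_j$ that diverges from $P(s,t)$ at $z_{R_\ell}$, and $z_{R_\ell}$ lies at or before $e^*_\ell$ by \autoref{lem:structure_of_reps}, the Malik--Mittal--Gupta convention of representing $P(s,\chi,e_j)$ with the earliest divergence from $P(s,\chi)$ implies that the algorithm's $P(s,\chi,e_j)$ diverges from $P(s,x_2)$ no later than $z_{R_\ell}$. The tie-breaking rule for $\chi_j$ (minimum divergence among random pivots attaining $\delta$) then gives $z_j \le z_{R_\ell} \le e^*_\ell$. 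This shows $e^*_\ell$ lies on $P(s,x_2)[z_j..x_2]$; combined with $e^*_\ell \in [a,j]$ and $\delta = d_\ell < d(s,x_2,e^*_\ell)+d(x_2,t)$, which holds by the very definition of far case II, the edge $e^*_\ell$ satisfies both conditions defining $e_i$.

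Finally, to establish that $e^*_\ell$ is the smallest such index, suppose some $e_{i'}$ with position strictly before $e^*_\ell$ also satisfies both conditions. Because $P(s,\chi_j,e_j)$ avoids every edge of $P(s,x_2)$ past $z_j$, we have $d(s,t,e_{i'}) \le \delta = d_\ell$, and the strict inequality $\delta < d(s,x_2,e_{i'})+d(x_2,t)$ places $e_{i'}$ in far case II. Arguing exactly as in the proof of $\ell'=\ell$ (using $\delta \le \Delta_{[a,b]}$ and maximality), its representative must be $R_\ell$, so the distinguished edge of $e_{i'}$'s representative equals $e^*_\ell$, contradicting $i' < e^*_\ell$. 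Hence $e_i=e^*_\ell$. I expect the main obstacle to be the divergence-point step: translating the algorithm's convention about divergence from $P(s,\chi)$ into the desired bound $z_j \le z_{R_\ell}$ on divergence from $P(s,x_2)$ requires careful bookkeeping of how the trees $T_s$ and $T_\chi$ agree near their common prefix with $P(s,t)$.
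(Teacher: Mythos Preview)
Your proposal is correct and tracks the paper's argument closely for the first half: both establish that $e_j$ falls in far case~II, that w.h.p.\ $\delta=d(s,t,e_j)$, and that the representative index for $e_j$ is precisely the maximal $\ell$ with $e^*_\ell\in[a,j]$ (using the bound $\delta\le\Delta_{[a,b]}$ to rule out $e^*_\ell\in[1,a{-}1]$).

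For the identification $e_i=e^*_\ell$, the paper takes a slightly different and shorter route than you do. Rather than first locating $z_j$ relative to $z_{R_\ell}$ and then checking that $e^*_\ell$ satisfies the two defining conditions of $i$, the paper argues directly about $e_i$: since $P(s,\chi_j,e_j)$ avoids $e_i$ and $\delta<d(s,x_2,e_i)+d(x_2,t)$, the edge $e_i$ is in far case~II; then the sandwich $d_\ell=d(s,t,e_j)\le d(s,t,e_i)\le \delta=d_\ell$ (the first inequality by far case~II monotonicity, the second because the length-$\delta$ path through $\chi_j$ avoids $e_i$) forces $d(s,t,e_i)=d_\ell$, whence $e_i$ is claimed to be the first far case~II edge of that distance, i.e., $e^*_\ell$. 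This avoids your explicit step of bounding $z_j\le z_{R_\ell}$ when proving $d(s,t,e_i)=d_\ell$. However, to justify the final clause---that $e_i$ is indeed the \emph{first} such edge, equivalently that $e^*_\ell$ itself satisfies the defining conditions for $i$---one still needs $z_j$ to lie at or before $e^*_\ell$. The paper leaves this implicit (it only recalls that ``$P(s,\chi_j,e_j)$ is chosen such that its divergence point is closest to $s$''), whereas you spell it out via the sampling-on-$R_\ell$ and tie-breaking argument. So the obstacle you flag is real and is exactly the point the paper's proof does not make fully explicit; your more careful treatment is warranted, and the bookkeeping you anticipate (divergence from $P(s,\chi)$ versus from $P(s,x_2)$) is handled by noting that the two paths coincide up to and past $e_j$ in $T_s$.
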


The algorithm outputs $(\delta, e_i)$
and recurses on the lower interval $[j{+}1,b]$ with new bounds
$\Delta_{[j+1,b]}=\delta-1$ and $\delta_{[j+1,b]}=\delta_{[a,b]}$,
as well as on the \emph{upper} interval $[a,i{-}1]$,
with $\Delta_{[a,i-1]} =\Delta_{[a,b]}$ and $\delta_{[a,i-1]} =\delta+1$. 
This is justified since the edges in $[i,j]$ that belong to the far case II
are w.h.p.\ precisely the ones represented by the path
$R_{\ell}$ of length $d_\ell = \delta$.

The time needed for one target $t$
is proportional (up to a log-factor)
to the number of random pivots and the overall number of searches. 
There are $k=O(\sqrt{Mn})$ successful searches by \autoref{lem:far_case_II_edge_failures}.
The following lemma bounds the number of unsuccessful searches.

\begin{restatable}{lemma}{numberunsuccessfulsearches}
\label{lemma:number_unsuccesful_searches}
	The number of unsuccessful searches for a single target vertex is $O(M^{3/4} \nwspace n^{3/4})$.
\end{restatable}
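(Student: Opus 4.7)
The plan is to view the recursive exploration of the interval $[1,N] \simeq P(s,x_2)$ as a tree in which each node corresponds to one call of the search procedure on a subinterval $[a,b]$ with bounds $(\delta_{[a,b]},\Delta_{[a,b]})$. Successful searches have two children (one on the upper interval $[a,i-1]$ and one on the lower interval $[j+1,b]$), while unsuccessful searches have a unique child on $[j+1,b]$. Decomposing the tree into maximal directed paths of unsuccessful nodes produces a family of \emph{chains}, each of which either begins at the root or sits immediately below some successful node. By \Cref{lem:far_case_II_edge_failures}, there are at most $k \leq 3\sqrt{Mn}$ successful searches, so the number of chains is $O(\sqrt{Mn})$.

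Next, I would bound the length of a single chain on an interval $[a_0,b_0]$. At step $i$ the algorithm considers $[a_i,b_0]$ with $a_0 < a_1 < \dots$, and chooses $j_i$ as the \emph{largest} index in $[a_i,b_0]$ maximizing $\mu_i = d(s,x_2,e_{j_i})+d(x_2,t)$; passing to $[j_i+1,b_0]$ therefore removes every index that attains $\mu_i$, and $\mu_i$ is strictly decreasing along the chain. Hence the chain length is at most the number of distinct values that $d(s,x_2,\cdot)$ attains on the edges in the chain's interval. A structural argument analogous to \Cref{lem:far_case_II_edge_failures}, applied to the source-target pair $(s,x_2)$, further shows that the total number of such break points along all of $P(s,x_2)$ is $O(\sqrt{Mn})$.

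The main obstacle is to convert these estimates into the target bound $O(M^{3/4}n^{3/4})$, since their naive product is the much weaker $O(Mn)$. The sharpening should exploit two things: chains live on pairwise disjoint subintervals of $P(s,x_2)$, and each unsuccessful step also produces a detour vertex $z_{j_i}$ on $P(s,x_2)$ together with a random pivot $\chi_{j_i}$. I would split the chain steps into two kinds. The \emph{progress} steps are those in which the current value of $d(s,x_2,\cdot)$ drops to a value not yet encountered anywhere in the tree; by disjointness of the chain intervals and the global break-point bound, there are at most $O(\sqrt{Mn})$ such steps in total. The remaining \emph{wasted} steps must be accounted for by a charging argument: each such step either reuses a random pivot $\chi$ whose detour vertex $z$ on $P(s,x_2)$ has already appeared, or is forced by the upper bound $\Delta_{[a,b]}$ inherited from a prior successful search.

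The hardest part is the charging of wasted steps. Here I would argue that a single random pivot $\chi$ can certify only $O((Mn)^{1/4})$ wasted steps across the entire recursion tree: the detour vertices revealed by $\chi$ on $P(s,x_2)$ inherit a monotonicity from the shortest path $P(s,\chi)$ in $T_\chi$, and together with the disjointness of chain intervals this limits how often $\chi$ can be the minimizer in an unsuccessful step before either progress is made or a successful search intervenes. Combined with the global bound of $O(\sqrt{Mn})$ on the number of distinct break-point values, this pigeonhole step delivers the target arithmetic $\sqrt{Mn}\cdot (Mn)^{1/4} = (Mn)^{3/4}$. I expect the bulk of the technical effort to go into formalising the monotonicity of detour vertices along a chain and verifying that the charging is injective (up to constants), which together yield the stated $O(M^{3/4} \nwspace n^{3/4})$ bound.
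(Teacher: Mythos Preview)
Your decomposition into chains and the observation that $\mu$ strictly decreases along a chain are both correct and match the paper. However, the sharpening step has a genuine gap: the charging to random pivots is not the mechanism that delivers the $(Mn)^{3/4}$ bound, and the claim that a single $\chi$ certifies only $O((Mn)^{1/4})$ wasted steps is neither justified nor used in the paper. In fact the bound on unsuccessful searches does not depend on $|R|$ at all.

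The paper's argument is purely combinatorial and hinges on a \emph{stub--disjointness} idea applied twice. First, along any maximal chain of unsuccessful searches, each step $i$ yields a path $P_i = P(s,x_2,f_i)\circ P(x_2,t)$ of length $\mu_i > d(s,t)$; its detour has at least $(\mu_i - d(s,t))/M$ vertices, and the initial half of each detour (the ``stub'') is disjoint from all others in the chain. Since the $\mu_i$ strictly decrease, $\mu_i - d(s,t) \ge \ell - i$, and $\sum_i (\ell-i)/2M \le n$ gives $\ell = O(\sqrt{Mn})$ for the chain length. This replaces your break-point argument for $d(s,x_2,\cdot)$, which is not what bounds the chain.

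Second, and this is the step you are missing, the paper groups unsuccessful searches by their \emph{lower bound} $\delta_{[a,b]}$. This bound changes only at successful searches, so it takes at most $k+1 = O(\sqrt{Mn})$ values, namely $0$ and $d_\ell + 1$ for $\ell \in [k]$. For a search with $\delta_{[a,b]} = d_i + 1$ (with the $d_i$ relabelled increasingly), one has $\mu > \delta_{[a,b]}$, so the associated stub has at least $(d_i - d(s,t))/2M \ge i/2M$ vertices. Crucially, \emph{all} these stubs, across all groups, are pairwise vertex-disjoint. Writing $n_i$ for the number of unsuccessful searches in group $i$, one gets the two constraints $n_i \le c\sqrt{Mn}$ (from the chain bound) and $\sum_i n_i \cdot i/(2M) \le n$ (from global stub disjointness). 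Maximizing $\sum_i n_i$ under these constraints forces $n_i = c\sqrt{Mn}$ only for $i \le i_{\max} = O((Mn)^{1/4})$, yielding $\sum_i n_i = O((Mn)^{3/4})$. The arithmetic $\sqrt{Mn}\cdot (Mn)^{1/4}$ you were aiming for arises here, but from a constrained optimization over stub sizes, not from any per-pivot charging.
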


\noindent
The algorithm computes w.h.p.\ all pairs for one target vertex in time
$\Otilde(|R| \nwspace M^{3/4} \nwspace n^{3/4}) = \Otilde(M^{7/8} \nwspace m^{1/2} \nwspace n^{3/8})$,
scaling this to all targets gives $\Otilde(M^{7/8} \nwspace m^{1/2} \nwspace n^{11/8})$.

\bibliographystyle{plainurl} 
\bibliography{SSDSO_bib}

\begin{thebibliography}{10}

\bibitem{Afek02RestorationbyPathConcatenation_journal}
Yehuda Afek, Anat Bremler{-}Barr, Haim Kaplan, Edith Cohen, and Michael
  Merritt.
\newblock {Restoration by Path Concatenation: Fast Recovery of MPLS Paths}.
\newblock {\em Distributed Computing}, 15:273--283, 2002.
\newblock \href {https://doi.org/10.1007/s00446-002-0080-6}
  {\path{doi:10.1007/s00446-002-0080-6}}.

\bibitem{AlmanVWilliams21RefinedLaserMethod}
Josh Alman and Virginia {Vassilevska Williams}.
\newblock {A Refined Laser Method and Faster Matrix Multiplication}.
\newblock In {\em Proceedings of the 32nd Symposium on Discrete Algorithms
  (SODA)}, pages 522--539, 2021.
\newblock \href {https://doi.org/10.1137/1.9781611976465.32}
  {\path{doi:10.1137/1.9781611976465.32}}.

\bibitem{AlonChechikCohen19CombinatorialRP}
Noga Alon, Shiri Chechik, and Sarel Cohen.
\newblock {Deterministic Combinatorial Replacement Paths and Distance
  Sensitivity Oracles}.
\newblock In {\em Proceedings of the 46th International Colloquium on Automata,
  Languages, and Programming, (ICALP)}, pages 12:1--12:14, 2019.
\newblock \href {https://doi.org/10.4230/LIPIcs.ICALP.2019.12}
  {\path{doi:10.4230/LIPIcs.ICALP.2019.12}}.

\bibitem{AlNa96}
Noga Alon and Moni Naor.
\newblock {Derandomization, Witnesses for Boolean Matrix Multiplication and
  Construction of Perfect Hash Functions}.
\newblock {\em Algorithmica}, 16:434--449, 1996.
\newblock \href {https://doi.org/10.1007/BF01940874}
  {\path{doi:10.1007/BF01940874}}.

\bibitem{BenderFarachColton00LCARevisited}
Michael~A. Bender and Martin Farach{-}Colton.
\newblock {The LCA Problem Revisited}.
\newblock In {\em Proceedings of the 4th Latin American Symposium Theoretical
  Informatics (LATIN)}, pages 88--94, 2000.
\newblock \href {https://doi.org/10.1007/10719839_9}
  {\path{doi:10.1007/10719839_9}}.

\bibitem{BeKa08}
Aaron Bernstein and David~R. Karger.
\newblock {Improved Distance Sensitivity Oracles via Random Sampling}.
\newblock In {\em Proceedings of the 19th Symposium on Discrete Algorithms
  (SODA)}, pages 34--43, 2008.
\newblock URL: \url{https://dl.acm.org/citation.cfm?id=1347082.1347087}.

\bibitem{BeKa09}
Aaron Bernstein and David~R. Karger.
\newblock {A Nearly Optimal Oracle for Avoiding Failed Vertices and Edges}.
\newblock In {\em Proceedings of the 41st Symposium on Theory of Computing
  (STOC)}, pages 101--110, 2009.
\newblock \href {https://doi.org/10.1145/1536414.1536431}
  {\path{doi:10.1145/1536414.1536431}}.

\bibitem{BCGLPP18}
Davide Bil{\`{o}}, Keerti Choudhary, Luciano Gual{\`{a}}, Stefano Leucci, Merav
  Parter, and Guido Proietti.
\newblock {Efficient Oracles and Routing Schemes for Replacement Paths}.
\newblock In {\em Proceedings of the 35th Symposium on Theoretical Aspects of
  Computer Science (STACS)}, pages 13:1--13:15, 2018.
\newblock \href {https://doi.org/10.4230/LIPIcs.STACS.2018.13}
  {\path{doi:10.4230/LIPIcs.STACS.2018.13}}.

\bibitem{Bilo18FTApproximateSPT}
Davide Bil{\`{o}}, Luciano Gual{\`{a}}, Stefano Leucci, and Guido Proietti.
\newblock {Fault-Tolerant Approximate Shortest-Path Trees}.
\newblock {\em Algorithmica}, 80:3437--3460, 2018.
\newblock \href {https://doi.org/10.1007/s00453-017-0396-z}
  {\path{doi:10.1007/s00453-017-0396-z}}.

\bibitem{BrSa19}
Jan van~den Brand and Thatchaphol Saranurak.
\newblock {Sensitive Distance and Reachability Oracles for Large Batch
  Updates}.
\newblock In {\em 60th {IEEE} Annual Symposium on Foundations of Computer
  Science, {FOCS}, 2019, Baltimore, Maryland, USA, November 9-12, 2019}, pages
  424--435. {IEEE} Computer Society, 2019.
\newblock \href {https://doi.org/10.1109/FOCS.2019.00034}
  {\path{doi:10.1109/FOCS.2019.00034}}.

\bibitem{ChechikCohen19SSRP_SODA}
Shiri Chechik and Sarel Cohen.
\newblock {Near Optimal Algorithms for the Single Source Replacement Paths
  Problem}.
\newblock In {\em Proceedings of the 30th Annual Symposium on Discrete
  Algorithms (SODA)}, pages 2090--2109, 2019.
\newblock \href {https://doi.org/10.1137/1.9781611975482.126}
  {\path{doi:10.1137/1.9781611975482.126}}.

\bibitem{ChCo20}
Shiri Chechik and Sarel Cohen.
\newblock {Distance Sensitivity Oracles with Subcubic Preprocessing Time and
  Fast Query Time}.
\newblock In {\em Proccedings of the 52nd Symposium on Theory of Computing
  (STOC)}, pages 1375--1388, 2020.
\newblock \href {https://doi.org/10.1145/3357713.3384253}
  {\path{doi:10.1145/3357713.3384253}}.

\bibitem{CLPR10}
Shiri Chechik, Michael Langberg, David Peleg, and Liam Roditty.
\newblock {$f$-Sensitivity Distance Oracles and Routing Schemes}.
\newblock {\em Algorithmica}, 63:861--882, 2012.
\newblock \href {https://doi.org/10.1007/s00453-011-9543-0}
  {\path{doi:10.1007/s00453-011-9543-0}}.

\bibitem{ChMa19}
Shiri Chechik and Ofer Magen.
\newblock {Near Optimal Algorithm for the Directed Single Source Replacement
  Paths Problem}.
\newblock In {\em Proceedings of the 47th International Colloquium on Automata,
  Languages, and Programming (ICALP)}, pages 81:1--81:17, 2020.
\newblock \href {https://doi.org/10.4230/LIPIcs.ICALP.2020.81}
  {\path{doi:10.4230/LIPIcs.ICALP.2020.81}}.

\bibitem{ChNe20}
Shiri Chechik and Moran Nechushtan.
\newblock {Simplifying and Unifying Replacement Paths Algorithms in Weighted
  Directed Graphs}.
\newblock In {\em Proceedings of the 47th International Colloquium on Automata,
  Languages, and Programming (ICALP)}, pages 29:1--29:12, 2020.
\newblock \href {https://doi.org/10.4230/LIPIcs.ICALP.2020.29}
  {\path{doi:10.4230/LIPIcs.ICALP.2020.29}}.

\bibitem{DeThChRa08}
Camil Demetrescu, Mikkel Thorup, Rezaul~A. Chowdhury, and Vijaya Ramachandran.
\newblock {Oracles for Distances Avoiding a Failed Node or Link}.
\newblock {\em SIAM Journal on Computing}, 37:1299--1318, 2008.
\newblock \href {https://doi.org/10.1137/S0097539705429847}
  {\path{doi:10.1137/S0097539705429847}}.

\bibitem{DP09}
Ran Duan and Seth Pettie.
\newblock {Dual-failure Distance and Connectivity Oracles}.
\newblock In {\em Proceedings of the 20th Symposium on Discrete Algorithms
  (SODA)}, pages 506--515, 2009.
\newblock URL: \url{https://dl.acm.org/citation.cfm?id=1496770.1496826}.

\bibitem{DuanZhang17ImprovedDSOs}
Ran Duan and Tianyi Zhang.
\newblock {Improved Distance Sensitivity Oracles via Tree Partitioning}.
\newblock In {\em Proceedings of the 15th Algorithms and Data Structures
  Symposium (WADS)}, pages 349--360, 2017.
\newblock \href {https://doi.org/10.1007/978-3-319-62127-2_30}
  {\path{doi:10.1007/978-3-319-62127-2_30}}.

\bibitem{GW12}
Fabrizio Grandoni and Virginia {Vassilevska Williams}.
\newblock {Improved Distance Sensitivity Oracles via Fast Single-Source
  Replacement Paths}.
\newblock In {\em Proceedings of the 53rd Symposium on Foundations of Computer
  Science (FOCS)}, pages 748--757, 2012.
\newblock \href {https://doi.org/10.1109/FOCS.2012.17}
  {\path{doi:10.1109/FOCS.2012.17}}.

\bibitem{GrandoniVWilliamsFasterRPandDSO_journal}
Fabrizio Grandoni and Virginia {Vassilevska Williams}.
\newblock {Faster Replacement Paths and Distance Sensitivity Oracles}.
\newblock {\em ACM Transaction on Algorithms}, 16:15:1--15:25, 2020.
\newblock \href {https://doi.org/10.1145/3365835} {\path{doi:10.1145/3365835}}.

\bibitem{GuRen21ConstructingDSO_ICALP}
Yong Gu and Hanlin Ren.
\newblock {Constructing a Distance Sensitivity Oracle in $O(n^{2.5794}M)$
  Time}.
\newblock In {\em Proceedings of the 48th International Colloquium on Automata,
  Languages, and Programming (ICALP)}, 2021.
\newblock To appear.

\bibitem{GuPolakVWilliamsXu21MonotoneMinPlus}
Yuzhou Gu, Adam Polak, Virginia {Vassilevska Williams}, and Yinzhan Xu.
\newblock {Faster Monotone Min-Plus Product, Range Mode, and Single Source
  Replacement Paths}.
\newblock In {\em Proceedings of the 48th International Colloquium on Automata,
  Languages, and Programming (ICALP)}, 2021.
\newblock To appear.

\bibitem{GJM20}
Manoj Gupta, Rahul Jain, and Nitiksha Modi.
\newblock {Multiple Source Replacement Path Problem}.
\newblock In {\em Proceedings of the 39th Symposium on Principles of
  Distributed Computing (PODC)}, pages 339--348, 2020.
\newblock \href {https://doi.org/10.1145/3382734.3405714}
  {\path{doi:10.1145/3382734.3405714}}.

\bibitem{GuptaSingh18FaultTolerantExactDistanceOracle}
Manoj Gupta and Aditi Singh.
\newblock {Generic Single Edge Fault Tolerant Exact Distance Oracle}.
\newblock In {\em Proceedings of the 45th International Colloquium on Automata,
  Languages, and Programming, (ICALP)}, pages 72:1--72:15, 2018.
\newblock \href {https://doi.org/10.4230/LIPIcs.ICALP.2018.72}
  {\path{doi:10.4230/LIPIcs.ICALP.2018.72}}.

\bibitem{HeSu01}
John Hershberger and Subhash Suri.
\newblock {Vickrey Prices and Shortest Paths: What is an edge worth?}
\newblock In {\em Proceedings of the 42nd Symposium on Foundations of Computer
  Science (FOCS)}, pages 252--259, 2001.
\newblock \href {https://doi.org/10.1109/SFCS.2001.959899}
  {\path{doi:10.1109/SFCS.2001.959899}}.

\bibitem{HeSu01_erratum}
John Hershberger and Subhash Suri.
\newblock {Erratum to ``Vickrey Pricing and Shortest Paths: What is an edge
  worth?''}.
\newblock In {\em Proceedings of the 43rd Symposium on Foundations of Computer
  Science (FOCS)}, page 809, 2002.
\newblock \href {https://doi.org/10.1109/SFCS.2002.1182006}
  {\path{doi:10.1109/SFCS.2002.1182006}}.

\bibitem{LeGall14}
Fran\c{c}ois Le~Gall.
\newblock {Powers of Tensors and Fast Matrix Multiplication}.
\newblock In {\em Proceedings of the 39th International Symposium on Symbolic
  and Algebraic Computation (ISSAC)}, pages 296--303, 2014.
\newblock \href {https://doi.org/10.1145/2608628.2608664}
  {\path{doi:10.1145/2608628.2608664}}.

\bibitem{MaMiGu89}
Kavindra Malik, A.~K. Mittal, and Sumit~K. Gupta.
\newblock {The $k$ Most Vital Arcs in the Shortest Path Problem}.
\newblock {\em Operations Research Letters}, 8:223--227, 1989.
\newblock \href {https://doi.org/10.1016/0167-6377(89)90065-5}
  {\path{doi:10.1016/0167-6377(89)90065-5}}.

\bibitem{NaPrWi01}
Enrico Nardelli, Guido Proietti, and Peter Widmayer.
\newblock {A Faster Computation of the Most Vital Edge of a Shortest Path}.
\newblock {\em Information Processing Letters}, 79:81--85, 2001.
\newblock \href {https://doi.org/10.1016/S0020-0190(00)00175-7}
  {\path{doi:10.1016/S0020-0190(00)00175-7}}.

\bibitem{Nardelli2003}
Enrico Nardelli, Guido Proietti, and Peter Widmayer.
\newblock {Finding the Most Vital Node of a Shortest Path}.
\newblock {\em Theoretical Computer Science}, 296:167--177, 2003.
\newblock \href {https://doi.org/10.1016/S0304-3975(02)00438-3}
  {\path{doi:10.1016/S0304-3975(02)00438-3}}.

\bibitem{PaPeP16}
Merav Parter and David Peleg.
\newblock {Sparse Fault-Tolerant BFS Structures}.
\newblock {\em ACM Transactions on Algorithms}, 13:11:1--11:24, 2016.
\newblock \href {https://doi.org/10.1145/2976741} {\path{doi:10.1145/2976741}}.

\bibitem{Ren20}
Hanlin Ren.
\newblock {Improved Distance Sensitivity Oracles with Subcubic Preprocessing
  Time}.
\newblock In {\em Proceedings of the 28th European Symposium on Algorithms
  (ESA)}, pages 79:1--79:13, 2020.
\newblock \href {https://doi.org/10.4230/LIPIcs.ESA.2020.79}
  {\path{doi:10.4230/LIPIcs.ESA.2020.79}}.

\bibitem{Ren20_arxiv}
Hanlin Ren.
\newblock {Improved Distance Sensitivity Oracles with Subcubic Preprocessing
  Time}.
\newblock {\em CoRR}, abs/2007.11495, 2020.
\newblock ArXiv preprint. Full version of \cite{Ren20}.
\newblock \href {http://arxiv.org/abs/2007.11495} {\path{arXiv:2007.11495}}.

\bibitem{RodittyZwick12kSimpleShortestPaths}
Liam Roditty and Uri Zwick.
\newblock {Replacement Paths and $k$ Simple Shortest Paths in Unweighted
  Directed Graphs}.
\newblock {\em ACM Transaction on Algorithms}, 8:33:1--33:11, 2012.
\newblock \href {https://doi.org/10.1145/2344422.2344423}
  {\path{doi:10.1145/2344422.2344423}}.

\bibitem{ShZw99}
Avi Shoshan and Uri Zwick.
\newblock {All Pairs Shortest Paths in Undirected Graphs with Integer Weights}.
\newblock In {\em Proceedings of the 40th Symposium on Foundations of Computer
  Science (FOCS)}, pages 605--615, 1999.
\newblock \href {https://doi.org/10.1109/SFFCS.1999.814635}
  {\path{doi:10.1109/SFFCS.1999.814635}}.

\bibitem{Thorup99UndirectedSSSPinLinearTime}
Mikkel Thorup.
\newblock {Undirected Single-Source Shortest Paths with Positive Integer
  Weights in Linear Time}.
\newblock {\em Journal of the ACM}, 46:362–394, 1999.
\newblock \href {https://doi.org/10.1145/316542.316548}
  {\path{doi:10.1145/316542.316548}}.

\bibitem{ThorupZ05}
Mikkel Thorup and Uri Zwick.
\newblock {Approximate Distance Oracles}.
\newblock {\em Journal of the ACM}, 52:1--24, 2005.
\newblock \href {https://doi.org/10.1145/1044731.1044732}
  {\path{doi:10.1145/1044731.1044732}}.

\bibitem{williams2012multiplying}
Virginia {Vassilevska Williams}.
\newblock {Multiplying Matrices Faster Than Coppersmith-Winograd}.
\newblock In {\em Proceedings of the 44th Symposium on Theory of Computing
  (STOC)}, pages 887--898, 2012.
\newblock \href {https://doi.org/10.1145/2213977.2214056}
  {\path{doi:10.1145/2213977.2214056}}.

\bibitem{WilliamsW10_journal}
Virginia {Vassilevska Williams} and R.~Ryan Williams.
\newblock {Subcubic Equivalences Between Path, Matrix, and Triangle Problems}.
\newblock {\em Journal of the ACM}, 65:27:1--27:38, 2018.
\newblock \href {https://doi.org/10.1145/3186893} {\path{doi:10.1145/3186893}}.

\bibitem{WY13}
Oren Weimann and Raphael Yuster.
\newblock {Replacement Paths and Distance Sensitivity Oracles via Fast Matrix
  Multiplication}.
\newblock {\em ACM Transactions on Algorithms}, 9:14:1--14:13, 2013.
\newblock \href {https://doi.org/10.1145/2438645.2438646}
  {\path{doi:10.1145/2438645.2438646}}.

\end{thebibliography}

\newpage
\appendix

\section{Omitted Proofs of \autoref{sec:reduction_algorithm_to_DS}}
\label{app:omitted_reduction_proofs}

\setofpivots*

\noindent
We prove a more general version, which we will reuse later.
In it, we require at least one pivot among the last $L \le n$ vertices
and get $|D| \le n/L$ for the set,
but the preprocessing time and the size of the data structure stays the same.
The above formulation of \autoref{lem:set_of_pivots} then follows by setting $L = \sqrt{n}$.
  
\begin{proof}
	In this proof, we use $|P| = |V(P)|$ to denote the number of vertices of some path $P$.
	We define the set $D$ iteratively.
	Each iteration starts with a subtree $T$ of $T_s$ containing the source vertex $s$.
	Initially, we have $T = T_s$.
	If all paths in $T$ starting in $s$ have at most $L$ vertices,
	include $s$ into the set and terminate.
	That is, we set $D[t] = s$ for all $t \in V(T)$.
	Otherwise, let $v$ be a leaf in $T$ whose shortest path $P(s,v)$
	has the maximum number of vertices.
	Note that $|P(s,v)| > L$ holds.
	Let $x$ be the vertex on $P(s,v)$ for which $|P(s,v)[x..v] \nwspace| = L$, 
	and let $T_x$ the subtree of $T$ rooted in $x$.
	We include $x$ in the set by defining $D[t] = x$ for all vertices in $T_x$,
	remove $T_x$ from $T$, and continue with the next round.
	
	At least $L$ vertices get assigned their pivots in $D$ in every iteration,
	thus there are only $n/L$ iterations in total and as many vertices in $D$.
	Updating all pointers of the data structure takes time in $O(n)$.
	Finding the leaves $v$ can be done using a priority (max-)queue with the number of vertices 
	$|P(s,v)|$ as key.
	Since every vertex is touched exactly once, returned as the maximum or removed from the queue,
	this takes $\Otilde(n)$ total time.
\end{proof}

\subparagraph*{Vertex failures.}

For completeness,
we repeat the selection process of the vertex representatives in $\Rep_V$ here.
Let $t$ be the target and $x$ its assigned pivot and
let $v_1, v_2, \dots, v_k$ be those vertices on $P(s,t)[s..x]$ with $d(s,t,v_i) < d(s,x,v_i) + d(x,t)$.
They are ordered such that, for all $i < j$, we have $d(s,v_i) < d(s,v_j)$.
As in the case of edge failures, this implies $d(s,t,v_i) \ge d(s,t,v_j)$.
Each $v_j$ chooses one of its replacement paths as representative
and, if available, always prefers one that has already chosen by some earlier vertex $v_i$, $i < j$.
Similar to \autoref{lem:structure_of_reps}, there is one representative for each replacement distance $d(s,t,v)$ in the far case II
and all vertices represented by some $R \in \Rep_V$ lie on the subpath $P(s,t)[z_R  .. z_{R'}]$, where $R' \in \Rep$ is the next-shorter representative 
(the vertices occur on $P(s,t)[z_R  ..x]$ if $R$ is the shortest representative).
Observe that $R$ cannot represent the vertex $z_R$, but $z_{R'}$ is possible.

\vertexrepresentatives*

\begin{proof}
	Let the notation be as in the proof of \autoref{lem:far_case_II_edge_failures}.
	In particular, let $P = P(s,t)$, $R$ be a vertex representative with $w(R) > d(s,t) + 2 \sqrt{Mn}$,
	and $\Stub_R$ be the path on the first $\sqrt{n/M}$ vertices of the detour of $R$.
	Let $R' \in \Rep_V$ be the next-smaller representative. 
	The only instance in which the argument of \autoref{lem:far_case_II_edge_failures}
	does not extend to vertex failures
	is if $R'$ also has length larger than $d(s,t) + 2 \sqrt{Mn}$, 
	$\Stub_R$ and $\Stub_{R'}$ intersect (say, in~$y$),
	and the starting vertex $z_{R'}$ is the \emph{only one} that is represented by $R$.
	Then, the path $P[s..z_R] \circ R[z_R ..y] \circ R'[y.. z_{R'}] \circ P[z_{R'} ..t]$
	may be short, but it does not avoid any vertex represented by $R$
	and therefore does not imply a contradiction.

	We fix not only this special case but give a more general argument
	that still allows us to bound the number of vertex representatives at $O(\sqrt{Mn})$.
	For this proof, we redefine a \emph{long} representative to have length larger than
	$d(s,t) + 3\sqrt{Mn}$.
	Note that the number of short representatives is at most $3 \sqrt{Mn}$.
	We claim that the stub of a long representative does not intersect those of \emph{two} others.
	Assume otherwise and let $R, R', R''$ be three long representatives with $w(R) > w(R') > w(R'')$.
	This means, we have $d(s,z_R) < d(s,z_{R'}) <  d(s,z_{R''})$.
	We first concentrate on the case where $R'$ has the intersecting stub.
	Let $y \in V(\Stub_R) \cap V(\Stub_{R'})$
	and $y'' \in V(\Stub_{R'}) \cap V(\Stub_{R''})$.
	Then, the concatenation
	\begin{equation*}
		P^* = P[s..z_R] \circ R[z_R ..y] \circ R'[y.. y''] \circ R''[y''.. z_{R''}] \circ P[z_{R''} ..t]
	\end{equation*}
	is a path of length $w(P^*) \le d(s,t) + 3\sqrt{Mn} < w(R)$ that avoids all vertices 
	that occur strictly between $z_R$ and $z_{R''}$ on $P$,
	including all vertices represented by $R$, a contradiction.
	If instead $R$ or $R''$ has the intersecting stub, 
	there exists some $v \in V(\Stub_R) \cap V(\Stub_{R''})$
	and $P[s..z_R] \circ R[z_R ..v] \circ R''[v.. z_{R''}] \circ P[z_{R''} ..t]$
	is an even shorter path avoiding all those vertices.
	
	In summary, each stub on $\sqrt{n/M}$ vertices
	is shared by at most $2$ long representatives and thus there are at most $2\sqrt{Mn}$ of them,
	implying $|\Rep_V| \le 5 \sqrt{Mn}$.
\end{proof}

\subparagraph*{Constant query time.}

\constantquerytime*

We already gave an $O(n^2)$ space solution with constant query time
in case the maximum edge weight $M$ is larger than $n$.
Therefore, we assume $M \le n$ in the following.
Only processing the far case II incurred a super-constant query time.
In order to prove \autoref{lem:constant_query_time}, we describe a more general way
to trade the number of pivots for the number of break points.
Suppose we require at least one pivot among the last $L \le n$ vertices of each path in the shortest path tree $T_s$ that starts in the source $s$.
By (the more general version of) \autoref{lem:set_of_pivots}, we get $|D| \le n/L$ pivots.
For each $x \in D$, let $V_x$
be those targets $t$ with assigned pivot $D[t] = x$.
The $V_x$ partition $V$ and the pairwise distances 
of vertices within these sets are at most $2ML$.
If some $V_x$ has more than $2L$ elements, we split it into \emph{groups} 
of size between $L$ and $2L$.
To ease notation, we also use $V_x$ to denote the group.
In total, there are $|\mathcal{G}| = O(n/L)$  many groups.

The key observation is that the analysis in \autoref{lem:far_case_II_edge_failures}
can be improved by considering a whole group of targets simultaneously.
Fix some pivot $x \in D$.
All edges belonging to the far case II with respect to some $t \in V_x$,
meaning $d(s,t,e) < d(s,x,e) + d(x,t)$, occur on the path $P(s,x)$ in $T_s$.
We let them choose a set $\Rep_x$ of representatives again.
Each edge selects one for all its possible targets in $V_x$ before it is the next edge's turn.
While the edges are ordered by increasing distance from $s$, 
the targets are ordered by decreasing distance, ties are broken arbitrarily.
In effect, each edge selects its longest representative first.
Choices always prefer available replacement paths that have been chosen before.

\begin{lemma}
\label{lem:appendix_group_of_targets}
	The number of representatives is $|\Rep_x| = O(\frac{n}{L} + ML^2)$.
	The same holds if the representatives are instead chosen by the vertices on $P(s,x)$
	belonging to the far case II.
\end{lemma}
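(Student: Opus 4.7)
The plan is to extend the short/long decomposition from \autoref{lem:far_case_II_edge_failures} from a single target to the whole group $V_x$, taking advantage of the fact that all targets in $V_x$ lie at pairwise distance at most $2ML$ via the pivot $x$.

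First I would classify each $R \in \Rep_x$ with target $t_R$ as \emph{short} if $w(R) \le d(s, t_R) + 2ML$ and as \emph{long} otherwise. For a fixed target $t \in V_x$, the subset $\Rep_x^t = \{R \in \Rep_x : t_R = t\}$ is selected by the edges in exactly the same order as in the single-target process, since representatives for different targets have different endpoints and thus do not interfere with each other's choices. Therefore the proof of \autoref{lem:structure_of_reps} carries over verbatim, and the representatives in $\Rep_x^t$ have pairwise distinct integer lengths, giving at most $2ML+1$ short reps per target. Summing over the $|V_x| \le 2L$ targets of the group yields $O(ML^2)$ short representatives.

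For long reps, I would define $\Stub_R$ as the subpath consisting of the first $L$ edges of the detour of $R$, starting at $z_R$. Since the detour weight exceeds $2ML$ and each edge has weight at most $M$, the detour has more than $2L$ edges, so $\Stub_R$ is well defined, has weight at most $ML$, and spans at most $L+1$ vertices. The goal is to show that the stubs of all long representatives in $\Rep_x$ are pairwise vertex-disjoint, which immediately gives at most $O(n/L)$ long reps and hence the claimed bound $|\Rep_x| = O(n/L + ML^2)$.

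The main obstacle is the pairwise disjointness of long stubs across different targets, which in \autoref{lem:far_case_II_edge_failures} was shown only for reps of one target. For reps $R_1 \in \Rep_x^{t_1}$ and $R_2 \in \Rep_x^{t_2}$ sharing a stub vertex $y$, I would mimic the single-target splicing and additionally append a bridging path of weight at most $2ML$ between $t_1$ and $t_2$ via $x$ whenever the splice ends at the wrong target. A key ingredient is that, since both reps belong to the far case II, each of them avoids $x$ and remerges with its respective $P(s, t_{R_i})$ only on the segment $P(x, t_{R_i})$; in particular the spliced path avoids the relevant failing edge on $P(s, x)$ in the main case of the analysis. A case analysis on the relative position of the failing edge $e_{R_1}$ and the detour start $z_{R_2}$ on $P(s, x)$, combined with the symmetric-splice identity $w(P_1) + w(P_2) = w(R_1) + w(R_2) + O(ML)$ obtained from the two complementary splicings, should yield a strictly shorter replacement path for $(e_{R_i}, t_i)$ for at least one $i \in \{1,2\}$, contradicting the choice of the corresponding $R_i$. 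A deterministic tie-breaking rule in the representative selection, e.g.\ preferring a replacement path whose $z_R$ is closest to $s$, may be needed to handle the boundary case of equality.

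The same proof extends to the vertex-failure version of the lemma: the splicing only relies on detours starting on $P(s, x)$ and remerging on $P(x, t_R)$, a structure that is preserved when a vertex instead of an edge fails, and the argument introduces at most a constant-factor loss as in the passage from \autoref{lem:far_case_II_edge_failures} to \autoref{lem:far_case_II_vertex_failures}.
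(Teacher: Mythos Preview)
Your proposal follows essentially the same high-level strategy as the paper: split $\Rep_x$ into short and long representatives, bound the short ones by $O(ML^2)$ via distinct integer lengths per target, and bound the long ones by $O(n/L)$ via pairwise-disjoint stubs on the first $\Theta(L)$ vertices of each detour.

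The main difference is in how the cross-target stub-disjointness is argued. The paper sets the short/long threshold at $d(s,t)+5ML$, not $2ML$, and first establishes an \emph{elongation} lemma: any $s$--$t$ path in $G-e$ whose detour has weight at most $2ML$ can be routed through $x$ to reach \emph{every} target $t'\in V_x$ while avoiding $e$, with length at most $d(s,t')+5ML$ (the extra $3ML$ absorbs the bridge through $x$ and the $|d(s,t)-d(s,t')|\le ML$ slack). With this in hand, if the stubs of two long representatives $R,R'$ meet at $y$, the spliced detour $R[z_R..y]\circ R'[y..z_{R'}]$ has weight at most $2ML$, so elongation immediately gives a path of length at most $d(s,\tau)+5ML$, for every $\tau\in V_x$, avoiding all edges on $P(s,x)[z_R..z_{R'}]$; this contradicts one of the two representatives being long. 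No case analysis, symmetric-splice identity, or tie-breaking rule is needed.

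By contrast, with your $2ML$ threshold the bridge $P(t_1,x)\circ P(x,t_2)$ of weight up to $2ML$ eats all the slack, and your symmetric-splice identity $w(P_1)+w(P_2)=w(R_1)+w(R_2)+O(ML)$ only gives non-strict inequalities in the boundary case, forcing you to introduce a tie-breaking convention that the paper's selection process does not specify. Your approach can presumably be made to work, but it is more involved; using the looser $5ML$ threshold and the elongation argument, as the paper does, sidesteps the case distinction entirely.
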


\begin{proof}
	A representative replacement path in $\Rep_x$ that ends in target $t \in V_x$
	is said to be \emph{short} if it has length at most $d(s,t) + 5ML$;
	otherwise, it is \emph{long}.
	Since $|V_x| \le 2L$, there are only $O(ML^2)$ short representatives in all of $\Rep_x$.
	The rest of the proof is dedicated to bound the number of long representatives.
	
	Recall that the pairwise distances of targets $t,t' \in V_x$ is at most $2ML$
	as witnessed by paths $P(t,x) \circ P(x,t')$,
	which are independent of the failing edge $e \in P(s,x)$.
	Now consider a representative $R = P(s,t,e)$ for some $t \in V_x$
	that has a detour part with length at most $2ML$,
	and thus is of total length $w(R) \le d(s,t) + 2ML$.
	It can be elongated, for \emph{every} target 
	$t' \in V_x$, to a (not necessarily shortest) $s$-$t'$-path 
	that avoids $e$ and has length at most $d(s,t) + 4ML \le d(s,t') + 5ML$.
	The inequality holds due to $x$ occurring among the last $L$ 
	vertices of both paths $P(s,t)$ and $P(s,t')$, implying $d(s,t) \le d(s,t') + ML$.
	In other words, if long representatives for targets in $V_x$ exist,
	they have a detour longer than $2ML$, these detours span at least $2L$ vertices.
	Consider the stubs consisting of the paths on the first $L$ 
	vertices of the detour of each long representative.
	The stubs cannot intersect as otherwise 
	this would form a path whose detour has length at most $2ML$ 
	and avoids all edges of at least one of the participating representatives.
	Therefore, there are only $n/L$ long representatives.
	
	The analysis can be extended to vertex failures by considering pairs of long replacement paths
	as described in the proof of \autoref{lem:far_case_II_vertex_failures}.
\end{proof}

The number of representatives for a group is larger than when considering only a single target, see~\Cref{lem:far_case_II_edge_failures,lem:far_case_II_vertex_failures}.
However, it will be advantageous to have a sublinear number of groups.
The expression in \autoref{lem:appendix_group_of_targets} is minimized by $L = (n/M)^{1/3}$.
This balances the number $|\Rep_x|$ of representatives per group,
the number $|D|$ of pivots, and the number $|\mathcal{G}|$ of groups
all at $O(M^{1/3} \nwspace n^{2/3})$.
The solutions for the the near case, and the far case I
can be implemented as described in \autoref{sec:reduction_algorithm_to_DS}.
They can be computed in time/space $O(n)$, $O(nL)$, and $O(n |D|)$,  respectively,
which is $O(M^{1/3} \nwspace n^{5/3})$.

To reach constant query time also in the far case II, we proceed as follows.
Let $t$ be a target and $D[t] = x$ the closest pivot above it in $T_s$.
A linear scan from $t$ along the path $P(s,x)$ to $s$ 
reveals all break points with respect to $t$.
The edges (vertices) that belong to the far case II and lie between consecutive break points
form a segment of equal replacement distance.
For all targets in $V_x$, there are only $|\Rep_x|$ break points,
and each one of them lies in $|V_x| \le 2L$ segments (and is the beginning of one of them).
We store in each break point the $2L$ distances together with their respective targets.
Furthermore, for each edge (vertex) on $P(s,x)$ in the far case II,
we store a pointer to the nearest break point that is closer to the source $s$.
Observe that following the pointer never crosses segment borders.
For one group, the space requirement is $O(L \cdot |\Rep_x| + n) = O(n)$;
scaling this up to all groups takes
$O(n |\mathcal{G}|) = O(M^{1/3} \nwspace n^{5/3})$ space, as desired.
To answer a query $(e,t)$ (respectively, $(v,t)$) in the far case II,
we follow the pointer of $e$ (of $v$) to the nearest break point and return
the replacement distance corresponding to $t$.
The lookups can be performed in constant time.

\subparagraph*{Path-reporting oracles.}

\pathreporting*

\begin{proof}
We explain how to modify the two Single-Source DSOs so as they can also report replacement paths.
The $O(M^{1/2} \nwspace n^{3/2})$ space oracle will then
be able to return the path in $\Otilde(1)$ time per edge,
while it is constant time per edge for the $O(M^{1/3} \nwspace n^{5/3})$ space one. However, the preprocessing time for both oracles is $O(m\sqrt{Mn}+n^2)$.
We concentrate on edge failures, the solution for vertex failures is the same.

We denote by $T_{s,e}$, with $e \in E(T_s)$, a shortest paths tree of $G-e$ rooted in $s$. For each vertex $t$ and for each edge $e$ of $P(s,t)$, we denote by $\last(t,e)$ the predecessor of $t$ in $T_{s,e}$. Finally we denote by $\last(t)$ the predecessor of $t$ in $T_s$.
%
For the sake of simplifying the exposition, we assume that the pivot of every target vertex $t \neq s$ is not $t$ itself. This can be guaranteed for any pivot $x\neq s$
if we redefine $D[x]$ as the closest proper ancestor of $x$ in $T_s$ that is also a pivot.

Let $t$ be a target vertex and let $x=D[t]$. We let $E_t$ be the set of edges of $G$ that are incident to $t$ and $m_t=|E_t|$ their number.
Observe that any $\last(t,e)$ can be computed in $O(m_t)$ time by selecting the vertex $y$ of an edge $\{y,t\} \in E_t$ such that $d(s,t,e)=d(s,y,e)+w(\{y,t\})$. 
Employing this simple selection method for every $\last(t,e)$ is  requires up to  
$\sum_{e \in E(T_s)} \sum_{t \in V}m_t = O(nm)$ time.
However, for $M \le n$, we can do better.
We use the simple selection to compute $\last(t,e)$ for every edge $e \in P(x,t)$,
the edges in the near case.
This takes $O(m_t\sqrt{n})$ for each target $t$ and, thus $O(m\sqrt{n})$ time in total.
Observe that all these values can be retrieved in constant time if we store them in a $n \times 2\sqrt{n}$ table with one row for each target $t$ and one column for each of the $2\sqrt{n}$ distances from $e$ and $t$.

We use the same method also for computing $\last(t,e)$ for the $k=O(\sqrt{Mn})$ distinct replacement paths $R_1,\dots,R_k$ for $t$ that fall in the far case II. More precisely, as already discussed while presenting the Single-Source DSOs, for each $R_i$ we know an edge $e_i^*$ such that $R_i$ is a replacement path of $G-e_i^*$. We compute each $\last(t,e_i^*)$ in $O(m_t)$ time and store this information in the same data structure we used to store the value $d(s,t,e_i^*)$. In this way, once we retrieve $d(s,t,e_i^*)$, we also know $\last(t,e_i^*)$.

Now that we have this information, the path-reporting oracle that reconstructs some replacement path $P(s,t,e)$ backwards from $t$ to $s$ is straightforward. In fact, upon query $(t,e)$, we report the edge $\{t',t\}$ with $t'=\last(t,e)$, and we recurse on $(t',e)$.
Note first that if $e \notin P(s,t)$, then $t'=\last(t)$ can be retrieved in constant time.
Otherwise, the predecessor $t'$ depends on the case in which we are.

\begin{itemize}
	\item \textbf{\textsf{Near case.}} If $e \in P(x,t)$, 
		then $\last(t,e)$ can be retrieved in $O(1)$ time.
	\item \textbf{\textsf{Far case I.}} If $e \in P(s,x)$ and $d(s,t,e)=d(s,x,e)+d(x,t)$,
		we have $t'=\last(t)$ again.
	\item \textbf{\textsf{Far case II.}} If $e \in P(s,x)$ and $d(s,t,e) < d(s,x,e)+d(x,t)$, 
		then $\last(t,e)$ can be reported in the same time as $d(s,t,e)$,
		that is, $\Otilde(1)$ for the $O(M^{1/2} \nwspace n^{3/2})$ space oracle.
\end{itemize}

The extension for the $O(M^{1/3} \nwspace n^{5/3})$ space Single-Source DSO is as follows.
We compute and store the same additional information as above in $O(m\sqrt{Mn})$ time and then
explicitly compute all shortest path trees $T_{s,e}$ with an $O(n^2)$ time algorithm that visits the target vertices in any order. For each target $t$, the algorithm scans all the edges of $P(s,t)$ in order from $s$ to $t$ and, for each failing edge $e \in P(s,t)$, it computes $\last(t,e)$ in constant time. The time needed to compute each predecessor is constant because we no longer need to perform a binary search for the values $d(s,t,e)$ that fall in the far case II. Finally, for every failing edge $e$ of $T_s$ that is not in $P(s,t)$, it is enough to set $\last(t,e)=\last(t)$.

Once all the trees $T_{s,e}$ are known, we use them to compute,
for each group of size $O((n/M)^{1/3})$, the $O(M^{1/3} \nwspace n^{2/3})$ vertices $\{z_R\}_{R \in \Rep_x}$, on the path $P(s,x)$
from which the replacement paths of the far case II with respect to that group diverge
(those are different from the break points).
This can be done by simply computing, for each target vertex $t$, the $O(\sqrt{Mn})$ vertices of $P(s,x)$ from which the replacement paths that fall in the far case II diverge. By visiting the edges $e$ of $P(s,x)$ in order from $s$ to $t$, we can find the required vertices in $O(n)$ time per target.
Indeed, using LCA data structures,
we check if $x$ is an ancestor of $t$ in $T_{s,e}$, and if so, we save the edge $e$.
We then explore all saved edges in the order in which they appear on $P(s,x)$ from $s$ to $x$ and we keep a pointer to the last divergence vertex found. As we know that the remaining divergence vertices can only be in the subpath of $T_s$ from the pointer to $x$, the pointer can only advance along the path.
\end{proof}

\pagebreak

\subparagraph*{Fault-tolerant shortest paths tree oracle.}

\faulttoleranttree*

\begin{proof}
We explain how to modify the $O(M^{1/2} \nwspace n^{3/2})$ space oracle so as, given an edge $e$ in the tree $T_s$ as a query,
it reports the shortest paths tree $T_{s,e}$ of $G-e$ rooted at $s$ in $O(n)$ time.
We will not incur the $\Otilde(1)$ penalty for accessing the distances of replacement paths
in the far case II.
However, we still need to compute their starting points 
accounting for the $O(n^2)$ term in preprocessing time.
The extension to vertex failures and $M > n$ is immediate.

We use the notation of \autoref{lem:path-reporting}.
Let $x \in D$ be a pivot and $h$ be the number of target vertices that have $x$ as a pivot.
We only have to describe the storage of the $\ell = O(h\sqrt{Mn})$ values $\last(t,e)$ of the data structure we use to retrieve the predecessor of a target vertex $t$ for which $D[t]=x$ 
when the replacement path falls in the far case II. The other values $\last(t,e)$ can always be retrieved in constant time, as discussed above.

We partition these values into $O(\sqrt{Mn})$ groups, each of size $\Theta(h)$.
More precisely, let $z_1,\dots,z_\ell$ be the vertices of $P(s,x)$ at which replacement paths of far case II diverge, in order from $s$ to $x$. For each $z_i$, let $\mu_i$ be the overall number of replacement paths of far case II type for target vertices $t$ such that $D[t]=x$ and $P(s,t,e)$ diverges from $P(s,x)$ exactly at vertex $z_i$. We partition $P(s,x)$ into $p=O(\sqrt{Mn}))$ {\em segments} $S_1,\dots,S_p$ such that $h \leq \sum_{z_i \in S_j}\mu_i \leq 2h$. For each segment $S_i$ we store all the values $\last(t,e)$ of replacement paths of far case II type that diverge from $P(s,x)$ at a vertex of $S_i$. If $i > 1$ and, for some vertex $t$ with $D[t]=x$, there is no element $\last(t,e)$ associated with segment $S_i$, we store the value $\last(t,e)$ corresponding to the minimum value $d(s,t,e)$ chosen from the previous segment $S_{i-1}$.

Upon query $e$, we scan all the pivots and, for each pivot $x$, we take the group of predecessors associated with the segment, depending on $x$, that contains $e$. For each $t$ we compute the value $\last(t,e')$ where $e'$ is the edge closest to $e$ that is above $e$ (possibly $e'=e$), if it exists. The predecessor of $t$ in $T_{s,e}$ when the replacement path falls in the far  case (either I or II) is either $\last(t)$ or $\last(t,e')$, where $\last(t)$ is preferred over $\last(t,e')$ when  $\last(t,e')$ does not exist or $d(s,x,e)+d(x,t) \leq d(s,t,e')$.
\end{proof}

\section{Derandomizing the SSRP Algorithm of Chechik and Cohen}
\label{appendix:derand-soda-proof}

We show here how to construct the graphs $G_e$ efficiently 
and prove the correctness of the derandomization procedure described in \autoref{sec:derandomize}.

\subsection{Constructing the Graphs $G_e$}
\label{subapp:constructing_G_e}

Let $T_{s,v}$ be the subtree of $T_s$ rooted in $v$, and let
$\bar{T}_{s,v}$ be the tree $T_{s,v}$ truncated at depth $4
\sqrt{n}$ (i.e., we trim the subtree of $v$ at depth
$4\sqrt{n}$, and remove from it all the vertices whose distance from
$v$ is more than $4\sqrt{n}$). 
Let $E^1_e = \{ \{x,y\} \in E \mid x \in V(\bar{T}_{s,v}) \vee y \in
V(\bar{T}_{s,v}) \} \setminus \{e\}$ be the set of all edges incident to vertices in $\bar{T}_{s,v}$ (except the edge $e$ itself), we set $w_e(\{x,y\}) = 1$ for every $\{x,y\} \in E^1_e$.
Let $E^2_e = \{ \{s,y\} \mid  y \in V(\bar{T}_{s,v}) \}$ be
additional edges from $s$ to every vertex $y$ in the subtree
$\bar{T}_{s,v}$.
Let $N(v) = \{ u \in V  \mid  \{u,v\} \in E \}$ denote the open neighborhood of $v$ in $G$.
We set $w_e(\{s,y\}) = \min \{ \infty, \min_{x \in N(y) \setminus V(T_{s,v})} \{ d(s,x) + 1 \} \}$ for every $\{s,y \} \in E^2_e$.
Let $E_e = E^1_e \cup E^2_e$.

One can run Dijkstra's algorithm from $s$ in every graph $G_e$ for every edge $e \in E(T_s)$ in $\Otilde(m\sqrt{n})$ time.
It is not difficult to observe that running all these Dijkstra computations in all the graph $G_e$ takes $\Otilde(m\sqrt{n})$ time, as every vertex $z$ belongs to at most $4\sqrt{n}$ trimmed trees $\bar{T}_{s,v}$ for some vertex $v \in V$ and thus every vertex contributes its degree to at most $4\sqrt{n}$ graphs $G_e$ and thus to at most $4\sqrt{n}$ Dijkstra's computations.
Let $T_{G_e}$ be the shortest paths tree computed in the graph $G_e$.

\subsection{Proof of Correctness}
\label{subapp:proof_of_correctness}

We prove that the SSRP algorithm is correct when using the deterministically chosen set of vertices $B$ as descrbied in \autoref{sec:derandomize}. 
We need the following definition.

\begin{definition} [Replaceability of an edge]
Given two vertices $s,t \in V$, we say that an edge $e$ is \emph{$(s,t)$-replaceable} if $d(s,t,e) = d(s,t)$. In other words, $e$ is $(s,t)$-replaceable iff there exists a shortest $s$-to-$t$ path that does not pass through $e$. 
\end{definition}

Although $G$ is undirected, we use $e = (u,v)$ to indicate 
that edge $e = \{u,v\}$ is such that vertex $u$ is closer to $s$ than $v$.
Note that, if both endpoints have the same distance from $s$,
then $e$ is not contained in the tree $T_s$.
In particular, $e$ then is $(s,t)$-replaceable for every $t$.

For the randomized pivot selection,
Chechik and Cohen~\cite{ChechikCohen19SSRP_SODA} showed that each query $(s,t,e)$ 
w.h.p.\ belongs to at least one of the following cases.

\begin{enumerate}
	\item \textbf{\textsf{Replaceable edge case.}} It holds that $d(s,t,e) = d(s,t)$.
	\item \textbf{\textsf{Small fall case.}} We have $e = (u,v)$
		 and $d(s,t) < d(s,t,e) \le d(s,u) + 4\sqrt{n}$.
	\item \textbf{\textsf{Single pivot case.}} There exists a pivot $x \in B$
		 such that $(s,x,e)$ belongs to Case~1 or 2, and $e$ is $(x,t)$-replaceable.
	\item \textbf{\textsf{Double pivot case.}} There exists two pivots $x,y \in B$
		such that $e$ is $(s,x)$-replaceable, $(x,y)$-replaceable, $(t,y)$-replaceable
		and $d(s,t,e) = d(s,x) + d(x,y) + d(y,t)$.  
\end{enumerate}

\noindent
Note that in the definition of Case~4 $x = y$ is possible.
We prove the correctness of the algorithm described in \autoref{sec:derandomize} by proving that every query $(s,t,e)$ belongs to Case~1, 2, 3, or 4 with the deterministic pivot selection as well.
This is indeed sufficient to derandomize the algorithm
as the handling of the four cases in~\cite{ChechikCohen19SSRP_SODA} is deterministic.

\begin{lemma} \label{lemma:ssrp-soda-derand}
Let $(s,t,e)$ be a query, let $B$ be the deterministic set of pivots obtained as in \autoref{sec:derandomize}. Then it holds that $(s,t,e)$ belongs to Case 1, 2, 3, or 4.
\end{lemma}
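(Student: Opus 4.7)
The plan is to replicate the probabilistic case analysis of Chechik and Cohen~\cite{ChechikCohen19SSRP_SODA} in a deterministic setting: wherever their original argument appeals to the fact that a random pivot set of size $\Otilde(\sqrt{n})$ hits every sufficiently long shortest path with high probability, I replace it with a deterministic appeal to one of $B_1$, $B_2$, or $B_3$, whose hitting guarantees are tailored to the concrete path families $\mathcal{L}_1$, $\mathcal{L}_2$, $\mathcal{L}_3$. The trivial situations are dispatched at once: if $d(s,t,e) = d(s,t)$ the query satisfies Case~1, and if $d(s,t,e) \leq d(s,u) + 4\sqrt{n}$ with $e = (u,v)$ it satisfies Case~2.

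For the remaining queries I invoke the theorem of Afek et al.~\cite{Afek02RestorationbyPathConcatenation_journal} to write $P(s,t,e) = P(s,q) \circ P(q,t)$ with both sub-paths being shortest in $G$, and then split on their lengths. If $P(q,t)$ spans at least $\sqrt{n}/2$ vertices, then past its last re-merge with $T_s$ it agrees with the tree path of $T_s$ ending at $t$, so its last $\sqrt{n}/2$ vertices form an element of $\mathcal{L}_1$ and are hit by some $y \in B_1$; inspecting the tree $T_y \in \{T_x : x \in B_1\}$ and looking at the last $\sqrt{n}/2$ vertices of the $y$-to-$s$ portion of $P(s,t,e)$ identifies an element of $\mathcal{L}_2$ hit by some $x \in B_2$, and the pair $(x,y)$ realises the double-pivot Case~4. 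If instead $P(q,t)$ is shorter than $\sqrt{n}/2$, the large-fall hypothesis forces the detour to remain within the $4\sqrt{n}$-truncated subtree $\bar T_{s,v}$; by the construction of $G_e$ with its shortcut edges $E^2_e$, the relevant portion of $P(s,t,e)$ appears as a path of at least $\sqrt{n}/2$ edges in the Dijkstra tree $T_{G_e}$ and is hit by some $x \in B_3$. A brief calculation then shows that $(s,x,e)$ has zero or small fall and that $e$ is $(x,t)$-replaceable, placing the query in the single-pivot Case~3.

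The hard part is the geometric matching step: long shortest sub-paths of $P(s,t,e)$ need not coincide with paths in the fixed trees $T_s$ or $T_x$, so I must argue that by breaking ties appropriately one can always identify each long subpath with a concrete element of $\mathcal{L}_1$, $\mathcal{L}_2$, or $\mathcal{L}_3$. In the $\mathcal{L}_3$ branch in particular, it is necessary to verify that the shortcut edges of $G_e$ correctly compress the off-subtree portion of $P(s,t,e)$ into a single edge of the appropriate weight without altering its final $\sqrt{n}/2$ edges, so that the hit vertex $x \in B_3$ is actually on $P(s,t,e)$ in the original graph $G$ and witnesses the single-pivot Case~3. Once these structural claims are established, putting them together with the earlier case splits exhausts all possibilities and completes the proof.
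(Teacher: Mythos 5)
There is a genuine gap, and it sits exactly where you flag ``the hard part'': the identification of long segments of $P(s,t,e)$ with elements of $\mathcal{L}_1$ and $\mathcal{L}_2$. Those families consist only of \emph{tree} paths --- $\mathcal{L}_1$ contains the last $\sqrt{n}/2$ edges of paths in $T_s$, and $\mathcal{L}_2$ the last $\sqrt{n}/2$ edges of paths in the BFS trees $T_x$ rooted at $x \in B_1$. Your plan hits the \emph{suffix} $P(q,t)$ with $B_1$, claiming that past its last re-merge with $T_s$ it agrees with the tree path to $t$. But the common suffix can be arbitrarily short (the detour may only re-merge at $t$ itself), so the last $\sqrt{n}/2$ vertices of $P(q,t)$ need not form any element of $\mathcal{L}_1$, and no tie-breaking fixes this: $B_1$ simply carries no guarantee about them. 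The same objection applies to your $\mathcal{L}_2$ step, since the ``$y$-to-$s$ portion of $P(s,t,e)$'' is not in general a path in any tree $T_x$ with $x \in B_1$. Your split criterion (length of $P(q,t)$ in Afek's decomposition) therefore does not partition the queries into branches that the deterministic hitting sets actually cover.

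The paper's proof avoids this by orienting the argument the other way. It chooses $P(s,t,e)$ with \emph{maximal common prefix} with $T_s$, so the prefix up to the divergence point $q$ literally \emph{is} the tree path $P_{T_s}(s,q)$; hence $B_1$ provably hits its last $\sqrt{n}/2$ vertices, yielding a first pivot $x$ that lies on $P(s,t,e)$ \emph{before} the detour. It then walks exactly $\sqrt{n}$ steps forward to a vertex $y'$ and splits on whether $e \in P_{T_x}(x,y')$: if yes, a short cycle exists and $B_3$ (hitting the trees $T_{G_e}$) together with a cycle lemma forces Case~3; if no, $B_2$ hits the tree path $\Last_{T_x,\sqrt{n}/2}(y')$ in $T_x$, producing a second pivot $y$ that need \emph{not} lie on $P(s,t,e)$ at all --- Case~4 only requires the three replaceability conditions and the identity $d(s,t,e) = d(s,x)+d(x,y)+d(y,t)$, which are then verified using the maximality of the common prefix. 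Reworking your argument along these lines (pivot $x$ on the tree-path prefix first, second pivot taken off-path from $T_x$, and the case split governed by $e \in P_{T_x}(x,y')$ rather than by the length of $P(q,t)$) is what is needed to close the gap.
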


To prove \autoref{lemma:ssrp-soda-derand} we need the following lemmas.

\begin{lemma}\label{lemma:edge-on-one-side-only}
Let $y \in P(s,t,e)$, if $e \in P_{T_s}(s, y)$
then $e$ is $(y,t)$-replaceable. 
\end{lemma}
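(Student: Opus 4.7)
The plan is to argue by contradiction: assume $d(y,t,e)>d(y,t)$, so every shortest $y$-to-$t$ path in $G$ must use the edge $e$.

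First I would observe that every subpath of the replacement path $P(s,t,e)$ is itself a shortest path in $G-e$, so splitting $P(s,t,e)$ at $y$ gives the identity $d(s,t,e)=d(s,y,e)+d(y,t,e)$. Combined with the trivial bound $d(s,y,e)\geq d(s,y)$ and the contradiction hypothesis $d(y,t,e)>d(y,t)$, this yields $d(s,t,e)>d(s,y)+d(y,t)$, so it would be enough to exhibit \emph{any} $s$-to-$t$ walk in $G-e$ of length at most $d(s,y)+d(y,t)-2w(e)$ to reach a contradiction.

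The heart of the plan is the construction of such a walk, exploiting the assumption $e\in P_{T_s}(s,y)$ together with undirectedness. Let $Q$ be a shortest $y$-to-$t$ path in $G$; by the contradiction hypothesis, $e\in Q$. Writing $e=\{u,v\}$ with $u$ closer to $s$ in $T_s$, decompose the tree path as $P_{T_s}(s,y)=P_1 \circ e \circ P_2$ with $P_1\colon s\to u$ and $P_2\colon v\to y$, and similarly $Q=Q_1 \circ e \circ Q_2$. Now both $P_{T_s}(s,y)$ and $Q$ traverse $e$ exactly once, so one can ``cancel'' these two occurrences by a re-routing. If $Q$ traverses $e$ from $u$ to $v$, then concatenating $P_1$ with the reverses of $Q_1$ and $P_2$ and then $Q_2$ produces a walk $s\to u\to y\to v\to t$ in $G-e$; if instead $Q$ traverses $e$ from $v$ to $u$, then already $P_1 \circ Q_2$ is a walk $s\to u\to t$ in $G-e$. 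In both cases a direct length computation, based on $w(P_1)+w(e)+w(P_2)=d(s,y)$ and $w(Q_1)+w(e)+w(Q_2)=d(y,t)$, shows that the walk has length at most $d(s,y)+d(y,t)-2w(e)$, giving the required contradiction.

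The only subtle point, and therefore the main obstacle, is the case distinction on the orientation in which $Q$ crosses $e$. It is also precisely the place where undirectedness is indispensable, since the first construction reverses the subpaths $P_2$ and $Q_1$, which is in general illegitimate on directed graphs. Everything else is routine bookkeeping.
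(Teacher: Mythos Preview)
Your proof is correct and follows essentially the same outline as the paper's: argue by contradiction, use $y\in P(s,t,e)$ to get $d(s,t,e)\ge d(s,y)+d(y,t)$, and then build a shorter $s$-$t$ walk in $G-e$ from pieces of the tree path $P_{T_s}(s,y)$ and a shortest $y$-$t$ path $Q$ that must contain $e$. Your Case~2 ($Q$ traverses $e$ as $v\to u$) is identical to the paper's construction $P_{T_s}(s,y)[s,u]\circ Q[u,t]$.

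The only genuine difference is in Case~1 ($Q$ traverses $e$ as $u\to v$). The paper disposes of this case without constructing anything: since $e\in P_{T_s}(s,y)$ with $u$ closer to $s$, one has $d(y,v)<d(y,u)$, so $u$ cannot precede $v$ on a shortest $y$-$t$ path. You instead build the four-piece walk $P_1\circ\overline{Q_1}\circ\overline{P_2}\circ Q_2$ of length $d(s,y)+d(y,t)-2w(e)$. Both are fine; the paper's argument is slightly shorter, while yours is more uniform (same mechanism in both cases) and makes no use of the fact that $T_s$ is a \emph{shortest}-path tree beyond the length identity, so it reads as marginally more robust.
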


\begin{proof}
Let $e \in P_{T_s}(s, y)$ and assume by contradiction that $e$ is not $(y,t)$-replaceable. Then every shortest path $P(y,t)$ contain $e$. 

Denote by $e = (u,v)$ such that $u$ is closer to $s$ than $v$. As $e \in P_{T_s}(s,y)$ it follows that $v$ is closer to $y$ than $u$. Assume by contradiction that $e \in P(y,t)$.
If $u$ appears before $v$ along $P(y,t)$ 
then $u$ is closer to $y$ than $v$, which is a contradiction. 
If $v$ appears before $u$ along $P(y,t)$ then the path  $P_{T_s}(s,y)[s,u] \circ P(y,t)[u,t]$ is a replacement path for $(s,t,e)$ that is shorter than $d(s,y) + d(y,t)$ (and $d(s,y) + d(y,t) = d(s,t,e)$ as $t \in P(s,t,e)$) which is a contradiction.
\end{proof}

\begin{lemma}\label{lemma:cycle-lemma}
Let $x \in B \cap P(s,t,e)$ such that $e = (u,v) \in P_{T_s}(s,x)$, if there exists a simple cycle $C$ such that $e \in C$ and $d(u,x) + |C| \le 4\sqrt{n}$ then $(s,x,e)$ belongs to either Case 1 or Case 2 and $(s,t,e)$ belongs to Case 3.
\end{lemma}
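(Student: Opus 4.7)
The plan is to prove the two assertions in sequence: first that $(s,x,e)$ falls in Case~1 or Case~2, by constructing an explicit $s$-to-$x$ path that avoids $e$ and has length at most $d(s,u)+4\sqrt{n}$; and second that $(s,t,e)$ falls in Case~3, by invoking \autoref{lemma:edge-on-one-side-only} to show that the pivot $x$ witnesses the $(x,t)$-replaceability requirement. Since $x \in B$ will already be guaranteed by the hypothesis, the two parts together give exactly the Single Pivot case.

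For part~1, I would exploit the cycle $C$ as a detour. Since $C$ is a simple cycle containing $e = \{u,v\}$, removing $e$ from $C$ leaves a $u$-$v$-path of $|C|-1$ edges (recall the graph is unweighted). Concatenating the tree segment $P_{T_s}(s,u)$, this detour, and the tree segment $P_{T_s}(v,x)$ of length $d(u,x)-1$ yields a walk from $s$ to $x$ avoiding $e$, of total length
\[
	d(s,u) + (|C|-1) + (d(u,x) - 1) \;\le\; d(s,u) + 4\sqrt{n} - 2,
\]
using $d(u,x)+|C| \le 4\sqrt{n}$. Hence $d(s,x,e) \le d(s,u)+4\sqrt{n}$. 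Either $d(s,x,e)=d(s,x)$ (Case~1) or the strict inequality $d(s,x)<d(s,x,e)\le d(s,u)+4\sqrt{n}$ holds (Case~2); no third option exists.

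For part~2, I apply \autoref{lemma:edge-on-one-side-only} with $y=x$. The hypotheses are met: $x$ lies on the replacement path $P(s,t,e)$, and $e \in P_{T_s}(s,x)$ by assumption. The lemma therefore yields that $e$ is $(x,t)$-replaceable. Combined with $x \in B$ and the conclusion of part~1 that $(s,x,e)$ belongs to Case~1 or Case~2, this meets exactly the definition of Case~3 for the query $(s,t,e)$.

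I do not anticipate a real obstacle here; the only subtle point is getting the arithmetic of the constructed detour right (namely that the detour saves exactly the edge $e$ from $C$ and that the tree segment from $v$ to $x$ has length $d(u,x)-1$ since $e \in P_{T_s}(s,x)$ with $u$ closer to $s$), and pedantically verifying that the bound $d(s,u)+4\sqrt{n}$ in Case~2 is met even when the inequality $d(u,x)+|C|\le 4\sqrt{n}$ is tight. Both are routine once the decomposition into tree prefix, cycle detour, and tree suffix is in place.
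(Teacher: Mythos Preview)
Your proposal is correct and follows essentially the same route as the paper: build the detour $P_{T_s}(s,u)\circ (C-e)\circ P_{T_s}(v,x)$ to bound $d(s,x,e)\le d(s,u)+4\sqrt{n}$, hence Case~1 or~2 for $(s,x,e)$, and then invoke \autoref{lemma:edge-on-one-side-only} with $y=x$ to get $(x,t)$-replaceability and conclude Case~3. Your arithmetic is in fact slightly sharper than the paper's (you track the $-2$), but the argument is the same.
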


\begin{proof}
The path obtained from $P_{T_s}(s,x)$ by replacing the edge $e$ with $C - \{e\}$ ({\sl i.e.}, the path $P(s,u) \circ P_{C - \{e\}}(u,v) \circ P(v,x)$) is an $s$-to-$x$ path that avoids $e$ and its length is less than $d(s,x) + |C|$.
Thus, $d(s,x,e) < d(s,x) + |C| = d(s,u) + d(u,x) + |C| \le d(s,u) + 4\sqrt{n}$ (where the last inequality holds as we assume in the lemma that $d(u,x) + |C| \le 4\sqrt{n}$). 
Then either $d(s,x,e) = d(s,x)$ and then $(s,x,e)$ belongs to Case 1, or $d(s,x) < d(s,x,e) < d(s,u) + 4\sqrt{n}$ and then $(s,x,e)$ belongs to Case 2.

Since $e \in P_{T_s}(s,x)$ then according to \autoref{lemma:edge-on-one-side-only} it holds that $e$ is $(x,t)$-replaceable.
As $x \in B$ is a pivot such that $(s,x,e)$ belongs to either Case 1 or Case 2 and $e$ is $(x,t)$-replaceable, and thus $(s,t,e)$ belongs to Case 3. 
\end{proof}

\begin{proof}[Proof of \autoref{lemma:ssrp-soda-derand}]
Assume that $(s,t,e)$ does neither belong to Cases 1,2, nor 3. We prove that it then must belong to Case~4.
Let $P(s,t,e) = ( s = v_1, v_2, \ldots, v_k = t )$ be a replacement path for $(s,t,e)$ that contains a maximum common prefix with $T_s$. More precisely, let $q$ be the last vertex along $P(s,t,e)$ such that the path from $s$ to $q$ in $T_s$ does not contain $e$. When we say that $P(s,t,e)$ 
has a maximum common prefix with $T_s$ we mean that $P(s,t,e)[s,q]$ has maximum length among all the replacement paths for $(s,t,e)$. 

By definition of $B_1$ it holds that there exists a vertex $x \in B_1 \cup \{s\}$ that hits the path $\Last_{T_s,  \sqrt{n}/2 }(q)$. 
Let $y'$ be the $\sqrt{n}$-th vertex along $P(s,t,e)[x,t]$, or if $|P(s,t,e)[x,t]| < \sqrt{n}$ then let $y' = t$. 
We consider two alternatives, either $e \in P_{T_x}(x,y')$ or $e \notin P_{T_x}(x,y')$.

We prove that it cannot be that  $e \in P_{T_x}(x,y')$.
Assume by contradiction that $e \in P_{T_x}(x,y')$, then  $P_{T_x}(x,y') \cup P(s,t,e)[x,y']$ contains a simple cycle $C$ such that $e \in C$ and $|C| \le d(x,y') + d(x,y',e) \le 2 d(x,y',e) \le 2 \sqrt{n}$ (where the last inequality holds as $y'$ is the $\sqrt{n}^{\text{th}}$ vertex along $P(s,t,e)[x,t]$ or if $|P(s,t,e)[x,t]| < \sqrt{n}$ then $y' = t$).
If $y' = t$ then according to \autoref{lemma:cycle-lemma} it holds that $(s,t,e)$ belongs to either Case 1 or Case 2. 
Assume $y' \ne t$.
By definition of $B_3$ it holds that there exists a vertex $y \in B_3$ that hits the path $\Last_{T_{G_e},  \sqrt{n}/2 }(y')$. 
Since $P(s,t,e)$ is a replacement path whose common prefix with $T_s$ is maximal, it must hold that $e \in  P_{T_s}(s,y)$ (otherwise there is a replacement path for $(s,t,e)$ whose prefix is $P_{T_s}(s,y)$ which is longer than the prefix $P_{T_s}(s,q)$ of $P(s,t,e)$). 
We obtain that $y \in B \cap P(s,t,e)$ such that $e = (u,v) \in P_{T_s}(s,y)$ and there exists a simple cycle $C$ such that $e \in C$ and $d(u,y) + |C| \le d(x,y') + |C| \le 3\sqrt{n}$ then according to \autoref{lemma:cycle-lemma}, $(s,t,e)$ belongs to Case 3.

For the rest of the proof we assume that $e \notin P_{T_x}(x,y')$ and prove that $(s,t,e)$ belongs to Case 4.
If $y' = t$ then $(s,t,e)$ belongs to Case 4 with $y=x$ as $e$ is $(s,x)$-replaceable (as $P(s,t,e)[s,x] = P_{T_s}(s,x)$ does not contain the edge $e$), $e$ is also $(x,t)$-replaceable (since $e \notin P_{T_x}(x,y') = P_{T_x}(x,t)$)  and $d(s,t,e) = d(s,x) + d(x,x) + d(x,t)$ (as $x$ is on $P(s,t,e)$).

If $y' \ne t$, then there exists a vertex $y \in B_2$ that hits the path $\Last_{T_x,  \sqrt{n}/2 }(y')$. 
Since $P(s,t,e)$ is a replacement path whose common prefix with $T_s$ is maximal, it must hold that $e \in  P_{T_s}(s,y)$ (otherwise there is a replacement path for $(s,t,e)$ whose prefix is $P_{T_s}(s,y)$ which is longer than the prefix $P_{T_s}(s,q)$ of $P(s,t,e)$). As $e \in  P_{T_s}(s,y)$, according to \autoref{lemma:edge-on-one-side-only} it holds that $e$ is $(y,t)$-replaceable. It follows that $(s,t,e)$ belongs to Case 4, as $e$ is $(s,x)$-replaceable (as $P(s,t,e)[s,x] = P_{T_s}(s,x)$ does not contain the edge $e$), $e$ is $(x,y)$-replaceable (as $e \notin  P_{T_x}(x,y')$, and as $P_{T_x}(x,y')$ contains $P_{T_x}(x,y)$ it follows that also $e \notin  P_{T_x}(x,y)$), $e$ is $(y,t)$-repalceable and $d(s,t,e) = d(s,x) + d(x,y) + d(y,t)$ (it is easy to see that $P_{T_s}(s,x) \circ P_{T_x}(x,y) \circ P_{T_y}(t,y)$ is a replacement path for $e$).
%
\end{proof}

\section{Derandomizing the Algorithm of Grandoni and Vassilevska Williams}
\label{sec:derand-grandoni}

\sloppy
We describe how to derandomize the algebraic SSRP algorithm of Grandoni and Vassilevska Williams~\cite{GrandoniVWilliamsFasterRPandDSO_journal} for undirected graphs with positive integer edge weights in the range $[1,M]$.
For a pair $(t, e) \in V \times E$, if $e$ does not lie along
$P_{T_s}(s,t)$, then $d(s,t,e) = d(s,t)$. 
The remaining pairs $(t, e)$ are called relevant, and we focus on them.

The first step in their algorithm is a partition of $T_s$ into a small (subpolynomial)
number of subtrees $T'$. Using balanced tree separators, they can guarantee that each $T'$
contains roughly the same number of nodes (modulo constants). 
Let $P'$ be the path from $s$ to the root of $T'$. 
For any relevant
pair $(t, e)$ there must exist some subtree $T'$ such that
$t \in V(T')$ and either 
(a) $e \in E(T')$ or 
(b) $e \in E(P')$. 

This way they identify a collection of subproblems, where
each subproblem is of the following two forms. In a subtree problem, we are given a subtree $T'$ of $T$ and we want to compute replacement paths $P(s,t,e)$ where both
$t$ and $e$ belong to $T'$ (handling (a) above).
In a subpath problem we are given a subpath $P'$ of $T$ from the source $s$ to a node $t'$,
and a subtree $T'$ of $T$ rooted at $t'$, 
and we want to compute replacement paths $P(s,t,e)$ with 
$t$ in $T'$ and $e$ in $P'$ (handling (b) above).
The subpath problems $(P', T')$ can be easily derandomized using previous results, we defer the description to the end of this section. 

They solve each subtree problem $T'$ recursively using randomization, after
a preliminary randomized compression step where they replace the nodes outside $T'$ 
with a subpolynomially smaller random subset $B_{rand}$ of them, 
adding auxilliary edges representing shortest paths between the sampled nodes.
We show how to derandomize the random selection of the set of pivots $B_{rand}$ by greedily computing a set of pivots $B_{greedy}$.

First, deterministically compute APSP in the graph $G-E(T')$ as in~\cite{ShZw99} (the shortest paths trees are computed deterministically using~\cite{AlNa96}).
Let $T'_v$ be the shortest paths tree rooted in $v$ in the graph $G-E(T')$.
Run the GreedyPivotsSelection algorithm to find in $\Otilde(n^2)$ time a hitting set $B_{greedy}$ of size $O(n \log n / H)$ that hits all the paths $\{ \Last_{T'_v, H}(u) \ | \ u, v \in V,  \ |\Last_{T'_v, H}(u)| = H \}$. 
Construct the complete graph $G'$ on node set $B_{greedy} \cup V(T') \cup \{s\}$ whose edges $e$ are labelled as follows.
For every $u,v \in B_{greedy} \cup V(T') \cup \{s\}$, if $d_{G-E(T')}(u,v) \le HM$ set $\omega'(u,v) = d_{G-E(T')}(u,v)$, otherwise set $\omega'(u,v) = \infty$. 
Then add back edges $e \in E(T')$ with their original weight $\omega'(e) = \omega(e)$.
To prove the correctness of our derandomization, we show that $G'$ contains a contracted
representative of each replacement path for the considered triples $(s, t, e)$.

\begin{lemma}
Let $t \in V(T')$ and $e \in E(T')$ be an edge that is on the path from $t'$ to $t$ in $T'$. Then $d_{G'}(s,t,e) = d_{G}(s,t,e)$. 
\end{lemma}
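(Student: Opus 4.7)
I will prove the two inequalities $d_{G'}(s,t,e) \ge d_G(s,t,e)$ and $d_{G'}(s,t,e) \le d_G(s,t,e)$ separately. The first direction is the easy one: every edge of $G'$ is either an edge of $T'$ kept with its original weight $\omega(e)$, or a pair $\{u,v\}$ whose $G'$-weight equals $d_{G-E(T')}(u,v)$, hence corresponds to a walk in $G-E(T')$ of the same length. Concatenating the walks represented by the edges of any $s$-$t$-path in $G'$ that avoids $e$ gives a walk in $G$ of equal total weight that avoids $e$ (the tree edges used are different from $e$ by assumption, and the $G-E(T')$-pieces cannot contain $e\in E(T')$ at all). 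This yields $d_G(s,t,e)\le d_{G'}(s,t,e)$.

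For the reverse inequality, I will exhibit an $s$-$t$-walk in $G'$ of weight $d_G(s,t,e)$ that does not traverse $e$. Fix a replacement path $P = P(s,t,e)$ and partition it into maximal \emph{runs}, each consisting entirely of edges in $E(T')$ or entirely of edges in $E(G)\setminus E(T')$. The transition vertices between runs all lie in $V(T')\cup\{s\}$ because they are incident to a tree edge of $T'$ in $P$ (or equal $s$ or $t$). For every non-$T'$-run $P_i$ from $u_i$ to $v_i$, a detour argument shows $w(P_i) = d_{G-E(T')}(u_i,v_i)$: any strictly shorter $u_i$-$v_i$-path in $G-E(T')$ could be spliced in place of $P_i$, and, being edge-disjoint from $E(T')\ni e$, would produce a replacement path for $e$ shorter than $P$, a contradiction.

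It remains to represent each non-$T'$-run inside $G'$ without increasing its length. Replace $P_i$ by the canonical shortest path $Q_i$ from $u_i$ to $v_i$ in the tree $T'_{u_i}$ rooted at $u_i$; it has the same weight as $P_i$. If $Q_i$ uses at most $H$ edges then $w(Q_i)\le HM$, so the single edge $\{u_i,v_i\}$ of $G'$ already realises the weight $w(P_i)$. Otherwise $|\Last_{T'_{u_i},H}(v_i)| = H$, so this subpath is hit by some $\pi\in B_{greedy}$ by the defining property of $B_{greedy}$. The suffix $Q_i[\pi..v_i]$ has at most $H$ edges, thus weight at most $HM$, and is represented in $G'$ by the direct edge $\{\pi,v_i\}$; we recurse on the prefix $Q_i[u_i..\pi]$, which is exactly the root-to-$\pi$ path in $T'_{u_i}$, so the same argument applies. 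The recursion strictly decreases the edge count and terminates, yielding a $G'$-walk representing $P_i$ of total weight $w(Q_i)=w(P_i)$ whose intermediate vertices lie in $B_{greedy}\subseteq V(G')$. Concatenating the $G'$-walks obtained for all runs (the $T'$-runs trivially embed into $G'$, using no copy of $e$) gives an $s$-$t$-walk in $G'$ that avoids $e$ and has weight $w(P)=d_G(s,t,e)$, completing the proof. The main conceptual obstacle is the recursive application of the hitting-set property: one must notice that after splitting at a pivot the prefix is still a root-to-vertex path in the \emph{same} shortest path tree $T'_{u_i}$, so the sets $\Last_{T'_{u_i},H}(\cdot)$ hit by $B_{greedy}$ remain the right objects at every recursion level.
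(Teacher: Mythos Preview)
Your argument is correct and rests on the same key observation as the paper: the hitting-set property of $B_{greedy}$ ensures that every sufficiently long root-to-vertex path in some $T'_v$ can be segmented at pivots into hops of at most $H$ edges, each realized by a finite-weight $G'$-edge. The paper organizes the second inequality differently, however. Instead of first decomposing the replacement path into maximal $T'$- and non-$T'$-runs and then recursively splitting each non-$T'$-run, it runs a single induction on $d_G(s,t,e)$ over all $t\in V(G')$: at each step it takes the last vertex $v\in V(G')\setminus\{t\}$ on $P_G(s,t,e)$, observes that the final segment $P_G(s,t,e)[v..t]$ is either a single $T'$-edge or (after a WLOG replacement) equals $P_{T'_v}(v,t)$ with at most $H$ edges by the hitting property, and then invokes the induction hypothesis at $v$. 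The inductive framing is a bit more economical---it needs neither the run decomposition nor the splicing argument that each non-$T'$-subpath of $P(s,t,e)$ is itself shortest in $G-E(T')$---whereas your explicit construction makes the compression of $P(s,t,e)$ into a concrete $G'$-walk more visible.

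One small caveat: your claim that ``the recursion strictly decreases the edge count'' tacitly assumes the hitting pivot $\pi$ differs from the current endpoint, which can fail if that endpoint itself lies in $B_{greedy}$ and is the only pivot on its own $\Last$-window. This is a minor technicality (the paper's inductive step has the analogous edge case when its hitting vertex $u$ coincides with $t$), but it deserves a sentence of justification.
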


\begin{proof}
First we prove that $d_{G'}(s,t,e) \ge d_{G}(s,t,e)$. Observe that every edge in $G'$ that does not appear in $G$ represents a contraction of a path in $G - E(T')$. Thus, every path in $G'$ that avoids an edge $e \in E(T')$ is a contracted version of a path in $G$ that avoids the edge $e$, and hence $d_{G'}(s,t,e) \ge d_{G}(s,t,e)$.

Next, we prove that $d_{G'}(s,t,e) \le d_{G}(s,t,e)$.
We prove by induction that $d_{G'}(s,t,e) \le d_{G}(s,t,e)$ for every $t \in V(G')$.
For the base of the induction, it trivially holds that 
$d_{G'}(s,s,e) = d_{G}(s,s,e) = 0$.
For the inductive step, given a vertex $t \in V(G')$ the induction hypothesis is that for every $v \in V(G')$ such that 
$d_{G}(s,v,e) < d_{G}(s,t,e)$ it holds that
$d_{G'}(s,v,e) \le d_{G}(s,v,e)$, and we prove that
$d_{G'}(s,t,e) \le d_{G}(s,t,e)$. 
Let $P_G(s,t,e)$ be a replacement path for $(s,t,e)$ in $G$,
let $v \in V(G') \setminus \{t\}$ be the last vertex in $V(G')$ along $P_G(s,t,e)$. It follows that either $P_G(s,t,e)[v,t] = (v,t) \in E(T')$ or $P_G(s,t,e)[v,t]$ is a path in $G - E(T')$. If $P_G(s,t,e)[v,t] = (v,t) \in E(T')$ then $d_{G'}(s,t,e) \le d_{G'}(s,v,e) + \omega'(v,t) \le d_G(s,v,e) + \omega(v,t) = d(s,t,e)$, where the first inequality holds by the triangle inequality in $G'$ and the last inequality holds by the induction hypothesis and the fact that $\omega'(v,t) = \omega(v,t)$ as $(v,t) \in E(T')$.

We are left with the case that $P_G(s,t,e)[v,t]$ is a path in $G - E(T')$. Without loss of generality, we may assume that $P(s,t,e)$ is chosen such that $P_G(s,t,e)[v,t] = P_{T'_v}(v,t)$, and therefore $d_G(v,t,e) = d_{G - E(T')}(v,t)$. We claim that $|P_G(s,t,e)[v,t]| \le H$. Indeed, assume by contradiction that $|P_G(s,t,e)[v,t]| > H$, then by the greedy selection of $B_{greedy}$ it holds that at least one vertex $u \in B_{greedy}$ hits the path $\Last_{T'_v, H}(t)$, thus $u \in V(G')$ and also $|P_G(s,t,e)[u,t]| \le H < |P_G(s,t,e)[v,t]|$ which contradicts the assumption that $v$ is the last vertex of $V(G')$ along $P_G(s,t,e)$. 
Therefore, it holds that $|P_G(s,t,e)[v,t]| < H$ and hence $G'$ contains an edge $(v,t)$ with weight $\omega_{G'}(v,t) = d_G(v,t,e) \le MH$. 
We conclude that 
$d_{G'}(s,t,e) \le d_{G'}(s,v,e) + \omega_{G'}(v,t) \le  d_{G}(s,v,e) + d_{G}(v,t,e) = d_G(s,t,e)$ 
where the last inequality holds by the induction hypothesis and the fact that $\omega_{G'}(v,t) = d_{G}(v,t,e)$.
In summary, we have $d_{G'}(s,t,e) = d_{G}(s,t,e)$.
\end{proof}

Next we derandomize the subpath problem $(P', T')$, which is easy using previous results. Let $s$ and $t'$ are the endpoints of $P'$. 
In the worst case both $P'$ and $T'$ contain $O(n)$ nodes. 
Grandoni and Vassilevska Williams distinguish between two types of replacement paths 
$P(s,t,e)$ for $(t, e) \in V(T') \times E(P')$,
$e = uv$. A jumping path $P(s,t,e)$ leaves 
$P'$ at some node (between $s$ and $u$) and then meets $P'$ 
again at some other node (between $v$ and $t'$). 
A departing path $P(s,t,e)$ leaves $P'$ at some node (between $s$ and $u$) and never
meets $P'$ again.
It is easy to deal with jumping paths via a reduction to the Replacement Paths (RP) problem,
which is defined as SSRP but with a \emph{fixed} target $t$. 
We solve the RP problem for the $s$-$t'$-path $P'$ in time $\Otilde(Mn^\omega)$ 
with the deterministic algorithm of~Chechik and Nechushtan~\cite{ChNe20}.
Let $d'(s,t',e)$ be the resulting distances, then the
shortest jumping path length for the triple $(s, t, e)$ is
simply $d'(s,t',e) + d(t', t)$, taking $\Otilde(n^2)$ extra time prepare.
It remains to compute the departing paths. 
Observing that it is sufficient to compute all the
distances $d_{G'}(v, t)$ from nodes $v$ in $P'$ to nodes $t$ in 
$T'$ in the graph $G' := G - E(P')$. 
Let $s = v_1, v_2 \ldots v_h = t'$ be the sequence of nodes in $P'$. 
For $e = (v_i, v_{i+1})$ and any $t \in V(T')$, 
the shortest departing path for $(s, t, e)$ has
length $\min_{j \le i} \{d(s,v_j) + d_{G'}(v_j , t) \}$. 
For a fixed $t$, we can compute these quantities for all $e \in P'$ via a single
scan of the nodes of $P'$
from $v_1$ to $v_h$ (updating the
corresponding minimum each time). This takes $O(n^2)$ time. 
For the computation of the distances $d_{G'}(v, t)$
one can directly apply the deterministic APSP algorithm by Shoshan and Zwick~\cite{ShZw99}.
This solves the subpath problem with integer weights in $[1, M]$ in time
$\Otilde(Mn^\omega)$.

\section{Omitted Proofs of \autoref{sec:subquadratic_preprocessing}}
\label{app:subquadratic}

\sampling*

\begin{proof}
	We first prove that $|R| = \Otilde(n/L)$ with high probability.
	The expected size of $|R|$ is $\operatorname{E}[|R|] = n \nwspace \frac{c\ln n}{L}$.
	We use a Chernoff bound of the form
	$\operatorname{Pr}[ \nwspace |R| \ge (1+\delta) \operatorname{E}[|R|] \nwspace] 
		\le e^{-\delta^2 \operatorname{E}[|R|]/3}$
	for any $0 \le \delta \le 1$.
	Set $\delta = \sqrt{3c \ln n/\operatorname{E}[|R|]} = \sqrt{3L/n}$, which gives  
	$\operatorname{Pr}[\nwspace |R| \ge (1+\delta) \operatorname{E}[|R|] \nwspace] 
		\le e^{- c \ln n} = n^{-c}$.
	Finally, from
	$(1+\delta) \cdot \operatorname{E}[|R|] 
		= (1+ \sqrt{3c \ln n/\operatorname{E}[|R|]}) \cdot \operatorname{E}[|R|]
		< (3c \ln n) \cdot n \frac{c\ln n}{R} = \Otilde(n/L)$,
	we get that $|R| = \Otilde(n/L)$
	holds with probability of at least $1-n^{-c}$.
	
	For the second part, 
	observe that the probability of not sampling any of the vertices from $P \in \mathcal{P}$
	to be included in $R$ is at most
	$(1-\frac{c \nwspace\ln n}{L})^{|S_i|} \le (1-\frac{c \nwspace\ln n}{L})^{L}
		\le e^{-c \nwspace \ln n} = n^{-c}$.
	A union bound over the $\ell$ paths implies the claim.
\end{proof}

\unsuccessful*

\begin{proof}
	Fix $e \in [a,j]$ and let $P'$ be the shortest path from $s$ to $t$ in $G-e$
	that is forced to pass through some random pivot. 
	Let $\delta'=\min_{\chi \in R}\big\{d(s,\chi,e)+d(\chi,t)\big\}$ be the length of $P'$. 
	The value $\delta'$ is w.h.p.\ equal to the replacement distance $d(s,t,e)$.
	We assume this is the case.
	
	We say a vertex lies \emph{below} $e$ on $P(s,x_2)$
	if it is on the subpath that starts with $e$ and ends in $x_2$.
	We divide the proof into two cases, 
	depending on whether $P'$ remerges at a vertex below $e$ or not.
	If $P'$ does so, then it also runs through $x_2$ and $\delta' \geq d(s,x_2,e)+d(x_2,t)$.
	Therefore, $e$ is not in the far case II
	and thus is none of the distinguished edges $e^*_\ell$, $\ell \in [k]$.

	For the other case, recall that
	$\delta = \min_{\chi \in R}\big\{d(s,\chi,e_j)+d(\chi,t)\big\}$,
	$\mu = d(s,x_2,e_j)+d(x_2,t)$,
	and the fact that $e_j$ maximizes the last expression over the interval $[a,j]$.
	Now, if $P'$ does not run through any vertex of $P(s,x_2)$ that is below $e$, 
	then $P'$ also exists in $G-e_j$, whence $\delta' \geq \delta$.
	If the assumption $\delta \geq \mu$ is true,
	then $\delta' \geq d(s,x_2,e)+d(x_2,t)$ again follows
	and $e$ is not among $e^*_1,\dots,e^*_k$.	
	Finally, if the assumption $\delta > \Delta_{[a,b]}$ is true,
	then also $\delta'$ is larger than the upper bound $\Delta_{[a,b]}$,
	it is not admissible and $e \not \in \{e_1,\dots,e_k\}$.
\end{proof}

\successful*

\begin{proof}
	Since the search is successful, we have
	$\delta =\min_{\chi \in R}\big\{d(s,\chi,e_j)+d(\chi,t)\big\} < \mu = d(s,x_2,e_j)+d(x_2,t)$.
	We can be certain that edge $e_j$ is in the far case II with respect to target $t$.
	Let $\ell \in [k]$ be the (unique) index such that $R_\ell$
	is the representative replacement path for $e_j$.
	This means that $d_\ell = w(R_\ell) =  d(s,t,e_j)$
	and $\ell$ is indeed the maximum index for any distinguished edge in $[1,j]$.
	Moreover, with high probability $\delta$ equals the replacement distance $d(s,t,e_j)$,
	implying $d_{\ell} = \delta$.
	
	We argue next that $e^*_\ell$ is not in $[1, a{-}1]$.
	Let $e \in [1, a{-}1]$ be any edge in the far case II
	for which the replacement distance $d(s,t,e)$ is minimum.
	Then, we have $d(s,t,e) > \Delta_{[a,b]}$.
	Combining the successful search, that is, $\delta \le \Delta_{[a,b]}$,
	with $\delta_{\ell} = \delta$ (w.h.p.) shows that $e \neq e^*_{\ell}$.
	
	We are left to prove that indeed the equality $e_i = e^*_\ell$ holds.
	Recall that the random pivot $\chi_j $ is such that $\delta = d(s,\chi_j,e_j) + d(\chi_j,t)$
	and the replacement path $P(s,\chi_j,e_j)$ is chosen such
	that its divergence point is closest to $s$.
	Finally, the index $i \in [a,j]$ is defined to be the smallest one such that
	$P(s,\chi_j,e_j)$ also avoids $e_i$ and $\delta < d(s,x_2,e_i)+d(x_2,t)$.
	This implies that the edge $e_i$ also belongs to the far case II by the same argument as above.
	
	Since $i \le j$, we have $d(s,t,e_i) \ge d(s,t,e_j) = d_{\ell}$.
	Using the fact $\delta = w(P(s,\chi_j,e_j)) + d(\chi_j,t) \ge d(s,t,e_i)$
	yields $\delta = d(s,t,e_i) = d_{\ell}$ w.h.p.
	In other words, $e_i$ is the edge in the far~case~II closest to $s$
	with that exact replacement distance, which is the definition of $e^*_{\ell}$.
\end{proof}

\numberunsuccessfulsearches*

\begin{proof}
Let $\mathcal{T}$ be the recursion tree in which each node represents an interval $[a,b]$ 
and is labeled either successful or unsuccessful depending on the outcome of the search.
To each successful node $[a,b]$, we attach the information about the corresponding pair $(\delta_i,e_i)$
found while exploring that interval (not its upper and lower intervals).
Finally, the left and right children of each node correspond to
the lower and upper interval of the recursion, respectively. 

We first prove that any path in $\mathcal{T}$ from any node to any of its proper descendant 
that visits only nodes labeled as unsuccessful contains at most $O(\sqrt{Mn})$ nodes.
Let $\mu_1,\dots,\mu_{\ell}$ be the $\mu$-values computed in the consecutive unsuccessful searches,
in the order in which the intervals are considered by the algorithm.
Each $\mu_i$ refers to a certain path $P_i$ 
that diverges from $P(s,x_2)$ at a vertex $z_i$
and avoids a certain edge $f_i$, then remerges with $P(s,x_2)$, and finally ends in the target $t$.
This means, $\mu_i = w(P_i) = d(s,x_2,f_i) + d(x_2,t) \ge d(s,t,f_i)$.
By the definition of the algorithm, we have that $\mu_1 > \dots > \mu_\ell$.
Moreover, as an unsuccessful search in an interval causes a recursion only on its lower subinterval
(the one that is further away from the source $s$), 
we also have that $f_i$ is strictly closer to $s$ than $f_{i+1}$ and $z_{i+1}$. 
This in turn implies $d(s,z_i)< d(s,z_{i+1})$.

We define $\Offset_i=\mu_i-d(s,t)$.
Note that $\Offset_i \ge \ell - i$ as the values $\mu_i$ strictly decrease.
The detour part of $P_i$, starting at $z_i$,
has length strictly larger than $\Offset_i$.
(If $P_i$ remerges at vertex $y_i \neq z_i$, 
the precise length of the detour part is $\Offset_i + d(z_i,x_2) - d(y_i,x_2)$.)
The detour part thus has at least $\Offset_i/M$ vertices.
Let the \emph{stub} $S_i$ be formed by the first $\Offset_i/2M$ one of them.
No two stubs can intersect as this would give a shortcut to avoid edge $f_i$, for some $i$,
implying the contradiction $d(s,t,f_i) < \mu_i$.
(See also \autoref{lem:far_case_II_edge_failures}.)
In summary, this gives $n\geq \sum_{i=1}^{\ell} |V(S_i)| \geq \sum_{i=1}^{\ell} \frac{\ell-i}{2M}$ from which we derive $\ell = O(\sqrt{Mn})$.

If a target vertex $t$ has no replacement path that falls in the far case II ($k=0$),
the search tree contains only $O(\sqrt{Mn})$ unsuccessful searches and the lemma follows.
%
%
%
%
It remains to prove the case $k > 0$.
Any unsuccessful search occurs in an interval $[a,b]$ for which the lower bound $\delta_{[a,b]}$ is strictly positive and the value $\mu$ computed by the algorithm satisfies $\mu > \delta_{[a,b]}$. Using the same arguments as above, the maximum number of unsuccessful searches in intervals $[a_1,b_1],\dots,[a_\ell,b_\ell]$ with the same lower bound $\delta_{[a_1,b_1]}=\dots=\delta_{[a_\ell,b_\ell]}$ is $O(\sqrt{Mn})$.

Let $j \in [a,b]$ be the largest index such that $\mu = d(s,x_2,e_j) + d(x_2,t)$.
Any replacement path for an edge $e_h \in [j{+}1,b]$ in the lower subinterval
also diverges from $P(s,x_2)$ at a vertex in that subinterval. 
Otherwise, $P(s,x_2,e_h) \circ P(x_2,t)$,
which is strictly shorter than $P(s,x_2,e_j)\circ P(x_2,t)$,
would be a better path to avoid the edge $e_j$.
Thanks to this observation, a simple proof by induction shows that the path 
$P(s,x_2,e_j) \circ P(x_2,t)$ of length $\mu$ 
always diverges from $P(s,x_2)$ at a vertex in the interval $[a,j]$.
Recall that $\delta = \min_{\chi \in R} \{d(s,\chi,e_j) + d(\chi,t)\}$.
We (re-)define the stub $S_j$ to consist of the first $(\delta-d(s,t))/2M$ 
vertices of the detour part of $P(s,x_2,e_j)$ 
and we associate the stub with $[a,j]$. 
The main observation is that the stub $S_j$ cannot intersect with other stubs
that we define recursively on the lower interval $[j{+}1,b]$ 
as otherwise the two intersecting stubs would form a detour strictly shorter than $\delta-d(s,t)$. 
As this property is true for any interval, we have that all the stubs we defined are pairwise vertex-disjoint. 

Let $d_1< \dots< d_k$ be the lengths\footnote{%
	For notational convenience, we handle the lengths of the
	representative replacement paths here
	in the opposite order compared to \autoref{sec:subquadratic_preprocessing}.
}
of the $k = O(\sqrt{Mn})$ representative replacement paths for edges in the far case II 
with respect to target $t$.
For this part of the proof, we (re-)define $\Offset_i:=d_i-d(s,t)$. 
Let $n_i$ be the number of stubs that are associated to those intervals $[a,b]$ 
for which $\delta_{[a,b]}=d_i+1$. 
We have that the number of vertices spanned by the union of all $n_i$ such stubs is at least $n_i \cdot \Offset_i/2M \ge n_i \cdot i/2M$,
where the right-hand side
stems from sequence of $d_i$ being increasing.

The overall number of unsuccessful searches is therefore upper bounded by $\sum_{i=1}^{k}n_i$,
where the $n_i$ are subject to the following constraints.
First, we have $n_i \leq c\sqrt{Mn}$, or some constant $c > 0$
since the lower bound $\delta_{[a,b]}$ is the same for the $n_i$ stubs.
Furthermore, we get $\sum_{i=1}^k n_i \frac{i}{2M} \leq \sum_{i=1}^k n_i \frac{\Offset_i}{2M} \leq n$ 
from the stubs being pairwise disjoint.
The upper bound of the number of unsuccessful searches is maximized
if, for all indices $i \le i_{\max} =O(M^{1/4} \nwspace n^{1/4})$, we have the maximum $n_i=c \sqrt{Mn}$,
while for the remaining indices we have $n_i=0$.
This gives the estimate $\sum_{i=1}^k n_i \le c \sqrt{Mn} \cdot i_{\max} = O(M^{3/4} \nwspace n^{3/4})$.
\end{proof}

\subparagraph*{The near case.}

We now describe how to handle the replacement paths in the near case 
with respect to a fixed (regular) pivot $x\in D$. 
Using the same data structures that we presented in \Cref{sec:reduction_algorithm_to_DS,sec:subquadratic_preprocessing},
we can assume that each value $d(s,t,e)$ can be retrieved w.h.p.\ in $\Otilde(1)$ time
for each target vertex $t$ and edge $e$ on the path $P(s,D_2[t])$.

We denote by $V_{x}$ the set of target vertices $t$ such that $D_2[D_2[t]]=x$. Fix a failing edge $e=\{u,v\}$, with $u$ closer to $s$ than $v$, such that $d(x,v) \leq 4LM$. We construct the graph $G_e$ that contains $s$ plus the subset $V_e$ of vertices of $V_{x}$ that are below $e$ in $T_s$ and all the other vertices of $G-e$ that are connected with at least one vertex of $V_e$ by an edge. $G_e$ contains all the edges $\{u',v'\}$ of $G-e$ that are incident to some vertex of $V_e$. Moreover, for each vertex $t'$ of $G_e$ not in $V_e$ we add the edge $(s,t')$ of weight equal to $d(s,t',e)$. By our assumption, the value $d(s,t',e)$ is available w.h.p. since it refers to a replacement path that do not fall in the near case.

The shortest path tree of $G_e$ rooted in $s$ contains a compact representation of the replacement paths $P(s,t,e)$, for every $t \in V_e$. By compact we mean that $P(s,t,e)$ is the concatenation of the paths $P(s,t',e)$ and $P(t',t,e)$, where $t'$ is the vertex of $V\setminus V_e$ that precedes the first vertex of $V_e$ that is encountered while traversing the vertices of $P(s,t,e)$ in order from $s$ to $t$. By construction of the graph $G_e$, the replacement path $P(s,t',e)$ is modeled by the single edge $(s,t')$ of $G_e$. 

Let $m_x$ be the number of edges that are incident to the vertices of $V_x$. By construction, for each vertex $t \in V_x$ there are $O(ML)$ graphs $G_e$ such that $t \in V_e$. Therefore, the overall time to compute, for a fixed pivot $x$, the shortest path trees of all the corresponding graphs $G_e$, is $O(MLm_x)$. Since $t$ appears in the set $V_x$ for at most 4 distinct pivots -- i.e., $D_1[t], D_2[t],D_1[D_2[t]], D_2[D_2[t]]$ -- and as the sum of the values $m_x$, for all the pivots $x$, is at most twice the number of edges of the graph $G$, we have that the time complexity for computing all replacement paths that fall in the near case is $O(MLm)=O(M^{7/8} \nwspace m^{1/2} \nwspace n^{11/8})$.

\subparagraph{The Single-Source DSO.} 
We computed the relevant replacement distances as well as the vertices at which the corresponding
replacement paths diverge from the original shortest paths.
The Single-Source DSO can now be build using the same techniques we explained in \autoref{sec:reduction_algorithm_to_DS}.
Recall that we use a predecessor data structure to make the oracle path-reporting.
In particular, $\last(t,e)$ is the predecessor of $t$ in the shortest path tree of $G-e$ rooted at
the source vertex $s$, see the proof of \autoref{lem:path-reporting} in \autoref{app:omitted_reduction_proofs} for more details.

We denote by $D[t]$ the pivot of $t$ 
as defined in \autoref{sec:reduction_algorithm_to_DS}. We preprocess all the paths that have been computed in the near case defined w.r.t.\ pivot $x_2 = D_2[t]$ to check which of them fall in the far cases I and II defined w.r.t.\ pivot $D[t]$;
clearly, all others fall in the near case for $D[t]$.
This requires constant time per path if we visit, for each failing edge $e$, all the vertices in the set $V_e$ for graph $G_e$ and also allows us to keep track of $\last(t,e)$.  Once this classification has been done, we can build the oracle. 
Furthermore, we can also enable it to report the replacement paths.
In fact, for a fixed target node $t \neq s$, we need to store the edge incident to $t$ of each of the computed replacement paths that fall in the near case for pivot $D[t]$ as well as the edge incident to $t$ of each of the $O(\sqrt{Mn})$ computed replacement paths that fall in the far case II (again, for $D[t]$). For the replacement paths that fall in the far case I for $D[t]$, we already know that the edge entering $t$ is $\last(t)$, i.e., the predecessor of $t$ in $T_s$. This implies that we do not have to scan replacement paths that falls in the far~case~I for $x_2$
as such paths also fall in the far case I for $D[t]$.

\end{document}